\title{Universal Shape Replication Via Self-Assembly With Signal-Passing Tiles} 
\author{Andrew Alseth}{University of Arkansas, USA}{awalseth@uark.edu}{https://orcid.org/0000-0002-0055-0788}{This author's work was supported in part by NSF grant CAREER-1553166}
\author{Daniel Hader}{University of Arkansas, USA}{dhader@uark.edu}{}{This author's work was supported in part by NSF grant CAREER-1553166}
\author{Matthew J. Patitz}{University of Arkansas, USA}{patitz@uark.edu}{https://orcid.org/0000-0001-9287-4028}{This author's work was supported in part by NSF grant CAREER-1553166}
\authorrunning{A. Alseth, D. Hader, and M.\,J. Patitz}
\keywords{Algorithmic self-assembly, Tile Assembly Model, shape replication} 
\begin{document}

\maketitle

\begin{abstract}
In this paper, we investigate shape-assembling power of a tile-based model of self-assembly called the Signal-Passing Tile Assembly Model (STAM). In this model, the glues that bind tiles together can be turned on and off by the binding actions of other glues via ``signals''. Specifically, the problem we investigate is ``shape replication'' wherein, given a set of input assemblies of arbitrary shape, a system must construct an arbitrary number of assemblies with the same shapes and, with the exception of size-bounded junk assemblies that result from the process, no others. We provide the first fully universal shape replication result, namely a single tile set capable of performing shape replication on arbitrary sets of any 3-dimensional shapes without requiring any scaling or pre-encoded information in the input assemblies. Our result requires the input assemblies to be composed of signal-passing tiles whose glues can be deactivated to allow deconstruction of those assemblies, which we also prove is necessary by showing that there are shapes whose geometry cannot be replicated without deconstruction. Additionally, we modularize our construction to create systems capable of creating binary encodings of arbitrary shapes, and building arbitrary shapes from their encodings. Because the STAM is capable of universal computation, this then allows for arbitrary programs to be run within an STAM system, using the shape encodings as input, so that any computable transformation can be performed on the shapes.
\end{abstract}

\clearpage
\pagenumbering{arabic}

\section{Introduction}\label{sec:intro}

Artificial self-assembling systems are most often designed with the goal of building structures ``from scratch''. That is, they are designed so that they will start from a disorganized set of relatively simple components (often abstractly called \emph{tiles}) that autonomously combine to form more complex target structures. This process often begins from collections of only unbound, singleton tiles, or sometimes also includes so-called \emph{seed assemblies} which may be small (in relation to the target structure) ``pre-built'' assemblies that encode some information which \emph{seeds} the growth of larger assemblies. This growth occurs as additional tiles bind to those seed assemblies according to the rules of the system, allowing them to eventually grow into the desired structures. Examples have been shown in both experimental settings (e.g. \cite{evans2014crystals,drmaurdsa,ke2012three}), as well as in the mathematical domains of abstract models (e.g. \cite{SolWin07,RotWin00,AGKS05g,IUSA,2HAMIU}). However, in the subdomain of algorithmic self-assembly, in which systems are designed so that the tile additions implicitly follow the steps of pre-designed algorithms, other goals have also been pursued. These have included, for instance, performing computations (e.g. \cite{jCCSA,jSADS}), identifying input assemblies that match target shapes \cite{ShapeIdentAlgo}, replicating patterns on input assemblies \cite{SignalsReplication,SchulYurWinfEvolution}, and replicating  (the shapes of) input assemblies \cite{chalkUniversalShapeReplicators2017,jNegativeGluesShapes,RNaseSODA2010,SelfReplicationDNA,STAMshapes}. In this paper, we explore the latter, particularly the theoretical limits of systems within a mathematical model of self-assembling tiles to replicate shapes.

We use the term \emph{shape replication} to refer to the goal of designing self-assembling systems that take as input seed assemblies and which produce new assemblies that have the same shapes as those seed assemblies \cite{RNaseSODA2010}. In order for tile-based self-assembling systems to perform shape replication, dynamics beyond those of the original abstract Tile Assembly Model (aTAM), introduced by Winfree \cite{Winf98} and widely studied (e.g. \cite{SolWin07,RotWin00,IUSA,jCCSA,Versus,Temp1PathsSTOC,DirectedNotIU,jSSADST}), are required. In the aTAM, tiles attach to the seed assembly and the assemblies which grow from it, one tile at a tile, and tile attachments are irreversible. A generalization of the aTAM, the hierarchical assembly model known as the 2-Handed Assembly Model \cite{Versus,AGKS05g}, allows for the combination of pairs of arbitrarily large assemblies, but it too only allows irreversible attachments. However, for shape replication, it is fundamentally important that at least some tiles are able to bind to the input assemblies to gather information about their shapes which is then used to direct the formation of the output assemblies, since binding to an assembly is the only mechanism for interacting with it. These output assemblies eventually must not be connected to the input assemblies if they are to have the same shapes as the original input assemblies. This requires that at some point tile bindings can be broken. A number of theoretical models have been proposed with mechanisms for breaking tiles apart, for example: glues with repulsive forces \cite{SingleNegative,NegativeGluesShapes}, subsets of tiles which can be dissolved at given stages of assembly \cite{RNaseSODA2010,RNAPods}, tiles which can turn glues on and off \cite{jSignals,JonoskaSignals1} (a.k.a. \emph{signal-passing tiles}), and systems where the temperature can be increased to cause bonds to break \cite{AGKS05g,SummersTemp}. Within these models, previous results have shown the power of algorithmic self-assembling systems to perform shape replication. In \cite{chalkUniversalShapeReplicators2017}, they used glues with repulsive forces, and in \cite{RNaseSODA2010} they used the ability to dissolve away certain types of tiles at given stages during the self-assembly process, and each showed how to replicate a large class on two-dimensional shapes. In \cite{STAMshapes}, signal-passing tiles were shown to be capable of replicating arbitrary hole-free two-dimensional shapes if they are scaled up by a factor of 2.  The results of \cite{SelfReplicationDNA} deal with the replication of three-dimensional shapes, and will be further discussed below. 

The results of this paper are the first which provide for shape replication of all 3-dimensional shapes with no requirement for scaling those shapes. Additionally, although in \cite{SelfReplicationDNA} all three-dimensional shapes can be replicated at the small scale factor of 2, there it is necessary for the input assemblies to have relatively complex information embedded within them (in the form of Hamiltonian paths through all of their points being encoded by their glues). In our results, the input assemblies require no such embedded information. Furthermore, the model used in \cite{SelfReplicationDNA} is more complex, allowing not only for hierarchical assembly and signal-passing tiles, but also for tiles of differing shapes, and glue bindings that are flexible and thus allow for assemblies to reconfigure by folding. For the results of this paper, we have not only limited the dynamics to those of the Signal-Passing Tile Assembly Model (STAM), but have even placed an additional restriction on the model. Rather than assigning fixed orientations to tiles, in the model we use and call the STAM$^R$ (i.e. the ``STAM with rotation'') tiles and assemblies are allowed to rotate. This allows us to consider an even more general, and difficult, version of the shape replication problem. Namely, the input assemblies in our constructions have glues of a single generic type covering their entire exteriors, and there is no distinction between a north-facing glue and an east-facing glue, for instance, as there is in the standard STAM. This makes several aspects of working with such generic input assemblies more difficult, but it is notable that our constructions need only trivial, simplifying modifications to work in the standard STAM and that our positive results thus also hold for the STAM. We show that there is a ``universal shape replicator'' which is a tileset in the STAM$^R$ that can be used in conjunction with any set of generic input assemblies and will cause assemblies of every shape in the input set to be simultaneously produced in parallel. This is the first truly universal shape replicator for two or three dimensional shapes\footnote{Note that while replicating two-dimensional shapes, which consist of points in a single plane, our construction will utilize three dimensions.}. Furthermore, we break our construction into two major components, a ``universal encoder'' and a ``universal decoder'' (see Figure \ref{fig:intro-example} for a depiction). The universal encoder is capable of taking generic input assemblies and creating assemblies that expose binary sequences that encode those shapes, and the universal decoder is capable of taking assemblies exposing those encodings and creating assemblies of the encoded shapes. Due to the Turing universality of this model, this also allows for the full range of all possible computational transformations to occur between the encoding and decoding, and thus enables the generation of any transformations of the shapes of the input assemblies, such as creating scaled versions or complementary shapes.

\begin{figure}
    \centering
    \includegraphics[width=4.5in]{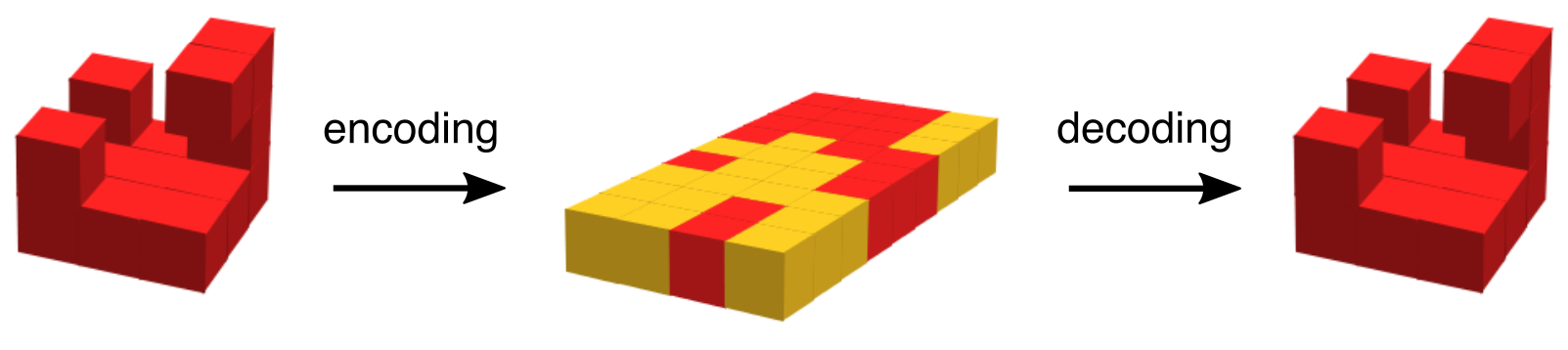}
    \caption{Schematic depiction of shape replication: (Left) An input assembly, (Middle) The assembly resulting from the encoding process which deconstructs the input assembly and encodes its shape, (Right) The assembly created by the decoding process, which uses the encoding as its input.}
    \label{fig:intro-example}
    \vspace{-10pt}
\end{figure}

In order for our universal shape replication construction to operate, the input assemblies must be created from signal-passing tiles which are capable of turning off their glues and dissociating from the assemblies. This allows for the assemblies to be ``deconstructed'', and we prove that this is necessary in order to replicate arbitrary shapes, specifically those which have enclosed or narrow, curved cavities, and this is intuitively clear since otherwise there would be no way to determine which locations in the interior of an input shape are included in the shape, and which are part of an enclosed void. Our proof that it is also impossible to replicate shapes with curved, but not enclosed, cavities further exhibits the additional difficulty of working within the STAM$^R$ model which allows tile rotations.

While our universal shape encoder, decoder, and replicator achieve the full goal of the line of research into shape replication, and also provide the ability to augment shape-building with arbitrary computational transformations, we note that the results are highly theoretical and serve more generally as an exploration of the theoretical limits of self-assembling systems. The tilesets are relatively large and require tiles with large numbers of signals, and although the input assemblies are not required to have complex information embedded within them, a trade-off that occurs compared with the results of \cite{SelfReplicationDNA} is that our constructions make use of a large amount of ``fuel''. That is, a large number of tiles are used during various phases but they are only temporary and aren't contained within the target assemblies and thus are ``consumed'' by the construction process. Despite the complexity of these theoretical constructions, we think that several modules and techniques developed may be of future use within other constructions (e.g. our ``leader election'' procedure which is guaranteed to uniquely select a single corner of an input assembly's bounding prism, to serve as a staring location for our encoding procedure within a constant number of assembly steps despite the lack of directional information provided by such an assembly), and also that these results may lead the way to similarly powerful but less complex constructions that may eventually achieve a level of being physically plausible to construct.

This paper is organized as follows. In Section \ref{sec:defns} we provide definitions of the STAM$^R$ and other terminology used throughout the paper, plus a series of subconstructions that appear throughout the main constructions. In Section \ref{sec:3d-rep} we state our main theorem and supporting lemmas, and present the constructions that prove them. In Section \ref{sec:STAM} we show that the constructions can be easily adapted to also work in the standard STAM. In Section \ref{sec:extensions} we briefly describe some of the computational transformations that could be used to augment our constructions, and in Section \ref{sec:imposs} we prove deconstruction is necessary for shape replication of certain classes of shapes.

\section{Definitions}\label{sec:defns}

In this section we provide definitions of the model used, and also for several of the terms and subconstructions used throughout the paper.

\subsection{Definition of the STAM$^R$ model}

    
    
    
    

Here we provide a definition of the model used in this paper, called the STAM$^R$ (i.e. the ``STAM with rotation''), which is based upon the 3D Signal-passing Tile Assembly Model (STAM) \cite{jSignals3D}. The STAM is itself based upon the 2-Handed Assembly Model (2HAM) \cite{AGKS05g,DDFIRSS07}, also referred to as the ``Hierarchical Assembly Model'', which is a mathematical model of tile-based self-assembling systems in which arbitrarily large pairs of assemblies can combine to form new assemblies.

A \emph{glue} is an ordered pair $(l,s)$, where $l \in \Sigma^+ \cup \{s^* : s\in\Sigma^+\}$ is a non-empty string, called the \emph{label}, over some alphabet $\Sigma$, possibly concatenated with the symbol `$^*$', and $s \in \mathbb{Z}^+$ is a positive integer, called the \emph{strength}. A glue label $l$ is said to be \emph{complementary} to the glue label $l^*$.

A \emph{tile type} is a mapping of zero or more glues, along with \emph{glue states} and possibly \emph{signals}, which will be defined shortly, to the $6$ faces of a unit cube. A \emph{tile} is an instance of a tile type, and is the base component of the STAM$^R$. Each tile type is defined in a canonical orientation, but tiles can be in that orientation or any rotation which is orthogonal to it (i.e. they are embedded in $\mathbb{Z}^3$).

Every glue can be in one of three \emph{glue states}: $\{\texttt{on}, \texttt{latent}, \texttt{off}\}$. If two tiles are placed next to each other, and their adjacent faces have glues $g_1 = (l,s)$ and $g_2 = (l^*,s)$, then those glues can form a \emph{bond} whose \emph{strength} is $s$. We require any copies of glues with the label $l$, or its complement $l^*$, in any given system have the same strength (e.g. it is not allowed to have one glue labeled $l$ with strength $1$ and another labeled $l$ or $l^*$ with strength $2$).

A \emph{signal} is a mapping from a glue $g_s$ (the \emph{source glue}) to an ordered pair, $(g_t, s)$, where $g_t$ (the \emph{target glue}) is a glue on the same tile as $g_s$ (possibly $g_s$ itself) and $s \in \{\texttt{on}, \texttt{off}\}$. If and when $g_s$ forms a bond with its complementary glue on an adjacent tile, the signal is \emph{fired} to change the state of $g_t$ to state $s$. Each glue of a tile type can be defined to have zero or more signals assigned to it. Each signal on a tile can fire at most a single time. When a glue is fired, the state of the target glue is not immediately changed, but the pair $(g_t,s)$ is added to a queue of \emph{pending signals} for the tile containing its glues. When a pending glue is selected for completion (in a process described below), then the state of $g_t$ is changed to $s$ if and only if its current state is $s_0$ and $(s_0,s) \in \{(\texttt{on}, \texttt{off}), (\texttt{latent},\texttt{on}), (\texttt{latent},\texttt{off})\}$. That is, the only valid glue state transitions are $\texttt{on}$ to $\texttt{off}$, or $\texttt{latent}$ to $\texttt{on}$ or $\texttt{off}$.

A \emph{supertile} is (the set of all translations and rotations of) a positioning of one or more connected tiles on the integer lattice $\Z^3$.  Two adjacent tiles in a supertile can form a bond if the glues on their abutting sides are complementary and both are in the $\texttt{on}$ state. Each supertile induces a \emph{binding graph}, a grid graph whose vertices are tiles, with an edge between every pair of bound tiles whose weight is the strength of the bound glues. A supertile is \emph{$\tau$-stable} if every cut of its binding graph cuts edges whose weights sum to at least $\tau$. That is, the supertile is $\tau$-stable if at least energy $\tau$ is required to separate the supertile into two parts. \emph{Assembly} is another term for a supertile, and we use the terms interchangeably, to mean the same thing.

Each tile has a \emph{tile state} that contains the current state of every glue as well as a (possibly empty) set of pending signals and a (possibly empty) set of completed signals. Every supertile consists of not only its set of constituent tiles, but also their tile states, and a set bonds that have formed between pairs of glues on adjacent tiles.

A system in the STAM$^R$ is an ordered triple $(T,S,\tau)$ where $T$ is a finite set of tiles called the \emph{tileset}, $S$ is a \emph{system state} which consists of a multiset of supertiles that each have a count (possibly infinite), and $\tau \in \mathbb{Z}^+$ is the \emph{binding threshold} (a.k.a. \emph{temperature}) parameter of the system which specifies the minimum strength of bonds needs to hold a supertile together. In the initial state of a system, no tiles have pending signals, all pairs of adjacent glues which are both complementary and in the $\texttt{on}$ state in all supertiles have formed bonds and any signals which would have been fired by those bonds are completed, and all distinct supertiles are assumed to start arbitrarily far from each other (i.e. none is enclosed within another). By default (and unless otherwise specified), the initial state contains an infinite count of all singleton tiles in $T$.

A system evolves as a (possibly infinite) series of discrete steps, called an \emph{assembly sequence}, beginning from its initial state. Each step occurs by the random selection and execution of one of the following actions:
\begin{enumerate}
    \item Two supertiles currently in the system, $\alpha$ and $\beta$, are translated and/or rotated without ever overlapping so that they can form bonds whose strengths sum to at least $\tau$. The count of the newly formed supertile is increased by $1$ in the system state and the counts of each of $\alpha$ and $\beta$ are decreased by $1$ (if they aren't $\infty$). In the newly created supertile, from the entire set of pairs of glues which can form bonds, a random subset whose strengths sum to $\ge \tau$ is selected and bonds formed by those glues are added to the set of bonds that have formed for that supertile. Additionally, for each glue which forms a bond, all signals for which it is a source glue, but which aren't already pending or completed, are added to the set of pending signals for its tile.\label{item:combine}
    
    \item For any supertile currently in the system, from the set of pairs of glues which can form bonds but haven't, a glue pair is selected and a bond formed by those glues is added to the set of bonds that have formed for that supertile. Additionally, for each glue which forms that bond, all signals for which it is a source glue, but which aren't already pending or completed, are added to the set of pending signals for its tile.
    
    \item For any supertile currently in the system, a pending signal is selected from the set of pending signals of one of its tiles. If the action specified by that signal is valid, the state of the target glue is changed to the state specified by the signal. The signal is removed from the set of pending signals and added to the set of completed signals. If the action is not valid (i.e. the pair specifying the current state of the target glue and the desired end state is not in $\{(\texttt{on}, \texttt{off}), (\texttt{latent},\texttt{on}), (\texttt{latent},\texttt{off})\}$), then the signal is just removed from the pending set and added to the completed set, and there is no change to the target glue.
    
    \item For a supertile $\gamma$ currently in the system for which there exists one or more cuts of $< \tau$ (which could be the case due to one or more glues changing to the $\texttt{off}$ state), one of those cuts is randomly selected and $\gamma$ is split into two supertiles, $\alpha$ and $\beta$, along that cut. The count of $\gamma$ in the system state is decreased by one (if it isn't $\infty$) and the counts of $\alpha$ and $\beta$ are increased by one (if they aren't $\infty$).
    
\end{enumerate}



Given a system $\calT=(T,S,\tau)$, a supertile is \emph{producible}, written as $\alpha \in \prodasm{T}$, if it either is contained in the initial state $S$ or it can be formed, starting from $S$, by any series of the above steps. A supertile is \emph{terminal}, written as $\alpha \in \termasm{T}$, if it is producible and none of the above actions are possible to perform with it (and any other producible assembly, for list item \ref{item:combine}).


Note that tiles are not allowed to diffuse through each other, and therefore a pair of combining supertiles must be able to translate and/or rotate without ever overlapping into positions for binding. It is allowed, though, for two supertiles, $\alpha$ and $\beta$, to translate and/or rotate into locations which are partially enclosed by another supertile $\gamma$ before binding, potentially creating a new supertile, $\delta$, which would not have been able to translate and/or rotate into that location inside $\gamma$, without overlapping $\gamma$, after forming. However, although the model allows for supertiles to assemble ``inside'' of others, in order to strengthen our results we do not utilize it for the constructions of our positive results, but its possibility does not impact our negative result.

    


\begin{definition}\label{def:finitely-completes}
Given an STAM$^R$ system $\mathcal{T} = (T,S,\tau)$, we say that it \emph{finitely completes} with respect to a set of terminal assemblies $\hat{\alpha}$ if and only if there exists some constant $c \in \mathbb{N}$ such that, if in the initial configuration $S$, each element of $S$ was assigned count $c$, in every possible valid assembly sequence of $\mathcal{T}$, every element of $\hat{\alpha}$ is produced.
\end{definition}

A system which finitely completes with respect to assemblies $\hat{\alpha}$ is guaranteed to always produce those assemblies as long as it begins with enough copies of the (super)tiles in its initial configuration, i.e. it cannot follow any assembly sequence which would consume one or more (super)tiles needed to form those assemblies before making them.

\begin{definition}\label{def:shape}
    A \emph{shape} is a non-empty connected subset of $\mathbb{Z}^3$, i.e. a connected set of unit cubes each of which is centered at a coordinate $\vec{v} \in \mathbb{Z}^3$. A \emph{finite shape} is a finite connected subset of $\mathbb{Z}^3$.
\end{definition}

In this paper, we consider shapes to be equivalent up to rotation and translation and unless stated otherwise explicitly, we will use the word \emph{shape} to refer only to \emph{finite shapes}.

\begin{definition}\label{def:bounding-box}
    Given a shape $s$, a \emph{bounding box} is a rectangular prism in $\mathbb{Z}^3$ which completely contains $s$. The \emph{minimum bounding box} is the smallest such rectangular prism.
\end{definition}

\begin{definition}\label{def:enclosed-cavity}
    Given a shape $s$, we use the term \emph{enclosed cavity} in $s$ to refer to a set of connected points in $\mathbb{Z}^3$ that are not contained in $s$ and for which no path in $\mathbb{Z}^3$ exists that does not intersect at least one point in $s$ and gets infinitely far from all points in $s$.
\end{definition}
    
\begin{definition}\label{def:bent-cavity}
    Given a shape $s$, we use the term \emph{bent cavity} in $s$ to refer to a set of connected points in $\mathbb{Z}^3$ contained inside of the minimum bounding box of $s$, $b_s$, but not contained within $s$ itself, such that it includes some points which can be reached by straight lines in $\mathbb{Z}^3$ beginning from points in $b_s$, and some points which cannot be reached by straight lines in $\mathbb{Z}^3$ beginning from points in $b_s$.
\end{definition}

See Figure \ref{fig:bent-cavity} for an example of a bent cavity.
\begin{figure}
    \centering
    \includegraphics[width=1.0in]{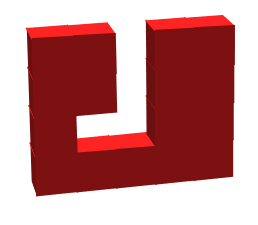}
    \caption{Example of a bent cavity, assuming that the planes on the sides into and out of the page were also filled in, leaving a single-cube-wide path into the interior of the shape.}
    \label{fig:bent-cavity}
\end{figure}

\begin{definition}\label{def:shape-encoding}
We define a \emph{shape encoding function} $f_e$ as a function which, given as input an arbitrary shape $s$, returns a unique finite set $E$ of binary strings, each unique for the shape $s$, such that there exists a \emph{shape decoding function}, $f_d$ and $f_d(e) = s$ for all $e\in E$.
\end{definition}


The shape encoding function we will define by construction in the proof of Lemma \ref{lem:encoder} will generate a set of binary strings for each input shape $s$ such that each string encodes the points of the shape starting from a different reference corner and rotation of a bounding box. That can lead to up to 24 unique binary strings (for 3 rotations of each of 8 corners) for most shapes, but less for those with symmetry.




\begin{definition}\label{def:shape-neighborhood}
Given a shape $S$ and a point $p=(x, y, z)\in S$, we define the \emph{neighborhood} of $p$ in $S$ to be the set $S\cap\{(x+1,y,z),(x-1, y,z),(x,y+1,z),(x,y-1,z),(x,y,z+1),(x,y,z-1)\}$. We also say that neighborhoods are equivalent up to rotation, so there is 1 neighborhood containing 1 point, 2 with 2 points, 2 with 3 points, 2 with 4 points, 1 with 5 points, and 1 with 6 points.
\end{definition}

\begin{definition}\label{def:uniformly-covered-assembly}
We define a \emph{uniformly covered assembly} as an assembly $\alpha$ where every exposed side of every tile has the same strength 1 glue which is $\texttt{on}$. Additionally, if $s$ is the shape of $\alpha$, we require that for every 2 points $p, q\in s$ with the same neighborhood, a tile of the same type is located in both locations $p$ and $q$ in $\alpha$.
\end{definition}

A uniformly covered assembly has the same glue all over its surface, with no glues marking special or unique locations, and has the same tile type in each location with the same neighborhood, so such an assembly can convey no information specific to particular locations, orientation, etc. 

\begin{definition}\label{def:deconstructable-assembly}
We define a \emph{deconstructable assembly} as an assembly where (1) all neighboring tiles are bound to each other by one or more glues whose strengths sum to $\ge \tau$, and (2) each tile contains the glue(s) and signal(s) necessary to allow for all glues binding it to its neighbors to be turned $\texttt{off}$.
\end{definition}

In the following definitions, we will use the term \emph{junk assembly} to refer to an assembly that is not a ``desired product'' of a system, but which is a small assembly composed of tiles which were used to facilitate the construction but are now terminal and cannot interact any further.

\begin{definition}[Universal shape encoder]\label{def:binary-encoding}
Let $\mathcal{S}$ be the set of all finite shapes, let $f_e$ be a shape encoding function, let $c \in \mathbb{N}$ be a constant, and let $E$ be a tileset in the STAM$^R$. If, for every finite subset of shapes $S' \subset S$, there exists an STAM$^R$ system $\mathcal{E}_{S'} = (E,\sigma_{S'},\tau)$, where $\sigma_{S'}$ consists of infinite copies of assemblies of each shape $s \in S'$ and also infinite copies of the singleton tiles from $E$, such that (1) for every shape $s \in S'$ there exists at least one binary string $b_s \in f_e(s)$ and there exist infinite terminal assemblies of $\mathcal{E}_{S'}$ that contain glues in the $\texttt{on}$ state on the exterior surfaces of those assemblies that encode $b_s$ 
(which we refer to as an \emph{assembly encoding $s$}), (2) every terminal assembly is either an assembly encoding some $s \in S'$ or a ``junk assembly'' whose size is bounded by $c$, and (3) no non-terminal assembly grows without bound, then we say that $E$ is a \emph{universal shape encoder} with respect to $f_e$.

\end{definition}

\begin{definition}[Universal shape decoder]
Let $\mathcal{S}$ be the set of all finite shapes, let $f_e$ be a shape encoding function, let $c \in \mathbb{N}$ be a constant, and let $D$ be a tileset in the STAM$^R$. If, for every finite subset of shapes $S' \subset S$, there exists an STAM$^R$ system $\mathcal{D}_{S'} = (D,\sigma_{S'},\tau)$, where $\sigma_{S'}$ consists of infinite copies of assemblies each of which encode a shape $s \in S'$ with respect to $f_e$, and also infinite copies of the singleton tiles from $D$, such that (1) for every shape $s \in S'$ there exist infinite terminal assemblies of shape $s$, (2) every terminal assembly is either an assembly of the shape of some $s \in S'$ or a ``junk assembly'' whose size is bounded by $c$, and (3) no non-terminal assembly grows without bound, then we say that $D$ is a \emph{universal shape decoder} with respect to $f_e$.
\end{definition}

\begin{definition}[Universal shape replicator]
Let $\mathcal{S}$ be the set of all finite shapes and let $R$ be a tileset in the STAM$^R$, and let $c \in \mathbb{N}$ be a constant. If, for every finite subset of shapes $S' \subset S$, there exists an STAM$^R$ system $\mathcal{R}_{S'} = (R,\sigma_{S'},\tau)$, where $\sigma_{S'}$ consists of infinite copies of assemblies of each shape $s \in S'$ and also infinite copies of the singleton tiles from $R$, such that (1) for every shape $s \in S'$ there exist infinite terminal assemblies of shape $s$, (2) every terminal assembly is either an assembly of the shape of some $s \in S'$ or a ``junk assembly'' whose size is bounded by $c$, (3) the number of assemblies of each shape $s \in S'$ grows infinitely, and (4) no non-terminal assembly grows without bound, then we say that $R$ is a \emph{universal shape replicator}.
\end{definition}

\subsection{STAM$^R$ Gadgets and Tools}\label{sec:gadgets}

Throughout our results we repeatedly make use of several small assemblies of tiles, referred to as \emph{gadgets}, and patterns of signal activations to accomplish tasks such as keeping track of state, removing specific tiles, and passing information across an assembly. In this section we describe several of these gadgets and signal patterns so that they can later be referenced during our construction. We intend that this section also serve as a basic introduction by example to the dynamics of signal tile assembly.

\paragraph*{Detector Gadgets}
Detector gadgets are used to detect when a specific set of tiles exist in a particular configuration relative to one another in an assembly. For a detector gadget to work, the tiles to be detected need to each be presenting a glue unique to the configuration to be detected. The strength of these glues should add to at least the binding threshold $\tau$, but the total strength of any proper subset of the glues should not. If two or more tiles then exist in the configuration expected by the detector gadget, the gadget can cooperatively bind with the relevant glues. Upon binding, any signals with the newly bonded glues as a source will fire. These signals can be in the ``detected tiles'' or in the detector itself and can be used to initiate some other process based on the condition that the tiles exist in the specified configuration. More often than not, it's also desirable for signals within the detector gadget to deactivate its own glues so that it does not remain attached to the assembly after the detection has occurred.

\begin{wrapfigure}{r}{0.5\textwidth}
    \centering
    \includegraphics[width=0.48\textwidth]{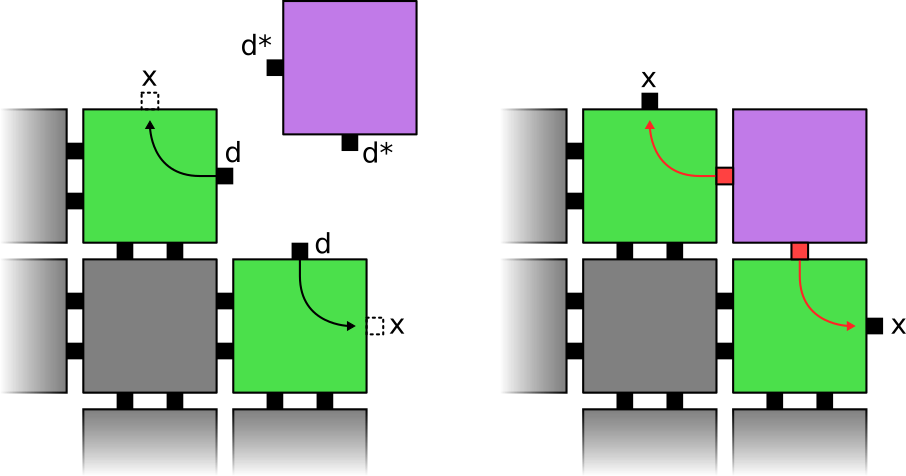}
    \caption{A simple detector gadget example.}
    \label{fig:detector-single-tile}
\end{wrapfigure}

Detector gadgets can exist in many forms depending on the configuration to detect, but the most simple is a single tile. Illustrated in Figure \ref{fig:detector-single-tile} is a simple detector gadget designed to detect 2 diagonally adjacent tiles, each presenting a strength-1 glue of type $d$ towards a shared adjacent empty tile location. In this case, $\tau=2$ and the detected tiles are designed to activate their $x$ glues upon a successful detection. In general, detector gadgets can be made up of more than 1 tile. Duples of tiles can be used for instance to detect immediately adjacent tiles each presenting some specific glue on the same side. For detector gadgets consisting of more than 1 tile, the component tiles must be designed to have unique $\tau$-strength glues between them so that the components can bind together piece-wise to form the whole gadget. Because all of the glues presented for the detection are needed to reach a cumulative strength of $\tau$, only after it is fully formed will it be able to detect tiles and thus partially assembled detector gadgets will not erroneously perform partial detections. It is assumed in our results that signals within a detector gadget itself will cause the gadget to dissolve after a detection.

\paragraph*{Corner Gadgets}
\begin{figure}[H]
    \centering
    \includegraphics[width=0.9\textwidth]{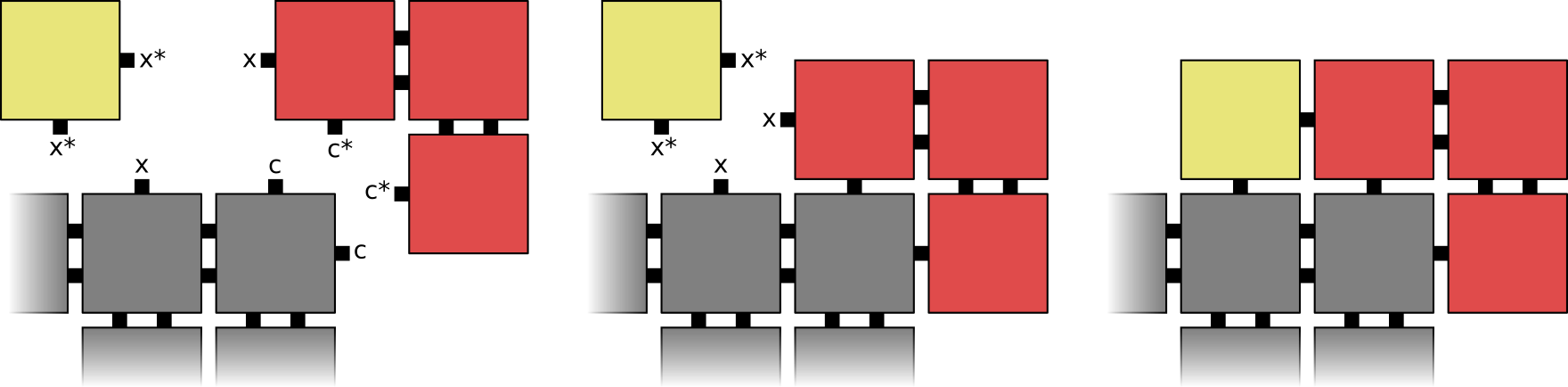}
    \caption{A corner gadget example.}
    \label{fig:corner-gadget-example}
\end{figure}
Corner gadgets are a specific type of detector gadget which are used primarily for facilitating the attachment of other tiles on the surface of some assembly. Corner gadgets can either be 2D, consisting of 3 tiles arranged in a $2\times 2$ square with one corner missing, or 3D, consisting of 7 tiles arranged in a $2\times 2\times 2$ cube with one of the corners missing. Because of this shape, a corner gadget is able to cooperatively bind to any single tile of an assembly with 2 accessible, adjacent faces. These faces must be presenting specified glues whose cumulative strength is at least $\tau$, but neither individually is. Illustrated in Figure \ref{fig:corner-gadget-example} is the side view of a 2D corner gadget attaching to an assembly. After the attachment, it is then possible for additional tiles to cooperatively bind along the surface of the assembly. This behavior is useful for initiating the growth of shells of tiles around an assembly as will be seen in our later construction.

Like with detector gadgets, signals fired from the binding of a corner gadget can also be used to initiate other tasks, though special care needs to be taken for 3D corner gadgets when $\tau=2$. Because a 3D corner gadget has 3 interior faces which can have glues to bind with a tile on the corner of an assembly, it is often desirable to fire signals from all 3 of these glues; however, because only 2 glues are necessary to meet the binding threshold when $\tau=2$, the third may not form a bond immediately. If it is planned for the corner gadget to eventually detach, then it is crucial that any signals causing the corner gadget to detach cannot fire until all 3 of the interior glues have first bound. This can often be accomplished using \emph{sequential signaling} as described below.

\begin{figure}[H]
    \centering
    \includegraphics[width=0.9\textwidth]{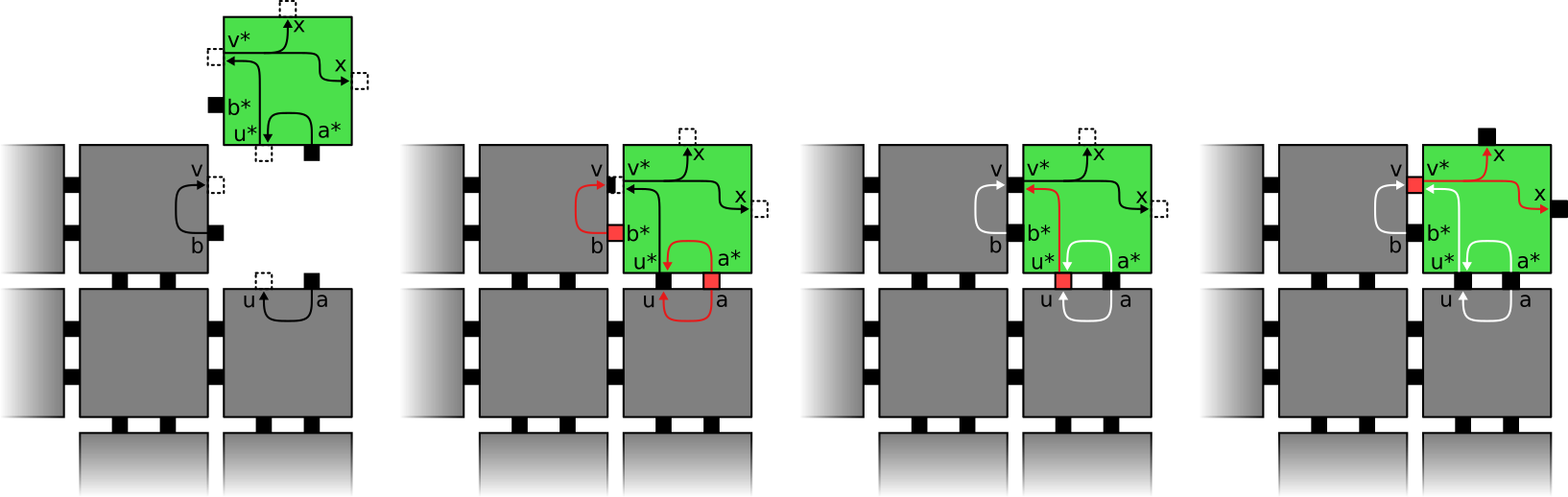}
    \caption{Sequential signaling example.}
    \label{fig:sequential-signaling}
\end{figure}
\paragraph*{Sequential Signaling}
By carefully adding additional helper glues and signals to a tile or tiles, we can ensure that signals in our tiles are fired in a specific order or ensure that a certain set of glues has successfully bound before certain signals are fired. The way in which this is done depends on the exact situation, but as an example consider the situation illustrated in Figure \ref{fig:sequential-signaling}. In this situation we want the green tile to cooperatively bind to the assembly via glues of type $a$ and $b$. Once this happens, we want to first activate additional glues of type $u$ and $v$ between the green tile and assembly so that each side of the green tile is attached to the assembly with strength 2, then we want glues of type $x$ on the other sides of the green tile to activate. The arrangement of signals illustrated in Figure \ref{fig:sequential-signaling} guarantees that the $x$ glues cannot activate before both the $u$ and $v$ glues do, since the signals which activate the $x$ glues are dependent on the glues $u$ and $v$. A similar arrangement of signals and glues is used to implement gadgets called \emph{filler tiles} in our construction.

\paragraph*{Tile Conversion}
It is often useful for tiles to change behavior after receiving a specific signal. This can be done by having signals activate a new set of glues on the tile and deactivate old ones. This can be thought of as converting the tile into a different type of tile, but it's important to note that this process cannot happen indefinitely nor arbitrarily. Every tile conversion has to be prepared in the signals and latent glues of the tile and once those signals fire, they cannot fire again. It is possible for a tile to convert to another several times, but such a tile must have the necessary glues and signals for each conversion separately. It is also often possible achieve this behavior by detachment of one tile and attachment of another in the same location, though special care needs to be taken so that no other tiles can attach in the location during the conversion.

\paragraph*{Tile Dissolving}
For any arbitrary set of glues on a tile, we use the term \emph{dissolving} to refer to the process of initiating signals which turn all possible glues to the \texttt{off} state (Figure~\ref{fig:dissolve}).
We note that due to the asynchronous nature of the model that no guarantee can be made with regards to the order of the processing of the signals.
Tiles break apart from their supertile once a strength $\tau$ bond no longer exists between itself and its neighbor tiles.
However other glues may be active when the tile does so, leading to the possibility of undesired binding due to exposed glues which are in the \texttt{on} state with a pending \texttt{off} signal.

\begin{figure}[htp]
    \centering
    \includegraphics[width=3.5cm]{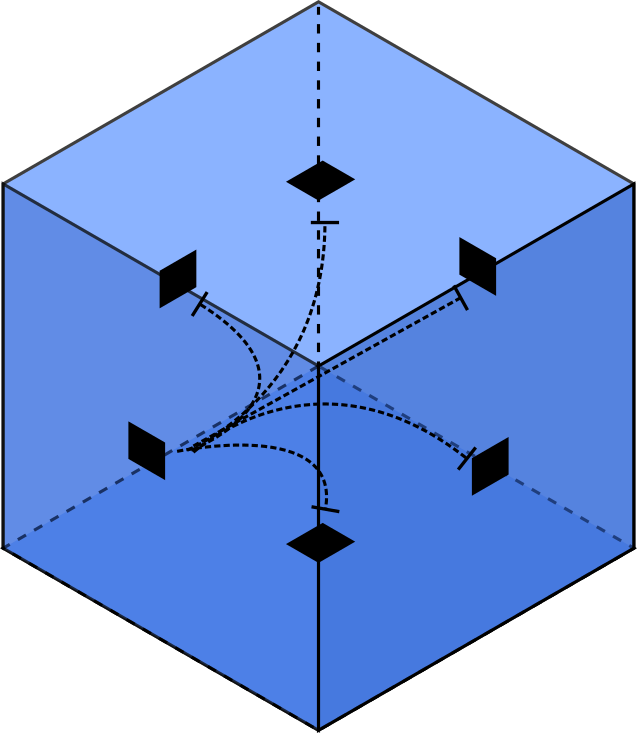}
    \caption{For some glue which initiates the dissolution of a tile, when bound to its complement it can send messages to all glues on all faces to turn to the \texttt{off} state. We use the flat head to indicate that the glue adjacent to the flat head is sent an \texttt{off} signal by the binding of the glue at the opposite end of the line. Such a glue can potentially be present on each face of a tile.}
    \label{fig:dissolve}
\end{figure}

\paragraph*{Message Following}
We show how to pass a message through a sequence of tiles such that after the message has been passed, a second message can be passed through the exact same sequence of tiles in the same order. 
For example, signals propagate a $g$ message through a sequence of tiles $\{\T_i\}_{i=0}^n$ (not necessarily distinct). 
We then propagate a $br$ message through a series of glue activations such that this message follows the sequence of tiles $\{\T_i\}_{i=0}^n$ in that order. In this case, we say that the $br$ message \emph{follows} the $g$ message.

Figure~\ref{fig:signal-following-none} shows a $g$ message being passed through a tile. Let $T_G$ denote this tile. This message enters from the south and then may potentially be output through the north, east, or south depending on if collisions occur.
 The goal is to ensure that a second message can be output through exactly that same side (and no others). Other cases where the $g$ message enters through the north, east, or west are equivalent up to rotation. 
 For each possible output signal of the $g$ glue in $T_G$, we define glues on the signal input side of the $T_G$ which are activated by the output $g$ glue being bound. 
 As shown in Figure~\ref{fig:signal-following-none}, the north $g$ glue activates $brn^\prime$, the east $g$ glue activates $bre^\prime$, and the south $g$ glue activates $brs^\prime$. Informally, the activated $brn^\prime$, $bre^\prime$, or $brs^\prime$ glue ``records'' the output side of the $g$ message. In the case shown in Figure~\ref{fig:signal-following-none} where the $g$ message enters from the south, the $brn^\prime$, $bre^\prime$, and $brs^\prime$ glues are sufficient for recording the output side of the $g$ message. In cases where the $g$ message enters through the north, east, or west, a $brw^\prime$ glues is required to record the case where the $g$ message exits through the west side of a tile. The $br$ signal is then propagated using $brn^\prime$, $brs^\prime$, $bre^\prime$, and $brw^\prime$ glues. Figure~\ref{fig:signal-following-forward} depicts the signals and glues for propagating the $br$ signal in the case where the $g$ message enters from the south. In this case the $br$ signal will also enter from the south. 
 The $br$ signal is propagated through $T_G$ as exactly one of the $brn^\prime$, $brs^\prime$, and $bre^\prime$ glues binds to one of the $brn$, $bre$, and $brs$ glues on the output side of a tile to the south of $T_G$ that is propagating $br$. All of the $brn$, $bre$, and $brs$ glues must be activated as the tile to the south of $T_G$ has no ability to know which direction the $g$ message of $T_G$ will take. The $br$ signal passed to $T_G$ will have the same output side as the $g$ signal. For example, if the $g$ message enters from the south and exits through the east, then, as shown in Figure~\ref{fig:signal-following-none}, the glue $bre^\prime$ will be activated; $brn^\prime$ and $brs^\prime$ will remain latent. Then, as the $br$ signal propagates through the tile to the south of $T_G$, $brn$, $bre$, and $brs$ are all activated on the north side of the tile. When $bre$ and the $bre^\prime$ glue on the south edge of $T_G$ bind, this binding event activates the glues $bre$, $brs$, and $brw$ on the east edge of $T_G$, effectively propagating the $br$ signal to the tile to the east of $T_G$. This is shown in Figure~\ref{fig:signal-following-forward}. Notice that there are no signals belonging to $T_G$ that fire when $brs^\prime$ binds. This is because no signals are needed to propagate $br$ to the south of $T_G$. The binding of $brs$ and $brs^\prime$ are enough to propagate $br$ to the south of $T_G$. 
\begin{figure}[htp]
\centering
  \subfloat[][

        An example of signals used to propagate an $g$ message CCW.]{%
        \label{fig:signal-following-none}%
	        \includegraphics[width=1.1in]{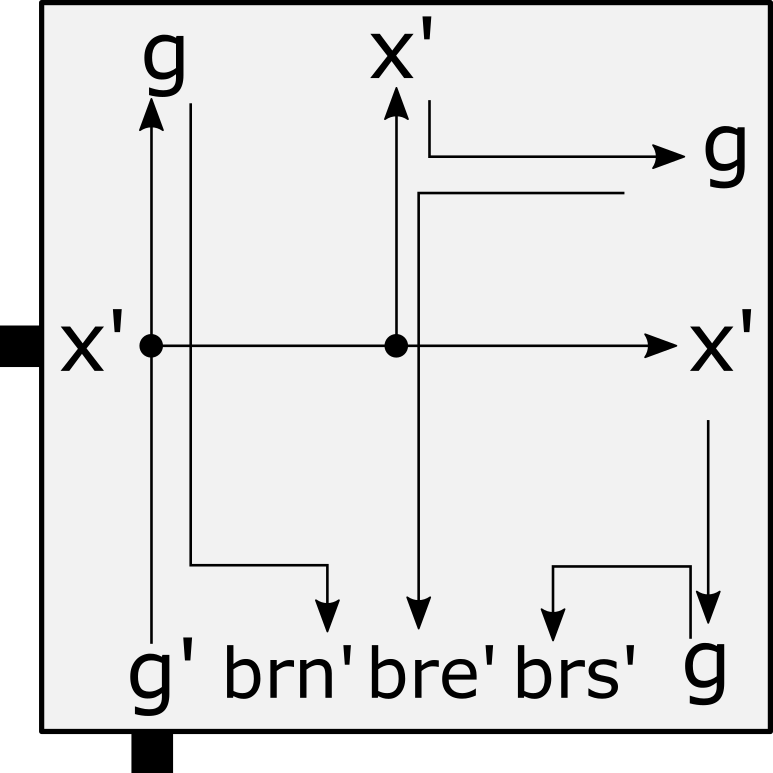}
        }%
        \quad\quad
  \subfloat[][A $br$ message that is following a previously passed $g$ message. 

  ]{%
        \label{fig:signal-following-forward}%
        		\includegraphics[width=1.1in]{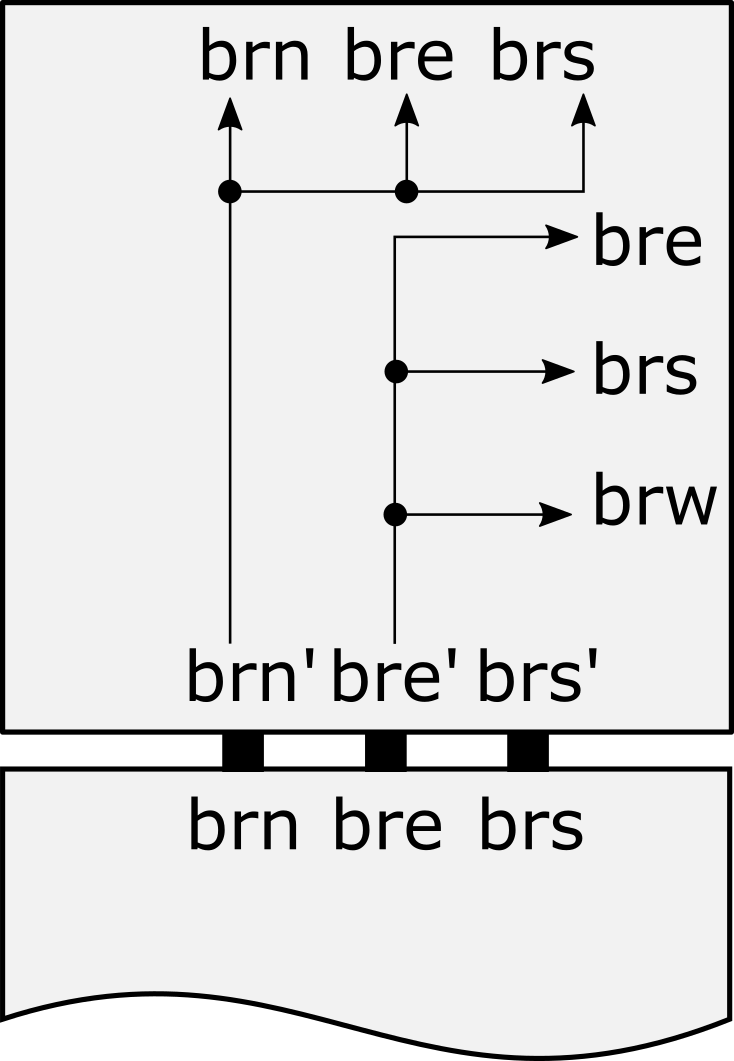}
        }%
  \caption{Tiles which demonstrate signal following.\vspace{-15pt}}
  \label{fig:signal-following-all}
\end{figure}
\section{3D Shape Replication}\label{sec:3d-rep}

In this section, we show that there is a tileset in the STAM$^R$ which is capable of replicating arbitrary shapes. This is stated in Theorem \ref{thm:replicator}, and we prove it by providing modular constructions capable of encoding and decoding arbitrary sets of shapes which are given by Lemma \ref{lem:encoder} and Lemma \ref{lem:decoder}, respectively, and then discussing how they can be combined to replicate shapes.

\begin{theorem}\label{thm:replicator}
There exists a tileset $R$ in the STAM$^R$ which is a universal shape replicator, such that for the systems using $R$ (1) all input assemblies are uniformly covered, (2) the constant $c$ which bounds the size of the junk assemblies equals 4, and (3) they finitely complete with respect to a set of terminal assemblies with the same shapes as the input assemblies.
\end{theorem}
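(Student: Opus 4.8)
The plan is to obtain $R$ by composing the universal shape encoder $E$ of Lemma~\ref{lem:encoder} with the universal shape decoder $D$ of Lemma~\ref{lem:decoder}, both taken with respect to the same shape encoding function $f_e$ --- namely the one constructed in the proof of Lemma~\ref{lem:encoder}. Concretely, I would set $R := E \sqcup D$, first relabeling glues so that the only labels common to the two tilesets are (i) the generic strength-$1$ glue that uniformly covers the input assemblies, which only $E$ ever consumes, and (ii) the family of glues that an ``assembly encoding $s$'' exposes, which form the interface between the two phases. The temperature $\tau$ is the common temperature of the two constructions. Given any finite $S' \subset \mathcal{S}$, the system $\mathcal{R}_{S'} = (R, \sigma_{S'}, \tau)$ is built from infinite copies of the uniformly covered input assemblies of each $s \in S'$ together with infinite copies of all singleton tiles of $R$; by the relabeling, the decoder's fuel tiles, though present from the start, are inert until an encoding assembly appears.

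With this setup the argument is a two-phase pipeline. In the encoding phase, Lemma~\ref{lem:encoder} says the $E$-tiles deconstruct the input assemblies and that, over the $E$-tiles alone, every terminal assembly is either junk of size $\le 4$ or an assembly encoding some $b_s \in f_e(s)$, with infinitely many of the latter for each $s \in S'$, and with the encoder finitely completing with respect to them. Since these encoding assemblies are exactly the assemblies the decoder of Lemma~\ref{lem:decoder} is built to accept, the decoding phase applies: the $D$-tiles consume each encoding assembly and, because $f_d(b_s) = s$ for every $b_s \in f_e(s)$, the only terminal assemblies produced are either junk of size $\le 4$ or assemblies of shape $s$, infinitely many for each $s \in S'$. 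Checking the definition of universal shape replicator is then routine: (1) and (3) follow from the decoder producing infinitely many shape-$s$ assemblies fed by the infinitely many encoding assemblies the encoder supplies (one input assembly yields, modulo junk, one encoding assembly, which yields, modulo junk, one shape assembly); (2) holds because every terminal assembly of $\mathcal{R}_{S'}$ is an encoder junk assembly, a decoder junk assembly, or a decoder shape assembly, the first two kinds having size $\le 4$; (4) follows from the ``no unbounded non-terminal growth'' clauses of both lemmas, since composition adds no new growth process. The three extra clauses of the theorem are then immediate: (1) is a hypothesis we are already using, (2) records $c = 4$, and (3) follows by combining the finite-completion guarantees of the two lemmas.

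The part needing genuine care --- and the main obstacle at this level --- is showing that the two phases neither interfere with nor starve one another, so that the composed system actually inherits the terminal-assembly classification and the finite-completion property. For non-interference one argues from the relabeling that no leftover $E$-tile or $E$-junk can bind to an encoding assembly, a $D$-tile, or a $D$-junk, and symmetrically; this is exactly why the junk bound stays at $4$ rather than doubling. For non-starvation one must amalgamate the two finite-completion constants: if $c_E$ input copies of each shape suffice, in \emph{every} valid assembly sequence, to produce an encoding assembly, and $c_D$ encoding copies suffice to produce a shape assembly, then $c_E c_D$ input copies suffice --- the subtlety being that an adversarial assembly sequence might try to strand partially processed assemblies, which is ruled out precisely because the finite-completion statements of Lemmas~\ref{lem:encoder} and~\ref{lem:decoder} are quantified over all valid assembly sequences. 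I would also flag the one design-level point, discharged when the two lemmas are proved rather than here: the encoder and decoder must be built so that the encoder's output-assembly format is literally a valid decoder input, which is the reason both are stated relative to the single, explicitly constructed $f_e$.
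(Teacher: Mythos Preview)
Your proposal is correct and follows essentially the same approach as the paper: set $R = E \cup D$ at $\tau = 2$, argue non-interference by restricting the shared glue labels to the encoder--decoder interface, inherit the junk bound $c = 4$ from the encoder side, and chain the two finite-completion guarantees.

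One concrete difference is worth flagging. The paper does not run a strict $1{:}1$ pipeline as you describe; it inserts a small modification at the handoff point, duplicating the decoder's initiator tile into versions $t_1$ and $t_2$ so that each encoding assembly $\phi_i$ is decoded \emph{twice} (producing two copies of the target shape) before the encoding dissolves. The paper's stated purpose is to secure condition~(3) of the replicator definition and finite completion in the presence of multiple input shapes, avoiding adversarial sequences that might ``starve'' one shape. Your $c_E c_D$ argument is plausible and your $1{:}1$ pipeline already yields infinitely many terminal outputs from infinitely many inputs, so it is not obvious the doubling is strictly necessary; but you should be aware the paper treats it as a genuine technical step rather than cosmetic. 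A second small point: the paper is explicit that the decoder's output assemblies expose a distinct surface glue $g_a$ rather than the encoder's input glue $g_x$, precisely so that the encoder tiles in $R$ cannot re-attack freshly decoded outputs; your relabeling remark gestures at this but you should state it outright, since without it the decoder's outputs would fail to be terminal in the combined system.
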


\begin{lemma}\label{lem:encoder}
There exist a shape encoding function $f_e$, and a tileset $E$ in the STAM$^R$ which is a universal shape encoder with respect to $f_e$, such that for the systems using $E$ (1) all input assemblies are uniformly covered, (2) the constant $c$ which bounds the size of the junk assemblies equals 4, and (3) they finitely complete with respect to a set of terminal assemblies which encode the shapes of the input assemblies.
\end{lemma}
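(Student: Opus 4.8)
The plan is to exhibit $f_e$ explicitly and then build $E$ as a signal-driven pipeline of the gadgets described in Section~\ref{sec:gadgets}: given a uniformly covered (and, as the introduction stipulates, deconstructable) input assembly, the pipeline will (i) erect a \emph{solid} rectangular bounding prism around it, (ii) break rotational symmetry to pin down a reference corner and coordinate frame, (iii) peel the resulting solid box cell-by-cell in a canonical raster order while transcribing, for each cell, a ``present/absent'' bit onto a growing linear ``tape'' assembly, and (iv) tear down all scaffolding so that the only remaining terminal products are tape assemblies exposing the encoding and junk of size at most $4$.

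For the encoding function: given a shape $s$ with minimum bounding box of dimensions $a\times b\times c$ and a choice of one of the (at most $24$) corner-plus-orientation reference frames of that box, let $b_s$ be the binary string consisting of a delimited encoding of the triple $(a,b,c)$ followed by the $abc$ bits obtained by visiting the cells of the bounding box in lexicographic raster order from the chosen corner, writing $1$ if the cell lies in $s$ and $0$ otherwise. Distinct reference frames in general give distinct strings, but the obvious decoder $f_d$ --- parse the dimensions, then place a unit cube wherever a $1$ occurs --- recovers $s$ from every one of them, so $f_e$ is a legitimate (multi-valued) shape encoding function in the sense of Definition~\ref{def:shape-encoding}, and $f_d$ is exactly the decoding function invoked in Lemma~\ref{lem:decoder}.

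For the construction of $E$ I would proceed in three phases. \emph{Phase 1 (prism and leader election).} Using corner gadgets to locate surface corners of the input assembly, grow outward shells of tiles; ``filler'' tiles (built via sequential signaling as in Figure~\ref{fig:sequential-signaling}) occupy precisely the bounding-box cells not already occupied by the input, yielding a solid rectangular box in which original-shape tiles and filler tiles present distinguishable glues inward. Then run the leader-election subconstruction on this box to designate a unique reference corner and, by signals traveling along the three incident edges, to record the dimensions $(a,b,c)$; when $s$ is symmetric several equivalent corners may be elected, but each produces an identical tape, consistent with $f_e$ being multi-valued. \emph{Phase 2 (scan and transcribe).} Driven from the elected corner, traverse the box in raster order, maintaining the invariant that the cell under inspection is always currently exposed: peel one axis-aligned layer at a time (each layer becomes fully accessible once the layer beneath it is gone), within a layer one row at a time, within a row one cell at a time. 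At the active cell a detector gadget reads whether the tile there is an original or a filler tile, extends the tape assembly by a $0$- or $1$-labeled tile accordingly, then dissolves that cell --- turning off the generic bonds via the deconstruction signals guaranteed by the input being a deconstructable assembly (Definition~\ref{def:deconstructable-assembly}), or the filler tile's own signals; a message-following mechanism (as in Figure~\ref{fig:signal-following-all}) routes the ``advance'' signal back to the scan front in the same order on each step. \emph{Phase 3 (teardown).} After the last cell is read, the input shape and all filler tiles have been fully deconstructed, every gadget has dissolved into singletons (and gadget fragments that cannot complete are designed to disassemble back to singletons), so the only remaining terminal products are tape assemblies exposing $b_s$ on \texttt{on}-glues, plus junk of size at most $4$.

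It then remains to check the three clauses of Definition~\ref{def:binary-encoding} together with the finitely-completes requirement. Infinitely many tape assemblies encoding some $b_s\in f_e(s)$ are produced because input assemblies and all fuel tiles are supplied in infinite count and each input copy runs the pipeline independently; every terminal assembly is either such a tape or junk of size $\le c=4$ because every scaffold and gadget tile is programmed to dissolve or disassemble after its role is finished; and no non-terminal assembly grows without bound since one run touches only the cells of a single $a\times b\times c$ box plus a tape of length $O(abc)$ and then halts. For finitely completes, note the pipeline never consumes a tile type it will later need: choosing the constant $c$ of Definition~\ref{def:finitely-completes} large enough for one full run, and using that runs on distinct input copies are independent, every assembly sequence eventually realizes every tape in the target set (one per elected reference frame). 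I expect the main obstacle to be Phase~1 --- extracting a canonical reference point and coordinate frame from an assembly that by construction carries no orientation information and is free to rotate, robustly for arbitrary non-convex shapes, including those with bent cavities (Figure~\ref{fig:bent-cavity}), which is exactly why deconstruction is exploited; a close second is the asynchronous choreography of Phase~2 --- preserving the ``always-exposed'' invariant while peeling and keeping every scaffold fragment's eventual junk size at or below $4$ despite the model's lack of ordering guarantees on pending signals.
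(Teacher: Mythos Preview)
Your high-level pipeline---fill to a bounding prism, elect a reference corner, raster-deconstruct while transcribing bits, then tear down---is exactly the paper's decomposition, and your encoding function differs only cosmetically (the paper outputs a height-$1$ rectangular assembly whose first row records zig-zag direction bits rather than an explicit dimension header, but both are decodable).

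The genuine gap is the one you flag yourself: you treat leader election as a black box, and in the STAM$^R$ it is the crux. A uniformly covered rectangular box has eight corners presenting identical glue neighborhoods, and with free rotation there is no local asymmetry to exploit; any corner gadget that can bind to one corner can bind to all eight, so signals racing along edges cannot break the tie. The paper's mechanism is geometric rather than signal-based: after verifying the box is complete, it grows a rigid \emph{outer shell} whose interior is strictly larger than the box in every dimension, then dissolves the inner scaffolding connecting the two, leaving the box trapped but loose inside the shell. Because the box cannot touch two interior corners of the shell simultaneously, whichever corner happens to contact the shell first binds via candidate glues and is thereby elected; the shell then dissolves. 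Symmetry-breaking is thus reduced to a single nondeterministic diffusion event with a guaranteed unique outcome.

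Two smaller points your sketch elides. First, you need to \emph{detect} that the bounding prism is complete before any of this can start; the paper does this with an inner shell whose edge and surface-verification tiles can close only over a flat rectangular face, and which convert themselves into additional filler (so progress is monotone and no infinite retry loop is possible) whenever they encounter a gap or protrusion. Second, filler tiles cannot reach enclosed or blocked cavities before deconstruction begins, so the prism need not be solid as you assume; the paper lets the scan stall at an empty voxel until a filler tile diffuses in---which becomes possible once enough outer layers have been peeled away---rather than requiring a solid box up front.
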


\begin{lemma}\label{lem:decoder}
There exist a shape decoding function $f_d$, and a tileset $D$ in the STAM$^R$ which is a universal shape decoder with respect to $f_d$, such that for the systems using $D$ (1) the constant $c$ which bounds the size of the junk assemblies equals 3 
and (2) they finitely complete with respect to a set of terminal assemblies with the same shapes as those encoded by the input assemblies.
\end{lemma}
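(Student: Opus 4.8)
The plan is to construct $D$ so that every input assembly --- which by hypothesis exposes on its surface a binary string in $f_e(s)$ for some shape $s$ --- is driven to an assembly of shape $s$ together with only a constant-bounded amount of junk, and to take $f_d$ to be the decoding function that inverts the encoding convention used by the encoder of Lemma~\ref{lem:encoder} (and which is a valid decoding function in the sense of Definition~\ref{def:shape-encoding}): a string in that convention records the side lengths $(w,h,d)$ of a bounding box of $s$, followed by the occupancy bits of the $w\cdot h\cdot d$ lattice cells of that box, listed in a fixed raster order in which consecutive cells are lattice-adjacent, read off from one of the (up to) $24$ reference corner/orientation pairs. Since every such string determines $s$ up to rotation and translation, $f_d$ is well defined on every string a valid input can present, and it suffices for $D$ to read off whichever single string a given input exposes; we take the encoding assemblies to be exactly those the encoder produces, namely deconstructable ``tape'' assemblies carrying a distinguished start marker, so that $D$ can also disassemble an input after reading it.

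The construction runs in three phases, all at scale $1$. First, \emph{reading}: a distinguished decoder seed gadget attaches, via a detector gadget, at the unique start marker of a tape (recognizable regardless of the tape's global orientation), reads the dimension prefix, and spawns ruler gadgets of lengths $w$, $h$, and $d$ built from the counter-style subconstructions of Section~\ref{sec:gadgets}; these rulers will later signal when to turn a row, advance a plane, and stop. Second, \emph{block construction}: guided by the rulers, the system grows the full $w\times h\times d$ bounding-box block, visiting its cells in exactly the raster order used by $f_e$, and streams the occupancy bits off the tape so that the tile placed in each cell records a single bit, ``present'' or ``absent''. Because each tile has only $O(1)$ single-use signals while both the tape and the block are unbounded, this streaming is realized using the \emph{message-following} gadget and \emph{tile conversion} of Section~\ref{sec:gadgets} together with a ``tick tape'' derived from the rulers, so that the chain of already-placed tiles can relay the successive ``advance the read head / return the next bit'' messages and the fold of the one-dimensional tape into the three-dimensional block is driven by purely local signals. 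Third, \emph{finalization}: when the last raster cell is filled (detected by a detector gadget), a cascade of sequential signals has every ``present'' tile activate strength-$\tau$ bonds to its ``present'' neighbours, then has every ``absent'' tile dissolve, and finally has the seed, the rulers, the consumed tape, and all remaining scaffolding dissolve or detach. Since $s$ is connected, once the ``absent'' tiles are gone the ``present'' tiles form a $\tau$-stable assembly whose shape is exactly $s$ and on which no action remains, so it is terminal; an ``absent'' tile lying in an enclosed cavity of $s$ cannot drift away, but, having had all of its glues turned \texttt{off} before detaching, it is a separate, inert size-$1$ supertile and so does not alter the shape of the shape-$s$ assembly.

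It then remains to check the two quantitative conditions. For the junk bound, every gadget --- the seed, each ruler segment, every consumed tape tile, every ``absent'' tile, and every message-following helper --- is designed to dissolve into pieces of at most $3$ tiles, and the sequential signaling of the finalization phase ensures that an ``absent'' tile severs its bonds to other ``absent'' tiles before the bond holding it in place is released and that it has deactivated all of its own glues by the time it detaches (handling the pending-\texttt{off}-signal subtlety of the Tile Dissolving paragraph), so dissolved tiles come apart as singletons, duples, or triples rather than clumping; hence $c = 3$. For finite completion, every non-terminal assembly is size-bounded (the block by $w\cdot h\cdot d$ and everything else by $O(1)$ or by a ruler length), the seed attaches at most once to each tape, and after a seed has attached the remaining steps require only singleton tiles, of which there are infinitely many, so no assembly sequence can consume resources needed to carry a pending input through to an assembly of its shape; thus with a uniform starting count of each initial supertile, every valid assembly sequence produces, for each $s \in S'$, a terminal assembly of shape $s$. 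Together with the fact that every terminal assembly is either an assembly of a shape in $S'$ or junk of size at most $3$, this gives Lemma~\ref{lem:decoder}.

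The main obstacle is the second phase: routing an unbounded-length binary tape into an unbounded-size three-dimensional block while each tile carries only a constant number of single-use signals. This is where the message-following gadget, tile conversion, and the ruler/tick-tape machinery must be combined with care, so that no tile of the growing block is ever asked to relay more messages than its fixed budget allows and so that the fold lines up exactly with the raster order used by $f_e$. A secondary difficulty is designing the finalization-phase deactivation schedule precisely enough to meet the hard $c = 3$ bound without ever letting an ``absent'' tile detach while it still carries an active glue.
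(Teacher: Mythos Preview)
Your overall decomposition differs from the paper's in a way that is conceptually fine: you build the full $w\times h\times d$ block and only afterwards dissolve the ``absent'' tiles, handling enclosed cavities by letting trapped absent tiles become separate size-$1$ supertiles. The paper instead removes filler tiles from slice $\sigma_{k-1}$ \emph{during} the construction of slice $\sigma_k$, precisely so that no filler tile is ever behind a completed wall; this forces the paper to prove a nontrivial connectivity lemma (their Lemma~\ref{lem:decoder-correctness}) showing that removing a filler tile mid-construction cannot cause any shape tile to fall off. Your strategy sidesteps that lemma entirely, and your trapped-tile argument is consistent with the model definitions, so this part is a legitimate alternative.

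The gap is in your second phase. You explicitly say that ``the chain of already-placed tiles can relay the successive `advance the read head / return the next bit' messages,'' and you appeal to message-following and tile conversion to make this work within each tile's fixed signal budget. But message-following, as defined in the paper, lets exactly \emph{one} additional message retrace the path of a first; it does not give you an unbounded channel. Likewise, tile conversion must be pre-loaded as a fixed number of latent glue/signal sets per tile. A tile sitting at position $1$ of the raster would, under your scheme, have to participate in $\Theta(whd)$ request/return exchanges, which no constant-size tile can do. The paper avoids this entirely by never asking a placed block tile to relay bit traffic: for each voxel it grows a \emph{fresh} path of messenger/decoding tiles from the current read position on the encoding assembly out to the target location, places the shape or filler tile, and then dissolves that messenger path back to junk so the same physical corridor can be reused by new tiles for the next voxel. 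That ``grow a disposable arm per bit'' mechanism is the actual workhorse of the decoder, and nothing in your proposal replaces it.

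A smaller mismatch: the encoder of Lemma~\ref{lem:encoder} does not produce a one-dimensional tape with an explicit dimension prefix. Its encoding assembly is a height-$1$ rectangular prism whose first row records the zig-zag direction bits and whose remaining rows record the voxel bits, with the dimensions recovered geometrically (via the direction changes and the assembly's own extent). Your reader gadget as described would not parse that format; if you want your $f_d$ to invert the paper's $f_e$, you need to read the 2D layout rather than a prefixed linear string.
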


We now prove Lemmas \ref{lem:encoder} and \ref{lem:decoder}, and consequently Theorem \ref{thm:replicator}, by construction. In the following few sections we describe the process by which an STAM$^R$ system can encode arbitrary shapes. We then show how an STAM$^R$ system can construct arbitrarily shaped assemblies from the encodings produced by the encoding system. Additionally, these systems make use of universal tilesets $E$ and $D$ respectively, meaning that regardless of the shapes to be encoded or decoded, our systems never require additional tiles besides those from $E$ and $D$. These tilesets can then be combined to create a tileset $R = E \cup D$ which is then a universal shape replicator. It should also be noted that constructing the universal encoder and decoder separately allows for additional complex tasks to be performed in the STAM$^R$. For example, tiles are capable of simulating the execution of Turing machines to perform arbitrary computation. As will be briefly discussed later, this means that once shapes have been encoded, it is then possible to manipulate the encodings using simulated Turing machines before the decoding process. Such behavior is clearly much more general than shape replication.



\subsection{Forming a bounding box and electing a corner as ``leader''}\label{sec:bounding-leader}

Here we describe the process by which a set of arbitrary shapes $S = \{s_1, \ldots, s_n\}$ can be encoded in the STAM$^R$ using a universal tileset $E$. It should be noted that we don't explicitly list each tile type in $E$; rather, much like how it is more useful to use pseudo-code instead of compiled machine code when describing an algorithm, we describe the tiles in $E$ implicitly by their functionality, noting that there are many essentially equivalent ways to design tiles which perform the necessary tasks and a discussion of the finer details regarding exactly which signals and glue types are used in each instance would be less informative.

Given our set $S$ of shapes, we define our STAM$^R$ system $\mathcal{E}_S$ to be the triple $(E, \Sigma_S, \tau=2)$ where $\Sigma_S$ is our initial system state containing assemblies of the shapes in $S$. This state consists of all tiles in $E$, each with an infinite count, and additionally consists of a set $A=\{\alpha_1, \ldots, \alpha_n\}$ of uniformly covered, deconstructable assemblies such that the shape of $\alpha_i$ is $s_i$ for $i=1,\ldots,n$. The assemblies of $A$ are called our \emph{shape assemblies} and are made only of tiles from a fixed subset of $E$ called \emph{shape tiles}. Note that the glues and signals defined in these shape tiles are not used to encode any information regarding the structure of our shape assemblies; any shape specific information is inferred during the encoding process and the shape tiles simply contain the necessary glues and signals to perform basic tasks required for the encoding process, none of which are specific to any particular part of the shape assemblies. 
Additionally, we will define tile encoding and decoding functions, $f_e$ and $f_d$ during our construction. Essentially our encoding of a shape consists of a sequence of rows of binary values, each row corresponding to a 1-dimensional slice within the minimal bounding box of our shape, with $1$ representing a location in the shape and $0$ representing a location not in the shape. 

The encoding process described below can be largely broken down into 3 steps. First, a bounding box is constructed around the shape assemblies using special tiles which are distinct from the shape tiles. Then, one of the corners of the box is elected non-deterministically to be the \emph{leader corner} to provide an origin point which will represent the first tile location of our encoding. Finally, from the leader corner, the shape will be disassembled tile-by-tile during which an encoding assembly will be constructed, recording for each disassembled tile whether it is part of the shape or not (i.e. a ``filler'' tile used to assist the construction). During our description of the encoding process, we will follow the process for a single shape assembly $\alpha_i$, but note that all shape assemblies are encoded simultaneously in parallel in $\mathcal{E}_S$.


\subsubsection{Bounding Box Assembly Construction}

The first step in our encoding process begins by forming a bounding box assembly $\beta_i$ through the attachment of special tiles, called \emph{filler tiles}, to $\alpha_i$. These filler tiles cooperatively bind to 2 diagonally adjacent tiles of our shape assembly in order to fill out any concave portions. When a filler tile attaches to an assembly, signals are fired from the newly bound glues which activate additional glues between the filler tile and shape assembly. These new glues ensure that the filler tile is bound with strength 2 on each face to the shape assembly as this will be important during the disassembly process. After the filler tile is firmly attached with 2 strength-2 bonds, signals are then fired within the filler tile which activate strength-1 glues of type $g_f$ on all other faces. These will be used for further filler tile attachment. Figure \ref{fig:fill-convex-site} illustrates the attachment of a filler tile to an assembly and shows how sequential signaling is used to ensure that the filler tile is attached with strength 2 on both of its input faces before activating glues on each of its output faces.

Because filler tiles must be able to cooperatively bind to both shape tiles and other previously attached filler tiles, we need 3 unique types of filler tiles: One which initially presents 2 glues of type $g_x^*$ to bind with 2 shape tiles, one which initially presents 2 glues of type $g_f^*$ to bind with 2 other filler tiles, and one which presents one of each glue to cooperatively bind with a shape tile and a filler tile. Each type of filler tile is otherwise identical. Because we've chosen our binding threshold $\tau=2$, the two initially present glues are sufficient for binding into any location on the assembly with at least 2 adjacent shape or filler tiles. The signals from the binding of these glues then activates additional glues on the same faces which ensures that the filler tile is attached with strength 2 on two separate faces, regardless of whether or not additional filler tiles later bind to this one. This property will be used to guarantee that the assembly stays connected during the disassembly process.

\begin{figure}
    \centering
    \includegraphics[width=0.9\textwidth]{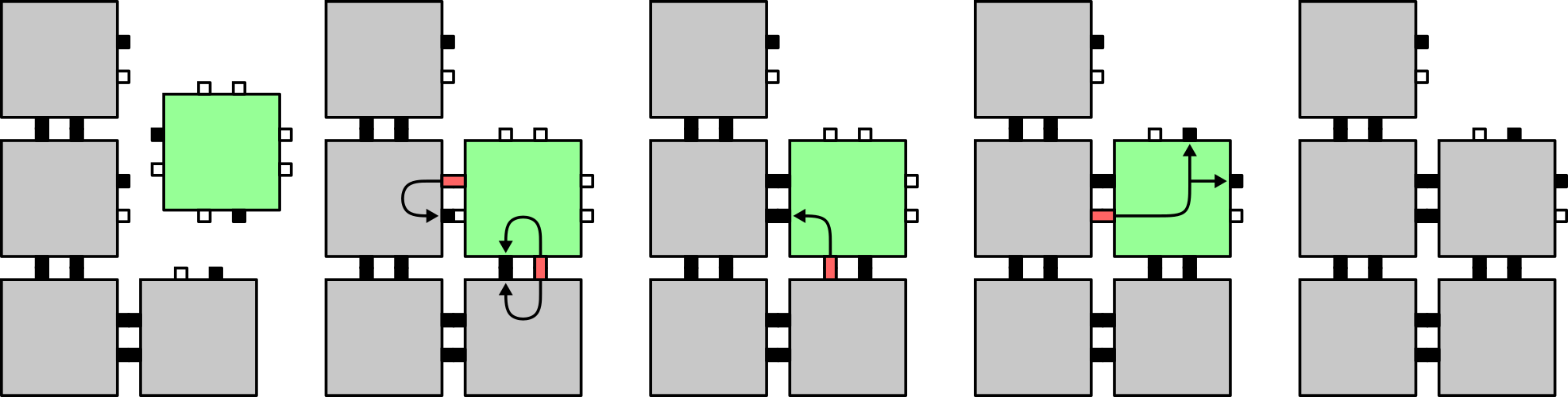}
    \caption{Filler tile binding to a concave site. Once a filler tile attaches cooperatively, signals activate glues on the filler tile and adjacent tiles. These glues ensure that the filler tile is attached with strength 2 on all sides. These glues are activated sequentially and once both are in the \texttt{on} state, signals activate output glues on all other sides of the filler tile. Once these signals activate, the super tile has 1 fewer concave site and the filler tile behaves as though it is just another tile on the supertile. While depicted in 2D for clarity, this occurs in 3D during our construction, but the idea is the same.}
    \label{fig:fill-convex-site}
\end{figure}



Eventually, after sufficiently many filler tiles have attached, there will be no more locations in which another filler tile can attach. There are often many ways in which this can occur for any shape assembly, but the resulting bounding box assembly will always be a minimal bounding box of our shape. It should be noted that its possible that not every location within the bounding box is filled. This can occur if the original shape had enclosed cavities, but can also occur because the attachment of filler tiles can create additional cavities as they attach. This is not a problem and it will always be possible for filler tiles to complete the outer surface of the bounding box. Additionally, this bounding box will be uniformly covered by glues of type $g_x$ and $g_f$.

\subsubsection{Detecting Bounding Box Completion}

In order to continue with the encoding process, we first need to verify that the bounding box is fully formed. This is done by growing a shell of tiles around our assembly. This shell, which we call the \emph{inner shell}, is able to grow to completion only if the assembly is a fully formed bounding box. Figure \ref{fig:inner-shell} illustrates the high-level construction of the inner shell around a fully formed bounding box.

Growth of the inner shell begins with the attachment of corner gadgets to our assembly. We use 2 types of 3D corner gadgets, one which is able to bind to a corner of our assembly presenting 3 glues of type $g_x$ and one which is able to bind to a corner presenting 3 glues of type $g_f$ (note that at $\tau=2$ only two glues are needed for a corner gadget to attach, but any tile allowing a corner gadget to attach must expose all 3). That is, the corner gadgets can attach either to a shape tile or a filler tile on a corner of our assembly. Note that these gadgets exist in our system while the bounding box is being constructed; therefore, it's possible that corner gadgets attach to tiles in our assembly before the bounding box has been fully constructed. Additionally, special care needs to be taken when the bounding box surrounding our shape assembly has at least one side of dimension 1. The details of the inner shell's construction is described below and these various cases are addressed.

\begin{figure}
    \centering
    \includegraphics[width=\textwidth]{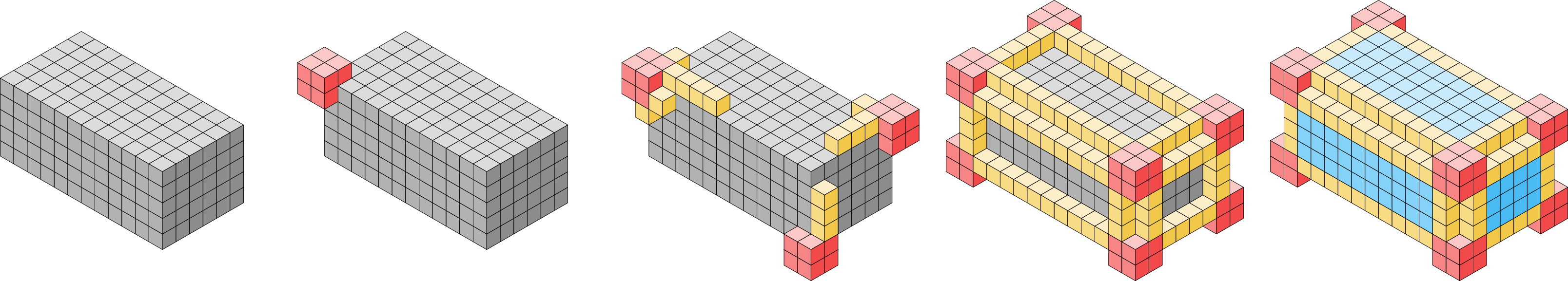}
    \caption{Growth of the inner shell around a bounding box (illustrated in gray). Growth begins by the attachment of corner gadgets (red). Cooperative binding with the corner gadgets and bounding box allow edge tiles to attach (yellow). Cooperation between the edge tiles and the bounding box then allow filler verification tiles (blue) to grow which are used to fill in the faces of the inner shell. The process by which these verification tiles bind to the bounding box ensures that there are no gaps or protrusions on the bounding box surface. }
    \label{fig:inner-shell}
\end{figure}

When a corner gadget attaches to our assembly, signals from the attachment cause strength-1 glues to activate on the faces of the corner gadget which point parallel to the surface of our assembly. These glues will be used to allow cooperative attachment of special \emph{edge tiles} that will attach in a line along the edges of a completed bounding box. The glues activated on the corner gadget can either be of type $g_\text{edge}^L$ or $g_\text{edge}^R$ depending on which face of the corner gadget they reside. Glues of type $g_\text{edge}^L$ indicate that the edge to be grown is a left edge of the bounding prism relative to the direction of growth of the edge while glues of type $g_\text{edge}^R$ indicate a right edge.

Like filler tiles, edge tiles initially have 2 active glues on adjacent faces: one of these glues is either of type $g_\text{edge}^{L*}$ or $g_\text{edge}^{R*}$ so as to be complementary to the glue presented by the corner gadget, and one of type $g_x^*$ or $g_f^*$ so as to also be complementary to a glue on the surface of our assembly. Since any combination of these glues is necessary, there are 4 unique types of edge tiles. Once an edge tile has cooperatively attached to our assembly, signals are fired which activate another glue of type $g_\text{edge}^L$ or $g_\text{edge}^R$ to allow additional edge tiles to cooperatively attach to it and the assembly. Additionally, glues are activated on all other exposed sides of the edge tile which will be used by detector gadgets later. These glues are unique to the specific face of the edge tile so that detector gadgets can distinguish between the interior and exterior sides of an edge as well as the side of the edge tile pointing away from the assembly. Although tiles are allowed to rotate in the STAM$^R$ and don't have fixed orientations, this directionality can be enforced by the relative orientations of the two glues used for the initial binding of a tile. Edge tiles will continue to grow along the surface of our assembly from corner gadgets until they are either blocked by another tile, reach the end of the surface of our assembly, or it is detected that the edge is invalid. 

\begin{wrapfigure}{r}{0.5\textwidth}
    \centering
    \includegraphics[width=0.48\textwidth]{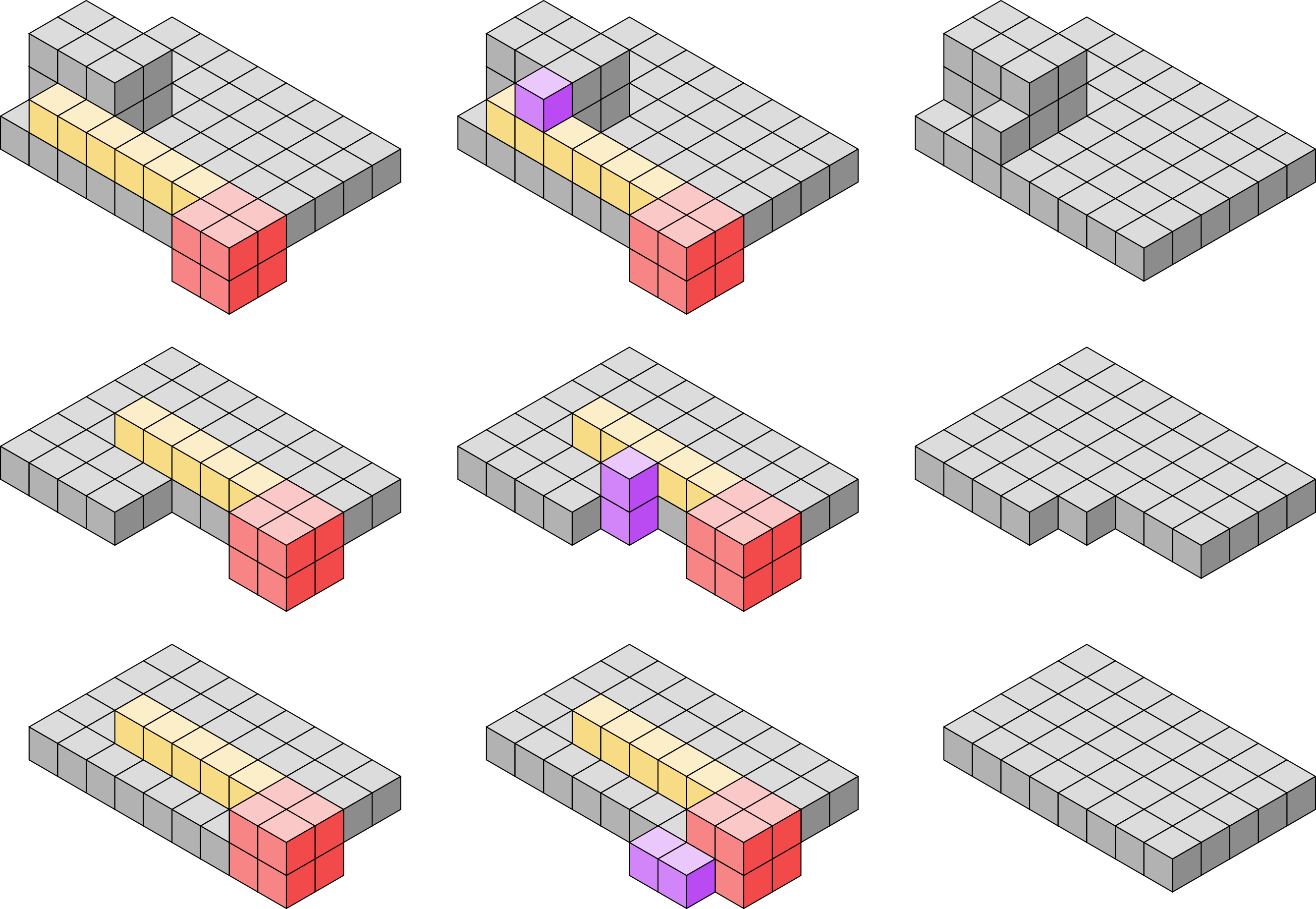}
    \caption{Detecting and resolving invalid edges}
    \label{fig:invalid-edges}
\end{wrapfigure}
For an edge to be valid, there must be no shape or filler tiles adjacent to any edge tiles except for those underneath the edge tiles to which the edge tiles cooperatively attached; additionally, if an edge is a right (respectively, left) edge, then there must not be a shape or filler tile occupying a location diagonally adjacent to the right (resp., left) of the edge tiles making up the edge with respect to the forward growth direction of the edge. Edge tiles which violate these validity conditions can be easily detected using detector gadgets specific to the particular situation as illustrated in Figure \ref{fig:invalid-edges}. Following the attachment of such a detector gadget, a signal is propagated along the edge causing all connected edge tiles and corner gadgets to dissolve. Before this signal is propagated though, signals from the detector gadget ensure that a new filler tile is effectively added to the assembly in a safe location (that is without causing the eventual bounding box to be bigger than necessary). This is done using signals from the detector gadget to convert one of its own tiles or the detected invalid tile into a filler tile. This conversion is done so that we don't risk infinite assembly sequences wherein a corner gadget attaches, attempts to grow an invalid edge, and dissolves repeatedly. Because a filler tile is always effectively added upon detection of an invalid edge, eventually it will be impossible for invalid edges to occur.

In the case where a valid edge is blocked by another tile, then there are 2 possibilities: (1) the edge is blocked by a shape or filler tile, or (2) the edge is blocked by another edge or corner gadget. If a filler tile blocks the path, then like with invalid edges, a detector gadget can cooperatively bind to the blocking tile and the edge tile, convert the edge tile into a filler tile, and propagate a dissolve signal down the remaining edge tiles. If another edge tile or corner gadget blocks the edge, then we need to determine if the blocking tile is part of a valid edge. If the edge is invalid, then it will eventually dissolve and nothing needs to be done. Otherwise, the tile blocking our edge belongs to another valid edge. In this case the meeting point can either be at a corner of our assembly or along the edge of our assembly. Because of the unique glues presented on all sides of an edge tile, these situations can easily be differentiated by detector gadgets. If the meeting point is a corner, then signals from the corresponding detector gadget will cause the corner to convert to a piece of a corner gadget. The remaining corner gadget pieces can then attach and the result will be a corner gadget connected to two incoming edges. If the meeting point is an edge, the detector gadget will fire signals to activate glues between the colliding edge tiles connecting the edge tiles and allowing future signals to propagate between them.

\subsubsection{Dissolving Edge Tiles and Corner Gadgets}\label{sec:dissolving}
Care must be taken when dissolving edge tiles and corner gadgets to avoid erroneous attachments of tiles which have dissolved, but on which not all of the glues have yet deactivated. When dissolving edge and corner tiles, we use a procedure called \emph{careful dissolving} to guarantee safe detachment. To understand this procedure first note that, we make a distinction between those glues which were initially active on a tile before it attached to an assembly, which we call \emph{prior glues}, and those which activated after the initial attachment, called \emph{posterior glues}. Here we make one exception regarding the strength 2 glues present between the outermost corner tile of a corner gadget and its 3 neighboring tiles, these are classified as \emph{corner glues} and will be handled differently. Also, in addition to all of the functional glues present on an edge tile or corner gadget tile, when two edge tiles bind to each other, a strength 1 pair of glues of type $g_d$ and $g_d^*$, called \emph{dissolve helper glues}, are activated between them. Corner gadgets also have these glues activated between their tiles, but this is done in a tree-like structure with the root being the outermost corner tile. This tile shares dissolve helper glues with the 3 corner gadget tiles adjacent to it, and these share dissolve helper glues with the 3 corner gadget tiles which cooperatively bound in between, though only on 1 face each so as to form a tree.

Careful dissolving begins when a detector gadget binds to an edge or corner gadget tile. This binding initiates a \emph{dissolve signal} that propagates along the edge and corner gadget tiles, deactivating all prior glues. Now suppose $\gamma$ is a group of connected edge tiles which have detached from the assembly as a result of these deactivations. By our tile design, prior glues always only bind with with either posterior glues or bounding box glues ($g_x$ or $g_f$), and posterior glues, which are always strength 1, only bind with prior glues. Notice that $\gamma$ can be presenting at most 1 prior glue of strength 1, otherwise it would not have detached from the assembly, though it may have any number of posterior glues and some dissolve helper glues. Because attachment to an assembly requires either a prior glue of strength 2 or two prior glues of strength 2 to bind with posterior glues exposed by a bounding box assembly, $\gamma$ is effectively inert. It is possible that two detached junk assemblies have dissolve helper tiles exposed, but any cooperation between these junk assemblies would require the cooperation of a dissolve helper glue and a prior/posterior glue pair to occur. This may happen, but eventually the prior glue will deactivate and the combined junk will dissolve.

By the connectivity offered by the dissolve helper tiles, even as $\gamma$ further breaks up into smaller assemblies or individual tiles, this property is preserved, since in addition to the dissolve helper glues between each pair of tiles in $\gamma$, any glues holding tiles together form a prior/posterior pair. For a strength 1 cut to exist in $\gamma$, allowing it to break apart, it must be the case that the prior glue deactivates between the tiles, otherwise they will still be held together with at least strength 2. Eventually, we will be left with only individual inert tiles or the 4 tiles that make up the corner of a corner gadget which will also be inert by the same argument. Thus we have a maximum junk size of 4. Careful dissolving is possible so long as the above conditions regarding prior and posterior glues are met. This is true for all gadgets and tiles used during the leader election process, so during the leader election process, when we say that a dissolve signal is propagated, we mean that careful dissolving occurs between those tiles.

\begin{figure}
    \centering
    \includegraphics[width=\textwidth]{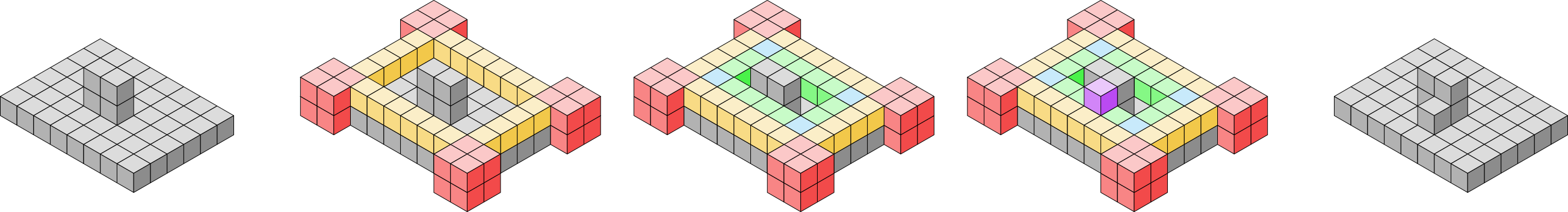}
    \caption{If, as a surface of the inner shell is growing, it is found that there are shape or filler tiles still protruding from that surface of the bounding prism, then a detector gadget will be able to cooperatively bind with the protruding tile and adjacent verification tile. The verification tile will then be converted into a filler tile and the other verification tiles, edge tiles, and corner gadgets will be dissolved. In this illustration, a verification tile is adjacent to a protrusion which is 2 tiles high. There are a few other possible configurations of verification tiles next to protrusions, each requiring a unique detector gadget, but the idea is the same in each.}
    \label{fig:verification-surface-protrusion}
\end{figure}

\subsubsection{Filler Verification}

When the edges growing from 2 corner gadgets meet via edge tiles between them along the surface of a bounding prism, signals between them through the edge tiles activate glues which allow a filler verification process to begin. This process proceeds in iterations growing inwards towards the surface's center and verifies that there are no gaps or protrusions in the surface. If gaps are found, nothing happens until those gaps are filled with filler tiles, after which the verification continues. If protrusions are found, then as illustrated in Figure \ref{fig:verification-surface-protrusion}, detector gadgets are able to cooperatively bind with a verification tile and a shape/filler tile of the protrusion. Signals from this attachment cause the verification tile to become a filler tile and cause all other involved verification tiles, edge tiles, and corner gadgets to dissolve.

The filler verification procedure is as follows. When the edge between two corner gadgets is filled with edge tiles, a signal is able to propagate between the corner gadgets. Once a corner gadget has received signals from its 2 neighboring corner gadgets, glues are activated on the adjacent edge tiles allowing the cooperative binding of a tile called a \emph{verification corner tile}. This verification corner tile attaches diagonally adjacent to the corner gadget within the region bounded by the edge tiles. Additionally, signals from the corner gadgets activate glues on the other edge tiles which allow special \emph{verification edge tiles} to cooperatively bind with the edge tile and surface of the bounding prism. If there is a gap preventing such a binding, it will simply not occur until filler tiles attach to fill the gap. If there is a protrusion, a detector gadget will be able to cooperatively bind with a filler/shape tile on the protrusion and a verification tile. That verification tile will then convert to a filler tile through signals fired from the detector gadget and all other involved edge tiles, verification tiles, and corner gadgets will dissolve. If no protrusion is found, the process repeats with the verification corner tiles acting as the corner gadgets and verification edge tiles acting as the edge tiles. A new iteration of the verification process will begin in the next inner layer of the surface.

\begin{figure}
    \centering
    \includegraphics[width=\textwidth]{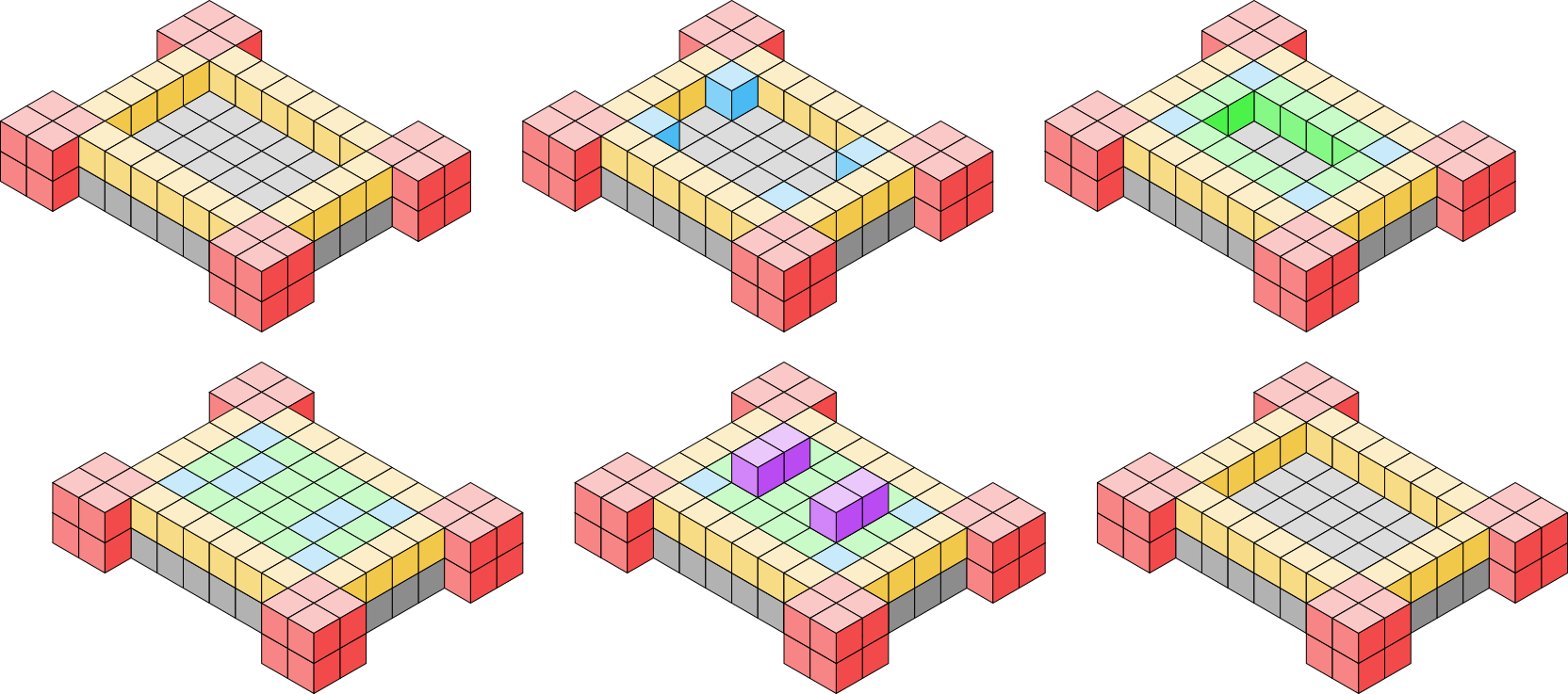}
    \caption{During the surface verification process, tiles attach within the rectangle formed by edge tiles on a surface. These tiles attach in layers growing towards the center of the shape. Once the corners of a layer are adjacent, or in the case of an odd side length when one corner touches three sides of the previous layer, a detector gadget can bind. Signals activated by this binding indicate that the verification process was successful and the verification tiles are dissolved}
    \label{fig:surface-verification}
\end{figure}

This process will continue until the center is reached. This can happen in 2 different ways depending on whether the shortest side of the surface is of even length or odd length. (See Figure \ref{fig:surface-verification}.) If the shortest side of the surface is of even length, then eventually 2 verification corner tiles will be adjacent to each other. A duple detector gadget will be able to cooperatively bind with those tiles indicating that the center has been reached. This will happen on both pairs of adjacent corner verification tiles and once the verification edge tiles attach between them, signals will be able to propagate between the pairs of corner verification tiles. These signals will propagate back along the iterations of the verification tiles and activate glues on the corner gadgets which will allow for the growth of the outer shell to begin on this face of the bounding prism. If the shortest side of the surface is of odd length, the process is similar, but instead of 2 verification corner tiles being adjacent, there will be a single verification corner which is adjacent to either 2 verification corner tiles from the previous iteration, or all 4 if the surface of the bounding prism was a square. In either case, detection gadgets will be able to initiate signals which inform the corner gadgets that verification of this face is complete. Additionally, upon completion, a dissolve signal causes all glues on the verification tiles to turn off and the verification tiles themselves to dissolve.

\subsubsection{Handling Thin Shapes}
The process described above assumes thick shapes, those whose minimum bounding box has no sides of length 1. To handle thin shapes (i.e. those shapes that are not thick), first note that for every corner gadget attached to a thin shape, there will be at least one direction where no edge tile can cooperatively attach to the corner gadget and shape assembly. This can be detected by a detector gadget and upon detection signals will be fired accomplishing 2 tasks: (1) glues will be activated on the corner gadget which allow other corner gadget tiles to attach as if two mirrored corner gadgets were overlapping along the thin edge, and (2) edge tiles running along the thin edge of the assembly from the corner gadget will be dissolved and the outgoing $g_\text{edge}^L$ or $g_\text{edge}^R$ glue from the corner gadget will be deactivated and replaced by a newly activated glue of type $g_\text{edge}^T$. We call corner gadgets that have been modified in this way \emph{extended corner gadgets}. To the glue of type $g_\text{edge}^T$, a different type of tile, called a \emph{thin edge tile}, can cooperatively attach to the assembly and corner gadget. Thin edge tiles behave similarly to regular edge tiles and grow sequentially along the assembly. Upon meeting another thin edge tile, like with normal edge tiles, a detector gadget cooperatively binds and activates glues on the thin edge tiles allowing them to bind with each other if they meet along a thin edge or converting the thin edge tiles into corner gadget tiles if they met at a corner. If the path of the thin edge tiles is blocked by a shape or filler tile, a detector gadget can cooperatively bind and the last thin edge tile is converted to a filler tile and a dissolve signal is propagated down the remaining edge tiles.

In the case where our initial shape assembly is a \emph{thin rod}, having dimensions $1\times 1\times m$, the corner gadgets which bind to the ends of the ends of the rod will be extended twice (or 3 times if $m=1$). Detector gadgets can be used to determine that a corner gadget has been extended more than once and signals from the attachment of these detector gadgets will activate the same glues on the corner gadgets indicating that filler verification is complete for the corresponding $1\times 1$ side of the assembly. 

\subsubsection{Outer Shell Construction}

Whenever the filler verification process is completed on a surface of the bounding prism, signals activate glues on the corner gadgets of that surface which initiate the growth of an outer shell. The glues activated on the corner gadgets exist on the outward pointing faces of the tiles between edge tiles and allow tiles called \emph{outer shell tiles} to bind with strength 2 to these locations as illustrated in Figure \ref{fig:outer-shell}. Once attached, these outer shell tiles present strength-1 glues of type $g_\text{out}$ on all sides except the one that points away from the assembly. Another type of tile, called an \emph{outer edge tile}, is then able to cooperatively bind to these outer shell tiles and the edge tiles from the inner shell. These outer edge tiles also present $g_\text{out}$ glues which further allow other outer edge tiles to cooperatively bind on top of the edge tiles from the inner shell. When two outer edge tiles meet along an edge, detector gadgets can cooperatively bind to the pair causing them to activate glues between each other and bind.

Additionally, special corner gadgets called \emph{outer corner gadgets} bind with 3 outer shell tiles on the corners of the assembly. (Because in our construction $\tau=2$, outer corner gadgets really only cooperatively bind with 2 of the outer shell tiles to attach, but by using sequential signaling, we can ensure that they do not propagate their signals to other outer corner gadgets until they are bound to all 3 outer shell tiles on their respective corner of the assembly.) These outer corner gadgets are different from normal corner gadgets in that they have 12 tiles as illustrated in Figure \ref{fig:outer-shell}. Once an outer corner gadget attaches, signals are propagated along outer shell and outer edge tiles to adjacent outer corner gadgets. 

\begin{figure}
    \centering
    \includegraphics[width=\textwidth]{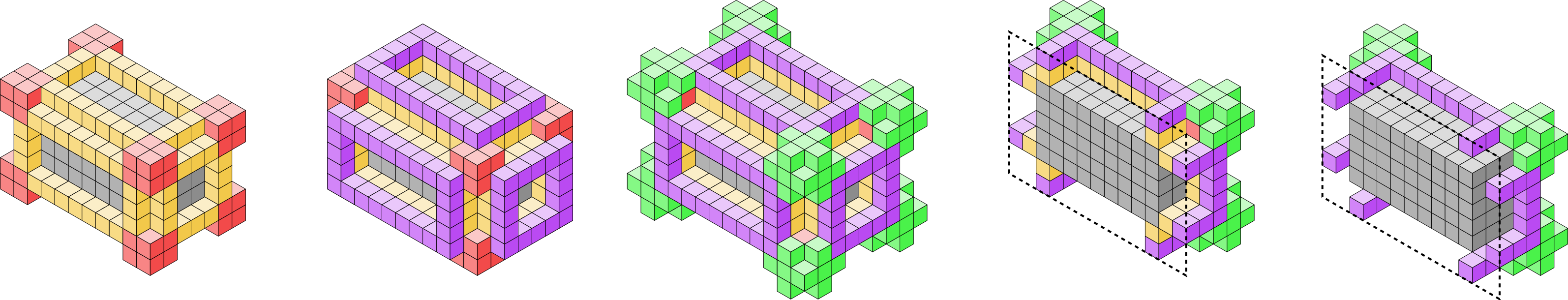}
    \caption{Once filler verification has successfully occurred on a surface of our bounding box, outer shell tiles attach to the edge tiles and corner gadgets on that surface to form a rectangle. Between the corners of these rectangles, outer corner gadgets can cooperatively bind. Once the corner gadgets have attached sufficiently to the outer shell tiles and the necessary connectivity conditions have been met, inner shell tiles are dissolved from between the outer shell and bounding box assembly. Illustrated using a cross-section view, the detachment of these tiles leaves us with a detached bounding box assembly that is too large to fit in the gaps of the outer shell, but too small to touch more than one interior corner of the outer shell simultaneously. Because of this, the bounding box assembly can then bind to an interior corner of the outer shell, but only on one corner, which is then elected leader.}
    \label{fig:outer-shell}
\end{figure}

When an outer corner gadget has received this signal from all 3 of its neighbor outer corner gadgets, a dissolve signal is propagated to the inner shell corner gadget below. This signal prompts that corner gadget and its edge tiles (but not any other corner gadgets) to dissolve and additionally causes glues, called \emph{candidate glues}, of type $g_\text{cand}$ to activate on the corners of the bounding box assembly underneath and glues of a complementary type $g_\text{cand}^*$ to activate on the interior corners of the outer corner gadgets. Because of the condition under which these signals are fired, an outer corner gadget will not signal its underlying inner shell corner gadget to dissolve until all of the outer shell corner gadgets neighbors are bound to the assembly. Consequently, even though the outer shell gadgets cause the inner shell between them and the assembly to dissolve, the outer shell will remain attached to the assembly on at least one corner until all outer corner gadgets have attached. Once the final outer corner gadget attaches however, the inner shell underneath will be able to fully dissolve and we will be left with our bounding box enclosed within but not attached to the outer shell. While the bounding box will be free to move (slightly) within the outer shell, it will be trapped inside of it due to their relative sizes.

Because the corners of the bounding box and interior corners of the outer shell have complementary glues, the corners of the bounding box assembly are able to bind to the interior corners of the outer shell; however, because the interior of the outer shell is larger than the bounding box itself, only 1 corner will be able to touch the outer shell at any given time, and thus to bind. The corner of the bounding box which happens to bind is elected leader and a special glue $g_\text{lead}$ on that corner is activated. Additionally, the binding of the bounding box assembly to the outer shell causes signals to propagate which cause the $g_\text{cand}^*$ glues on the outer shell to deactivate and then cause the outer shell to dissolve. We are then left with a bounding box with 1 corner ``elected as leader'' and containing a $g_\text{lead}$ glue from which the disassembly process can begin.

\subsection{Shape encoding}
Following the process of leader election on a bounding box, we are presented with a single corner with unique glues exposed indicating a leader tile. Here we describe the tiles of $E$ which allow for the the universal shape encoding function $f_e$ to be implemented on the shape contained in a bounding box. We use the term \emph{voxels} to reference the locations of $\mathbb{Z}^3$ in the bounding box, which may contain shape tiles, filler tiles, or no tiles as there may still be cavities within the box.

At a high level, the encoding of a shape is generated by a process which visits each voxel in the bounding box sequentially, and transfers the information of whether the voxel is inhabited by a filler tile or a shape tile to a new encoding assembly $\phi$. The set of all encodings of shapes $S = \{s_1,\ldots,s_n\}$ is $\Phi = \{\phi_1, \ldots, \phi_n\}$ where $\phi_i$ is the encoding of $s_i$ for $i=1,\ldots,n$. 
The first step in the process is for an \emph{encoding corner gadget} (see Figure \ref{fig:gadgets-encoding_gadget}) to bind to the corner elected as leader, and then construct a set of helper tiles around the bounding box.
Deconstruction is then carried out in \emph{slices}, where each slice is the set of voxels contained in a 2D subset of the bounding box. The starting voxel contains the tile elected leader (see Figure~\ref{fig:bounding-prism-slice}) and the orientation of the binding of the encoding corner gadget arbitrarily defines the orientation of the slices. For ease of explanation, once an orientation has been chosen by the attachment of the encoding corner gadget, we choose the $x$ and $y$ directions to correspond to the axes along a slice and the $z$ direction to be the axis perpendicular to $x$ and $y$ into the bounding box from the leader. Specifically, each $xy$ plane of the bounding box constitutes one slice.
The end result of the encoding process is a rectangular prism assembly of height 1 where the each tile corresponds to a unique location of the bounding box in $\mathbb{Z}^3$, and whose glues represent whether or not each location contains a shape tile (represented by a 1), or empty space inhabited by a filler tile or otherwise (represented by 0).
Additionally, information about the order in which tiles were deconstructed is included in $\phi_i$ for purposes of decoding and defining the width of a \emph{row}.
We note that the tiles in this section obey the careful dissolving property in Section~\ref{sec:dissolving}.

\begin{figure}
    \centering
    \includegraphics[width=6cm]{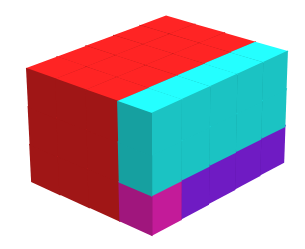}
    \caption{An example bounding box. The teal, fuchsia and purple tiles inhabit the slice of the bounding box of the $xy$ plane where $z=0$. The fuchsia tile, which was elected the ``leader'', is treated as the origin (0, 0, 0). The fuchsia and purple tiles inhabit the first row, where $y=0$. The red tiles demonstrate the remaining tiles of the bounding box. We note that these tile colors are reused in figures throughout the remainder of this section, however take on other meanings in their respective contexts.}
    \label{fig:bounding-prism-slice}
\end{figure}

\subsubsection{Creation of a deconstruction shell}
The first step of the encoding process is for an \emph{encoding corner gadget} (Figure~\ref{fig:gadgets-encoding_gadget}), similar in structure to the corner gadgets utilized in the leader election process, to bind to the leader corner.
We then treat that corner as the origin of our shape, where the directions of the $x$, $y$, and $z$ axes are shown in Figure~\ref{fig:deconstructor-shell_creation-0}.
This reference point and orientation allows us to assign coordinates to each voxel of the bounding box.
Of key importance during the deconstruction process is that the deconstructing supertile remains connected with strength 2 at all times.
It is given that the shape tiles are connected with strength 2, and filler tiles similarly connect to both shape tiles and each other with strength 2.
However, filler tiles are connected to only the 2 tiles which caused their cooperative placement and exterior filler tiles expose only strength 1 $g_f$ glues.
To ensure that during the deconstruction process no tiles prematurely disconnect from the bounding box (and to provide additional functionality during the deconstruction process), \emph{shell tiles} are added which create a shell around the bounding box and utilize the signals demonstrated in Figure~\ref{fig:fill-convex-site} to enable strength 2 connections with the exterior-most fill tiles.
At the end of the creation of the deconstruction shell (which we will also simply refer to as the `shell'), the bounding box will have all tiles on its faces covered, aside from those that are part of the first slice of the bounding box to be encoded.
The shell consists of three parts corresponding to tile types: (1) the shell base, tiles which cover one face of the bounding box, allow for communication between tiles in the shell and allow for cooperative binding of recognizer tiles (to be described), (2) shell slices, which cover 3 faces of the bounding box (aside from the tiles that are part of the first slice of the bounding box) and are removed after each slice of the bounding box is encoded, and (3) a cap, which covers the remaining face and allows for the encoding process to sense when it has completed the decoding process.

\paragraph{Shell Base Formation}

The growth of the shell base is the first step of process and is initialized from the encoding corner gadget; cooperative growth of shell base tiles begins along the $xz$ plane, demonstrated in Figure~\ref{fig:deconstructor-shell_creation-0}.
This growth is initiated by the tile of the encoding gadget in the $(0,-1,0)$ location, which activates glues on its $+x$ and $+z$ faces leading to base tiles being able to cooperatively bind with the encoding corner gadget and the bounding box.
Once bound to the shape, they activate glues similar to the encoding gadget to continue the binding until no feasible binding sites remain.

\begin{figure}
    \centering
    \includegraphics[width=4cm]{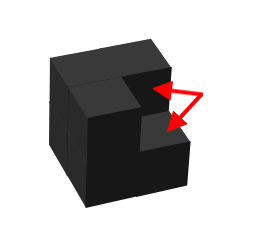}
    \caption{Encoding corner gadget utilized to bind to the elected corner. The faces with arrows pointing towards them are those which begin with glues in the \texttt{on} state, complementary to the leader election glues.}
    \label{fig:gadgets-encoding_gadget}
\end{figure}

\begin{figure}[htp]
    \centering
    \includegraphics[width=10cm]{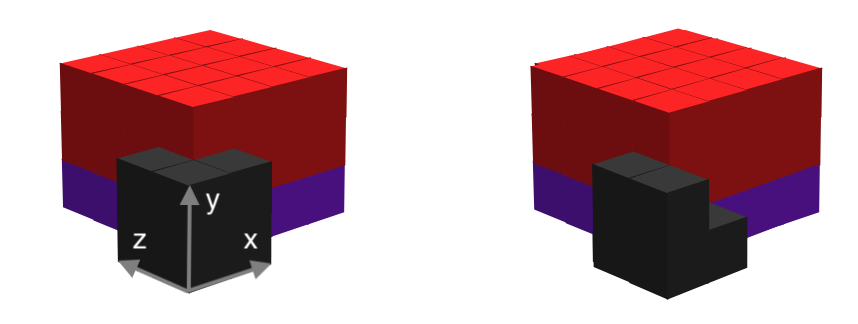}
    \caption{(Left) The encoding corner gadget (black) binding to the leader corner. Purple tiles are deconstruction shell base tiles whose growth is initiated after binding of the encoding corner gadget to the bounding box. Red tiles indicate the bounding box, comprised of both filler tiles and shape tiles. (Right) After initial binding of the encoding corner gadget to the elected corner, glues are deactivated in order to allow for the encoding process to access all voxels in the first slice of the bounding box}
    \label{fig:deconstructor-shell_creation-0}
\end{figure}

\paragraph{Shell Slice Formation}

To ensure the shell is complete before the remainder of the encoding process proceeds, the shell growth process proceeds away from the origin in the $+z$ direction only after shell slice tiles have entirely surrounded an $xy$ plane of the bounding box.
Each shell slice which grows is only a single tile wide.
The growth of the first shell slice tile is enabled by the activation of a strength 1 glue on the encoding corner gadget on the tile in the $(-1,-1,0)$ location along its $+z$ face, and with the adjacent shell base tile.
We note that this growth is initiated at the same time as the shell base tiles, however will not begin until a shell base tile is bound to the bounding box in the appropriate location.
Cooperative binding sites between the growing shell slice and the face of the bounding box allow for shell tiles to be placed in the $+y$ direction until reaching the edge of the bounding box, as shown in Figure~\ref{fig:deconstructor-shell_creation-1}.
A \emph{shell detector gadget} allows for the shell slice tiles to sense they have reached an edge between two faces of the bounding box.
For the growth of shell slice tiles to continue in the $+x$ direction along the adjacent face, a tile must be placed on the $+y$ face of the most recently placed shell slice tile - the binding of the shell detector gadget to the slice tile and a tile of the bounding box activates a strength 2 glue, allowing a second type of slice tile to bind which contains a complementary strength 2 glue, exposing strength 1 glues along all faces adjacent to tile face containing the strength 2 glue.

Shell growth continues until similarly reaching the edge in the $+x$ direction, where a shell detector gadget binds and causes the prior process to be repeated.
Growth of shell slice tiles then continues in the $-y$ direction along the face of the bounding box until overlapping with the shell base tiles; when a shell slice tile binds to a shell base tile, a message returns to the shell slice tile which initiated the growth of the current slice.
Upon sensing this message, a strength 1 glue is activated on the face of shell tile which initiated growth of the current shell slice layer in the $+z$ direction.
The shell growth process continues until reaching the exterior most slice of the bounding box and cooperative growth is no longer possible.

\begin{figure}[htp]
    \centering
    \includegraphics[width=4cm]{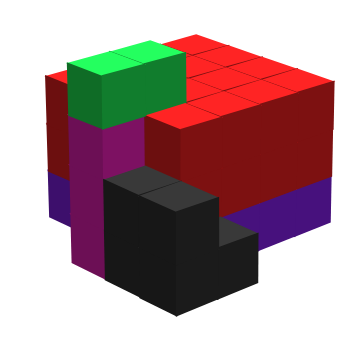}
    \caption{Shell slice tiles (fuchsia) grow along the edge of the bounding box. Growth in the $+y$ direction is initiated from the encoding corner gadget, and continues until reaching the edge of the bounding box. Green tiles are a shell detector gadget, allowing for the shell tiles to sense the edge of the bounding box and activate a strength 2 glue, causing a shell tile with a complementary glue to extend into the $+y$ direction}
    \label{fig:deconstructor-shell_creation-1}
\end{figure}

\begin{figure}
    \centering
    \includegraphics[width=4cm]{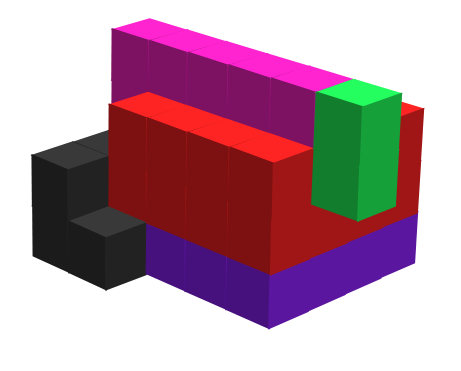}
    \caption{Growth in the $+x$ direction is no longer possible by the shell tiles (fuchsia), and the shell growth duple (green) binds, allowing for shell tiles to continue growth}
    \label{fig:deconstructor-shell_creation-2}
\end{figure}

\paragraph{Shell Cap Formation}

At this point, a 4-tile \emph{capping gadget} binds to an exposed, unique strength 1 glue exposed on the $+z$ face of outermost slice tile and either a $g_f$ or $g_x$ glue on the bounding box (Figure~\ref{fig:deconstructor-shell_creation-3}).
We note that this unique glue is activated alongside the shell slice growth glue, however geometric hindrances prevent the capping gadget from binding at any point but the edge of the bounding prism.
This gadget, similar to the shell detector gadget, causes a strength 2 glue to be activated on the outward-most shell slice tile to place a capping tile. 
This allows for a final set of capping tiles to enclose the remainder of the bounding prism; once the capping tiles complete the shell, a message is sent back to the encoding corner gadget that the encoding process can begin (Figure~\ref{fig:deconstructor-shell_creation-4}).
The encoding process begins with a signal to deactivate the glues which bind the tiles which provided geometric guidance to the encoding corner gadget and activating a new strength 1 glue, $d_{\oplus,0}$.

\begin{figure}[htp]
    \centering
    \includegraphics[width=4cm]{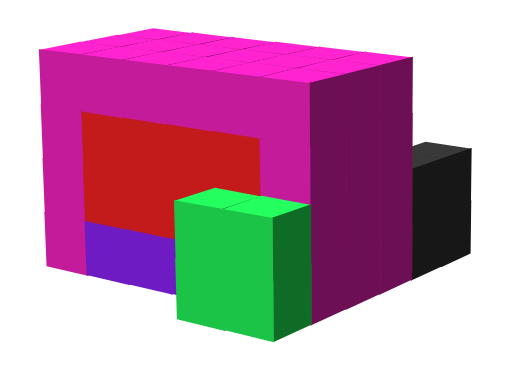}
    \caption{4-tile capping gadget in green binding to exposed shell tiles after all shell slice tiles have been added to the bounding box.}
    \label{fig:deconstructor-shell_creation-3}
\end{figure}

\begin{figure}[htp]
    \centering
    \includegraphics[width=4cm]{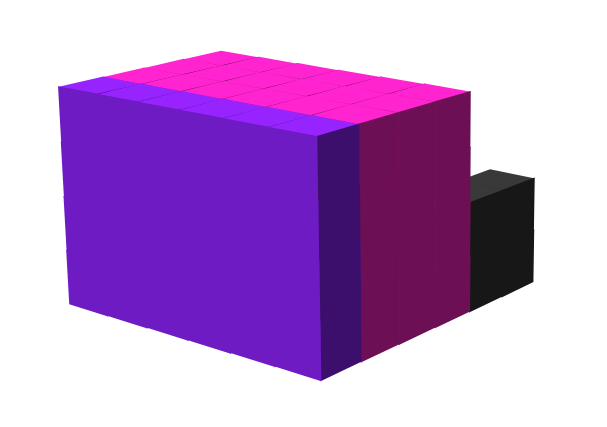}
    \caption{Capping layer fully added to the deconstruction shell}
    \label{fig:deconstructor-shell_creation-4}
\end{figure}

\subsubsection{Encoding Assembly via Bounding Box Deconstruction}\label{sec:encoding-description}
With the deconstruction shell created around the bounding box, we are now able to begin the process of building the encoding structure ($\phi$) by deconstruction.
Before continuing into the details of the encoding process, we provide a description of how the information provided by the location of tiles in a bounding box is encoded into binary values.
Beginning with the origin point $(0,0,0)$, we read the tile type information for each tile in the first row sequentially by incrementing the $x$-coordinate; for example, the second tile read is in the voxel with coordinates $(1,0,0)$.
Once all tiles in the current row have been read, we jump to the next row up. 
For example, in a $3 \times 4 \times 5 \:\: (x\times y \times z)$ bounding box shown by Figure~\ref{fig:deconstructor-encoding_order-0}, the final location in the first layer is $(2,0,0)$.
The next tile encoded is at coordinates $(2,1,0)$.
We then encode tiles heading towards the origin; the next voxel encoded in our example encoding would be $(1,1,0)$.
Upon arriving at the coordinate $(0,1,0)$ (the last of the row moving in this direction) we jump to the next row up, then encoding $(0,2,0)$.
By this process of visiting every tile in a slice in a `zig-zag' pattern, we are able to encode the information regarding any slice of a bounding box sequentially. 

\begin{figure}[htp]
    \centering
    \includegraphics[width=5.5in]{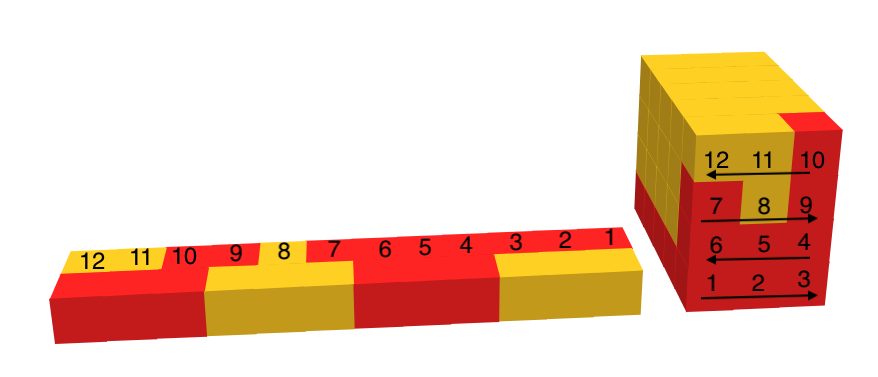}
    \caption{(Right) An example $3 \times 4 \times 5$ shape, (Left) The first two rows of its encoding assembly. The first (closest) row encodes the direction followed for each row of a slice, and the second row encodes the presence of a shape tile or filler tile in each location. Yellow tiles represent `0', red tiles represent `1'. Shape tiles and `+' direction growth are encoded as 0, fill tiles and `-' are encoded as 1. The encodings of additional slices only need a single row each, since the growth direction is shared across rows of consecutive slices.}
    \label{fig:deconstructor-encoding_order-0}
\end{figure}

The very first row of the encoding subassembly contains additional information regarding the direction of the growth in our zig-zag pattern, and as a byproduct we also are able to easily retrieve the width of the rows of tiles.
We compare the $x$ values in the coordinates $(x,y,z)$ between the first tile of a row and the last tile of a row by subtracting the $x$ value between the two such that $\Delta x = x_\text{last} - x_\text{first}$.
If a tile is contained in a row where $\Delta x > 0$ we denote this growth in the positive (`+') direction.
Alternatively, if $\Delta x < 0$ we denote this growth in the negative (`-') direction.
We encode `+' direction growth as a `0', and `-' direction growth as a `1'.
For example, in Figure~\ref{fig:deconstructor-encoding_order-0}, the first row begins growth at  tile 1, the origin $(0,0,0)$ and ends at $(2,0,0)$, leading to  $\Delta x = 2-0 = 2$.
In contrast, the second row begins at $(2,1,0)$ and ends at $(0,1,0)$, leading to $\Delta x = 0 - 2 = -2$.
We can see that the direction tiles placed in front of row 1 are encoded as 0, and encoded as 1 for row 2.
All further slices only add a single tile for each voxel, as the direction for all tiles which have the same $x,\:y$ value in their tuple $(x,y,z)$ is the same (e.g., the tile in $(1,0,0)$ which is the second tile placed in the first slice; the tile in $(1,0,1)$ is the second tile placed in the second slice).

For simplicity, the differentiation between shape and fill tiles is excluded in remaining figures in this section.

\paragraph{First Slice Deconstruction}\label{sec:decon-first-slice}

To encode the information contained in the first slice of the bounding box, one of four \emph{recognizer} tiles, $\text{rec}_0 = \{0^\oplus_0,1^\oplus_0,0^\circleddash_0,1^\circleddash_0\}$, cooperatively bind to a tile in the bounding box and the corner gadget (or the tiles added to the corner gadget, as will be shown shortly).
The recognizer tiles detect either a fill tile with glue $g_f$ or a shape tile if the glue is of type $g_x$.
We note that the activation of the $d_{\oplus,0}$ glue on the encoding corner gadget allows for only two possible tiles to bind to the origin location.
$0^\oplus_0$ tiles start with active $d_{\oplus,0}^*$ and $g_f^*$ glues on adjacent edges, $1^\oplus_0$ tiles start with active $d_{\oplus,0}^*$ and $g_x^*$ glues on adjacent edges.
The two remaining tile types are utilized for `-' direction growth.
The $\text{rec}_0$ tiles contain glues which allow for specific growth patterns unique to the first slice; the recognizer tiles for the remaining slices are demonstrated in Section~\ref{sec:encoder-remaining-slice}.

\begin{figure}[htp]
    \centering
    \includegraphics[width=6cm]{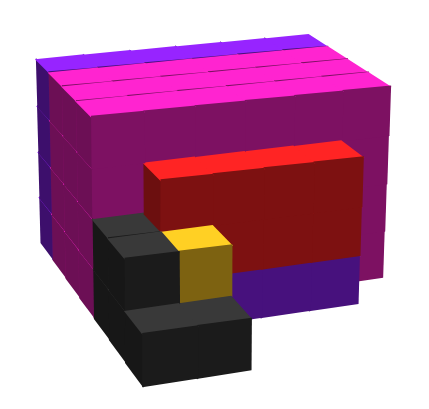}
    \caption{Binding of the first recognizer tile causes additional signals which initiate growth of tiles on the encoding corner gadget}
    \label{fig:deconstructor-first_row-1}
\end{figure}



After this binding occurs, 2 sets of signals are activated. First, the binding with the encoding corner gadget causes the activation of a strength 2 glue on the encoding corner gadget which allows for the growth of an additional layer of tiles in the $-z$ direction adjacent to the encoding gadget, shown in Figure~\ref{fig:deconstructor-first_row-1}.
Secondly, signals are sent to the face of $\text{rec}_0$ tile opposite the bounding prism which allows for growth of two \emph{messenger} tiles; a strength 1 glue is activated on the $-y$ face of the outermost tile (Figure~\ref{fig:deconstructor-first_row-2}).
Messenger tiles contain glues which allow for the recognizer tiles to pass information regarding the direction of growth and the tile type of the shape voxel which they are adjacent to.
This, along with activation of glues from the encoding corner gadget itself allows for cooperative growth of a path along the edge of the encoding corner gadget (Figure~\ref{fig:deconstructor-first_row-3}).
Once the growth of tiles reach the tile of the encoding corner gadget located at $(-1,-1,-1)$, cooperative growth halts. An \emph{encoding detector gadget} (green) is able to bind to the glue on the encoding corner gadget and the outermost encoding tiles placed due to cooperative growth.
This binding of the messenger tile with the encoding detector gadget causes the activation of a strength 2 glue which allows for binding with the first tile of the encoding shape along the $-x$ axis (this tile ends up becoming the nucleation site for decoding as well).
Once the first tile of the encoding structure is added, additional tiles cooperatively bind to the tiles of the encoding structure and the shell slice tiles (but not the shell cap tile).
This growth is visualized in Figure~\ref{fig:deconstructor-first_row-4}.

\begin{figure}[htp]
    \centering
    \includegraphics[width=4cm]{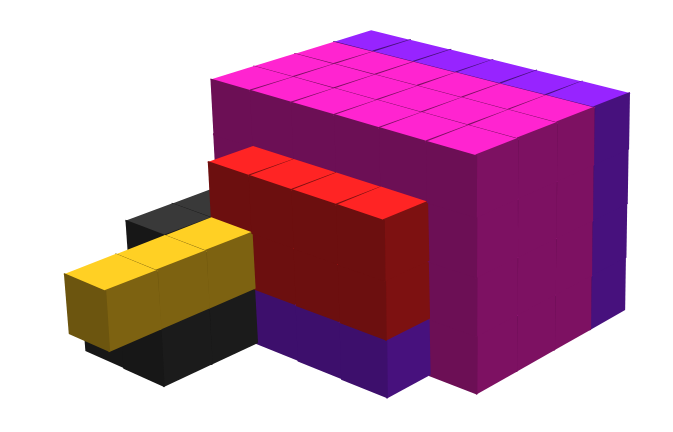}
    \caption{Two messenger tiles, uniquely mapped to the activation of $\text{rec}_0$ tiles allow for growth to extend out past the tiles of the encoding corner gadget for purposes of cooperative growth. Note that strength 1 glues are activated on 4 faces of the outermost yellow tile, as we cannot guarantee in which rotation the tile will bind}
    \label{fig:deconstructor-first_row-2}
\end{figure}

\begin{figure}[htp]
    \centering
    \includegraphics[width=8cm]{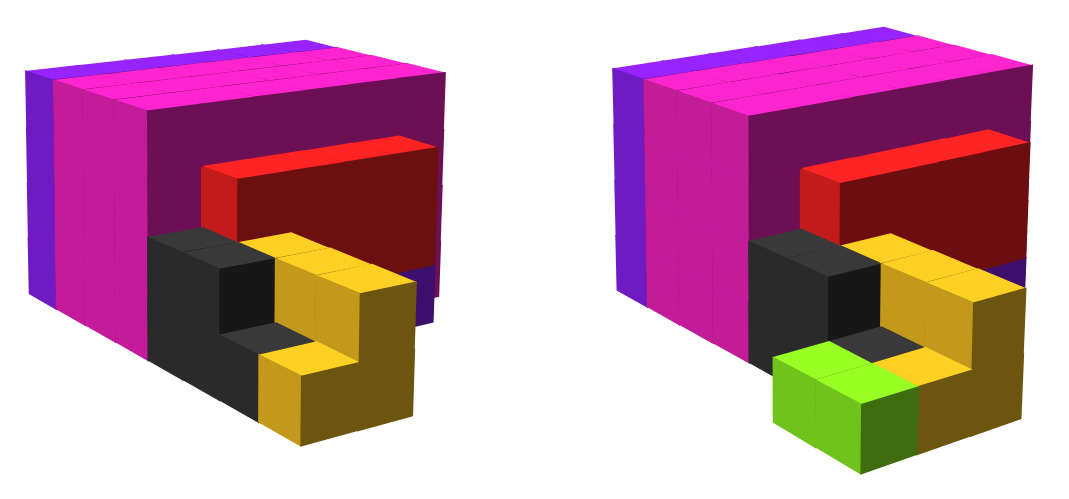}
    \caption{(left) Enabled by the outwards growth of the recognizer tiles shown in Figure~\ref{fig:deconstructor-first_row-2}, tiles are able to cooperatively grow outwards. (right) An encoding detector gadget (green) can then attach to exposed glues from the recognizer tile growth and the encoding corner gadget, allowing for both the encoding corner gadget and recognizer tiles to `sense' that we have reached the outermost edge}
    \label{fig:deconstructor-first_row-3}
\end{figure}

\begin{figure}[htp]
    \centering
    \includegraphics[width=8cm]{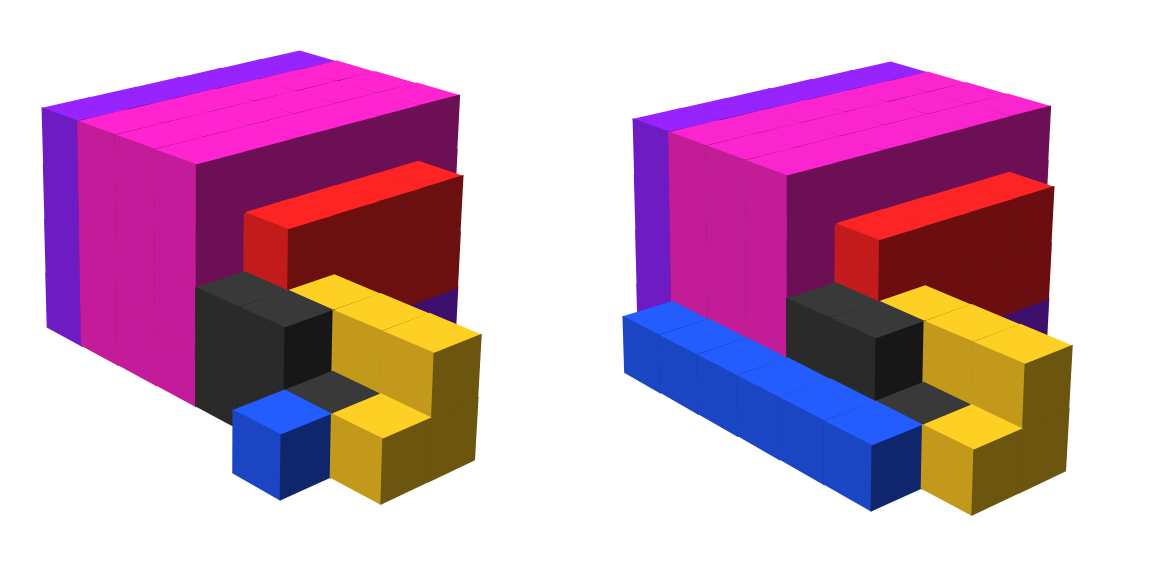}
    \caption{(Left) The first tile of the encoding structure (blue) is bound to the encoding corner gadget, (Right) cooperative growth of tiles with the first row of shell tiles}
    \label{fig:deconstructor-first_row-4}
\end{figure}

After the encoding structure tile attaches to the encoding corner gadget, the first tile of the encoding structure exposes a strength 2 glue along its $-z$ face, allowing for binding of a messenger tile which redirects growth in the $+y$ direction.
Three more tiles are added in succession - a helper tile with a strength 2 glue to allow for growth in the $+y$ axis, a directionality encoding tile and a 0/1 encoding tile.
The three tiles are placed in this order, growing in the $+z$ direction as pictured in Figure~\ref{fig:deconstructor-first_row-5}.
We have now encoded the information of the tile type which inhabits (0, 0, 0), along with the direction of growth.
Once the 0/1 encoding tile and the directionality encoding tile bind to the encoding structure, a message is passed backwards through the messenger tiles towards the recognizer tile, deactivating glues and turning into size 1 junk (i.e., dissolving the tiles) as the message propagates along the edge of the encoding corner gadget.
The purpose of deactivation is to allow for reuse of the same path along the encoding corner gadget.
This leaves only the tiles on the encoding structure, and the first messenger tile which attached to the recognizer.
Upon reaching the recognizer tile, it exposes a glue in the $-y$ axis to signal to the encoding corner gadget that this recognizer has successfully encoded its adjacent voxel.
After binding with this glue, the encoding corner gadget signals for the addition of two tiles in the $+x$ direction (using glues on the shell tiles of the $xz$ plane for cooperation) which activate the $d_{\oplus,0}$ glue in the $+y$ direction, allowing for the next recognizer tile to be placed.
The prior process is then repeated, which then creates a series of message tiles to grow back to the encoding structure (Figure~\ref{fig:deconstructor-first_row-6}), using many of the same voxel locations.
Additionally, an $r$ glue is activated on the recognizer tile's direction of growth (in this case, $+x$) in order to allow for recognizer tiles to detect when they need to activate $d_{\oplus,0}$ or $d_{\circleddash,0}$ glues.

\begin{figure}[htp]
    \centering
    \includegraphics[width=8cm]{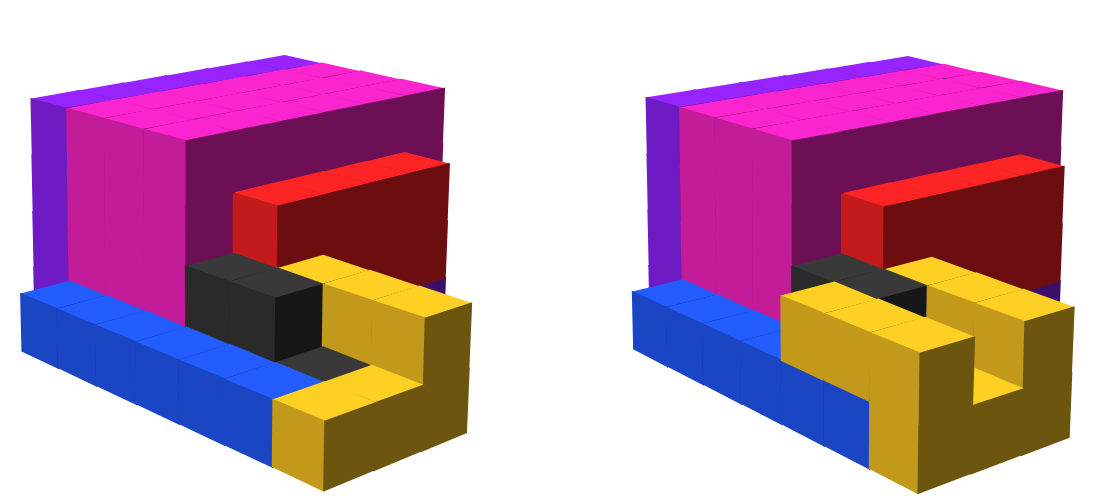}
    \caption{(left) A messenger tile binds to the first row of the encoding structure, activating a strength 2 glue to allow for cooperative placement of the encoded direction and tile type. (right) the first tile placed on the encoding structure is an encoding of direction, and the second is the tile encoding the type of the tile (i.e. shape or filler)}
    \label{fig:deconstructor-first_row-5}
\end{figure}

\begin{figure}[htp]
    \centering
    \includegraphics[width=8cm]{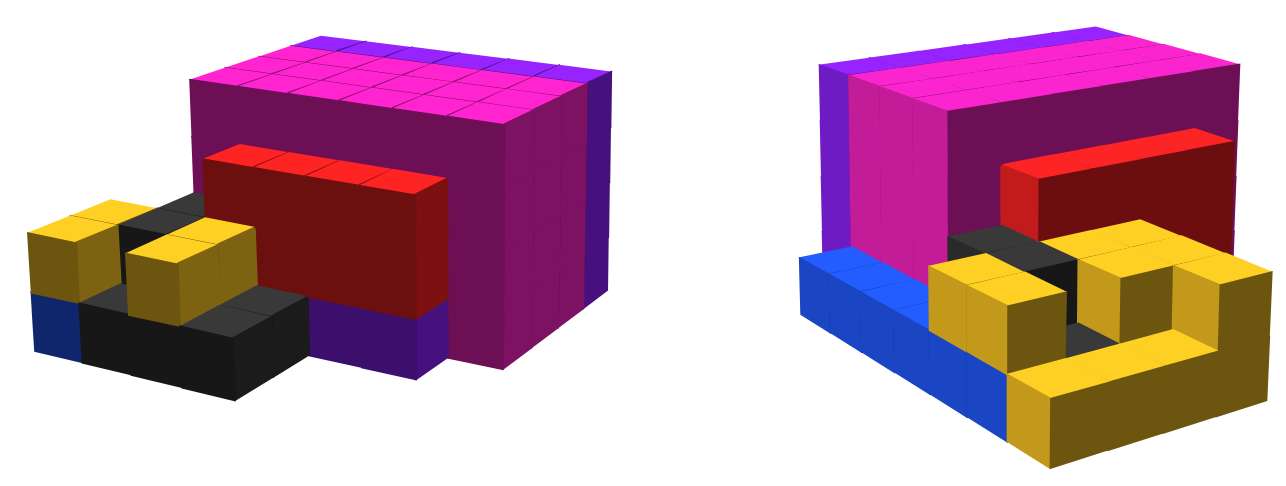}
    \caption{(Left) Resulting structure after deconstruction of messenger tiles, (Right) Addition of next tile in shape reuses the edge alongside the corner gadget for cooperative growth}
    \label{fig:deconstructor-first_row-6}
\end{figure}


This process repeats until recognizer tiles have encoded all information of the first row of the shape.
Once the final tile of the row has been placed, there exists no tile for which the tiles which extend the encoding gadget can bind to.
Instead of cooperative binding allowing for the addition of a recognizer tile, a \emph{row completion gadget} binds to the $r$ glue exposed and either a fill or shape tile.
The tile which bound to the row completion gadget activates a $d_{\circleddash,0}$ glue which allows for cooperative binding with the row above after the $r$ glue is bound, as shown in Figure~\ref{fig:deconstructor-first_row-7}.
Since the first row is `+' direction, the row growth then changes to `-' direction .
We note that 2 different versions of this row completion gadget exist to terminate `+' and `-' direction growth - the glues present are the same, but the glue locations are mirrored.
Upon binding of the `+' direction recognizer tile, the row completion gadget detects the type of tile above the row completed by activating a glue in the $+y$ direction and the $-x$ direction.
This allows for the binding of a \emph{row detector gadget} if an additional row needs to be encoded.
We will describe the case where the row detector gadget is unable to bind shortly.
If the row completion gadget senses an additional row due to the binding of a row detector gadget, the $r$ glue holding the row completion gadget to the direction `0' tile then deactivates, leaving it free to dissolve.
Message tiles mapped to the `-' direction recognizer tiles (teal) allow for expanding of the encoding structure similar to the first row and `+' direction recognizer tiles; a recognizer tile binds to a tile on the bounding box, messenger tiles allow for the growth of a path of tiles along the edge of the encoding gadget and then extend the encoding gadget and encode both the direction of growth and tile type.
Figure~\ref{fig:deconstructor-first_row-8} demonstrates this process, along with cooperative growth on top of the prior row.

\begin{figure}[htp]
    \centering
    \includegraphics[width=10cm]{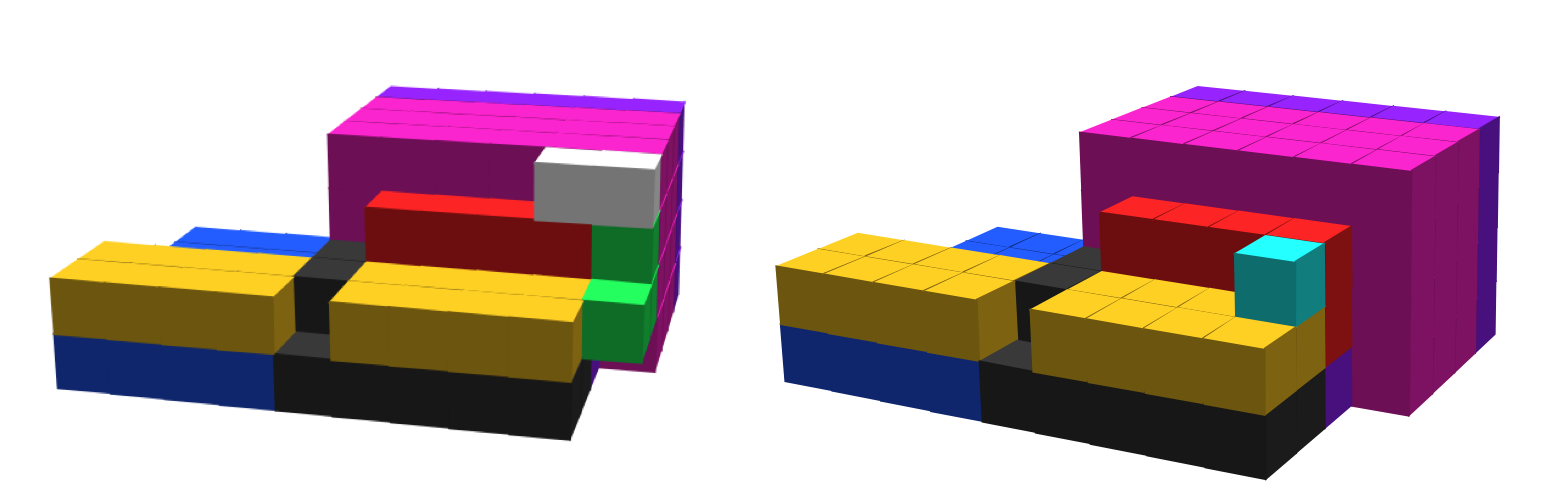}
    \caption{(Left) Row completion gadget (green) binds to supertile upon completion of the encoding of the first row. Row detector gadget (white) indicates to the detector gadget that an additional row needs to be encoded. (Right) Signals allow for growth to continue with a recognizer tile of direction `1'.}
    \label{fig:deconstructor-first_row-7}
\end{figure}

\begin{figure}[htp]
    \centering
    \includegraphics[width=10cm]{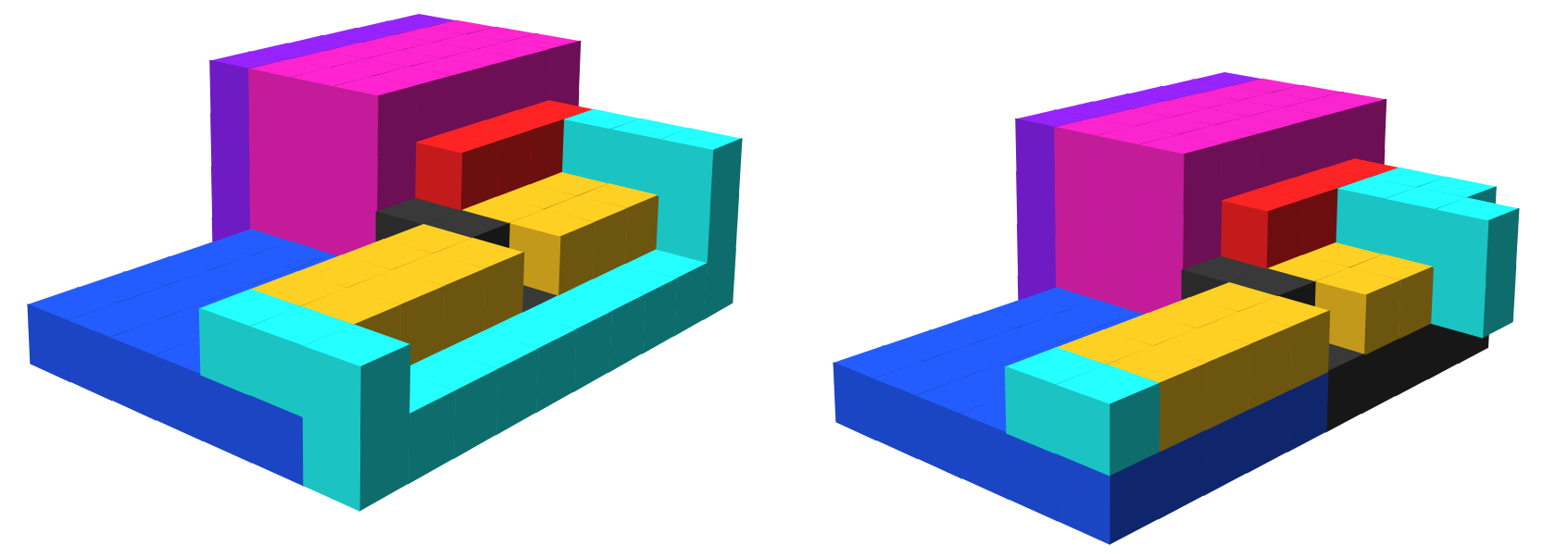}
    \caption{(Left) Growth of direction `1' messenger tiles directly mimics that of direction `0'. (Right) Direction `1' tile recognition occurs in the opposite direction}
    \label{fig:deconstructor-first_row-8}
\end{figure}

At some point, a row completion gadget will bind to a location where there exists no row above the previously encoded row.
This condition indicates that the slice has been completely encoded.
To detect this situation the row completion gadget has a glue which allows for cooperative binding of a \emph{slice completion gadget} only if the topmost tile of the gadget is exposed; this only occurs in the situation illustrated in Figure~\ref{fig:deconstructor-first_row-9}.
After binding of the slice completion gadget, the gadget activates a glue in the $+z$ direction that, when binding to complementary glues on the shell tiles, sends messages which dissolve (1) the shell in the next slice, (2) the recognizer tiles of the current slice, and (3) the slice of the shape itself.

\begin{figure}[htp]
    \centering
    \includegraphics[width=10cm]{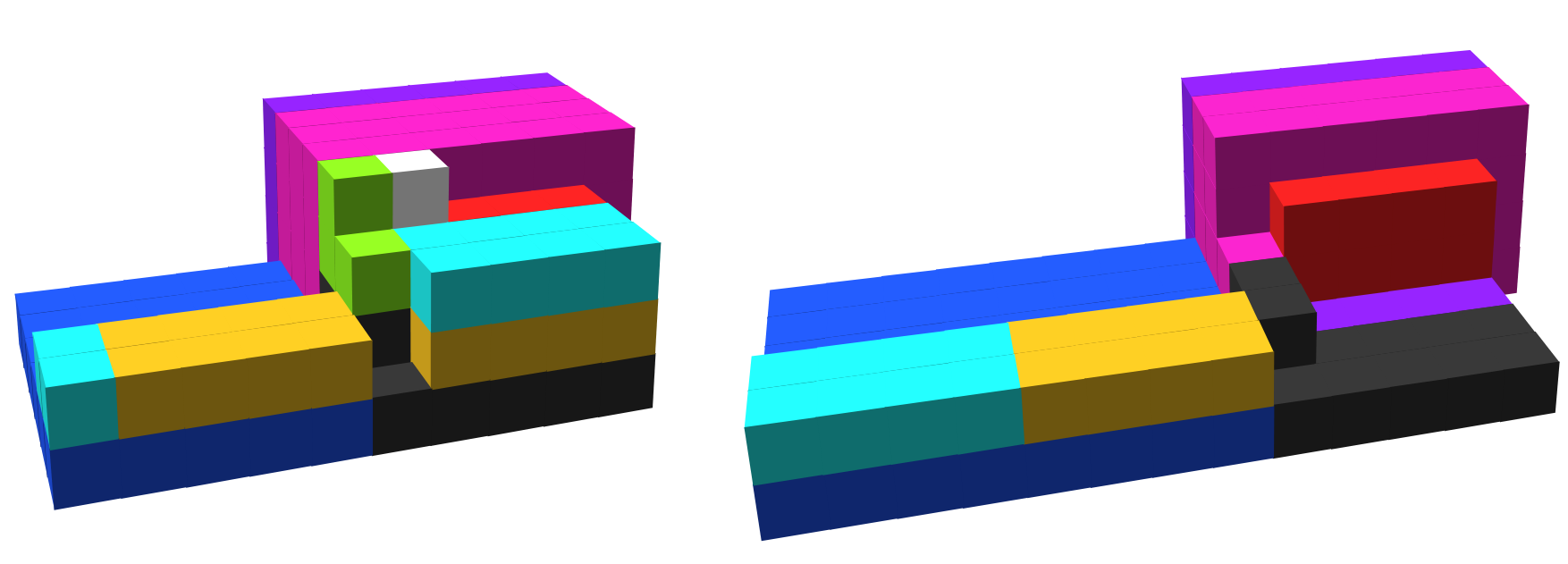}
    \caption{(Left) The row completion gadget has its topmost tile above the topmost row of the current slice, allowing for the slice completion gadget (white) to bind to the row completion gadget to indicate slice has been fully encoded. (Right) Beginning state of next slice growth after all tiles involved in encoding the current slice are turned into size 1 junk.}
    \label{fig:deconstructor-first_row-9}
\end{figure}

\paragraph{Remaining Slice Deconstruction and Termination}
\label{sec:encoder-remaining-slice}
After the encoding of the first slice has completed, we must then deconstruct the remaining slices with similar, but slightly modified dynamics.
This is due primarily to the fact that the encoding structure now contains directionality information, which remains constant across slices.
Instead of growing along the edge of the encoding corner gadget and the encoding structure, messenger tiles grow `over' themselves - they stay in the same $xy$ plane.

We add a new set of tiles $\text{rec} = \{0^\oplus,1^\oplus,0^\circleddash,1^\circleddash\}$ which allows for modified message tile growth in order to encode voxel information on the encoding structure.
We note that the base fill tiles expose glues complementary to these tile types to allow for cooperative binding of rec tiles of type $d_\oplus^*$ (as they are responsible for first row growth, which is in the `+' direction).
This allows for tiles of type $0^\oplus$ or $1^\oplus$ to bind to the first row, depending upon the tile in the slice (i.e. if its a shape or filler tile).
The growth dynamics of the messenger tiles differ significantly from the messenger tiles which are mapped to the $\text{rec}_0$ tiles.
As demonstrated in Figure~\ref{fig:deconstructor-intermediate_rows-0}, for `+' growth recognizer tiles a strength 2 glue activates to bind a messenger tile to the recognizer tile in the $+y$ direction.
Strength 1 glues are activated on all faces of this messenger tile to allow for cooperative binding of additional messenger tiles to continue in the $-x$ axis towards the encoding structure.
Once the messenger tile can no longer cooperatively bind to the encoding corner gadget, a \emph{messenger detector gadget} is able to attach to the messenger tile and the encoding corner gadget, activating signals allowing the growth of messenger tiles to place a tile encoding on the encoding structure.
After placement of this encoding tile on the encoding structure, a message is returned to the recognizer tile indicating that the tile has been encoded, allowing for messenger tiles to dissolve and signal to the base tile that encoding is complete, activating a glue to have its neighbor turn \texttt{on} a $d_\oplus^*$ glue.
This process continues across the first row, as shown in Figure~\ref{fig:deconstructor-intermediate_rows-1}.

\begin{figure}[htp]
    \centering
    \includegraphics[width=10cm]{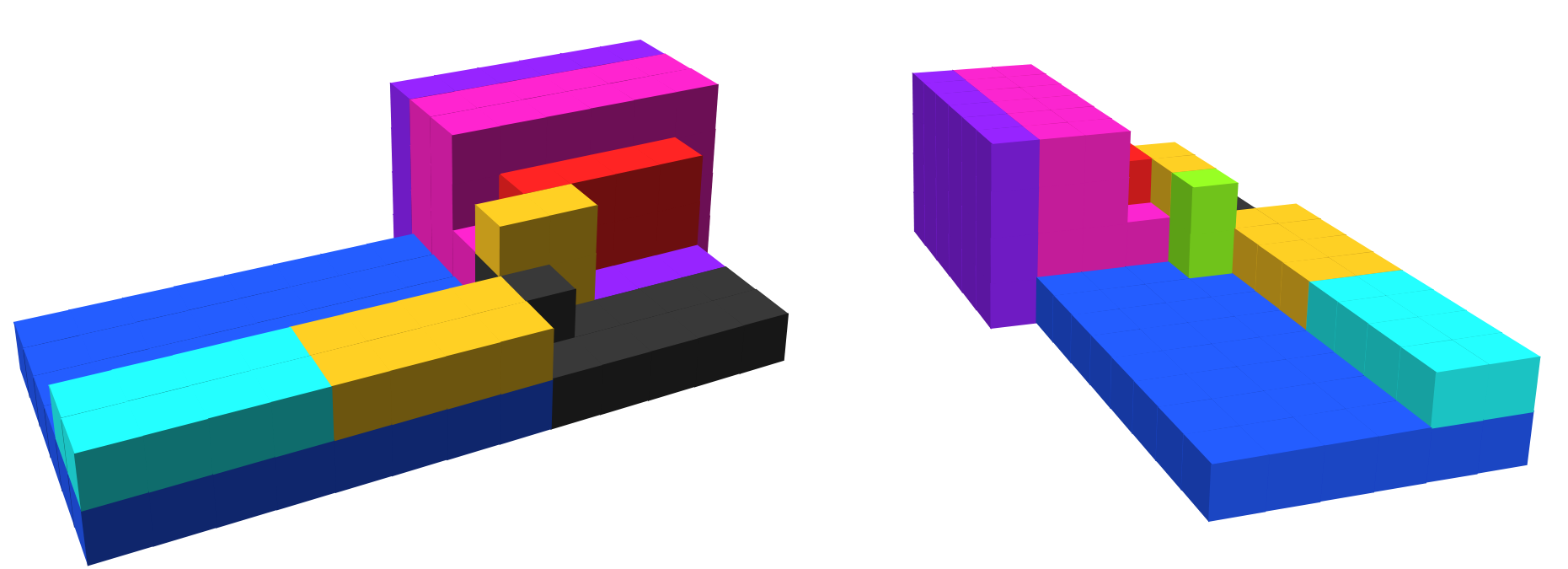}
    \caption{(Left) Direction `0' growth requires the ability to grow over previously placed tiles. (Right) Similar to the growth of the encoding structure, we require a messenger detection gadget (green) to enable the messenger tiles to sense when they have grown to the edge of the current encoding.}
    \label{fig:deconstructor-intermediate_rows-0}
\end{figure}

\begin{figure}[htp]
    \centering
    \includegraphics[width=6cm]{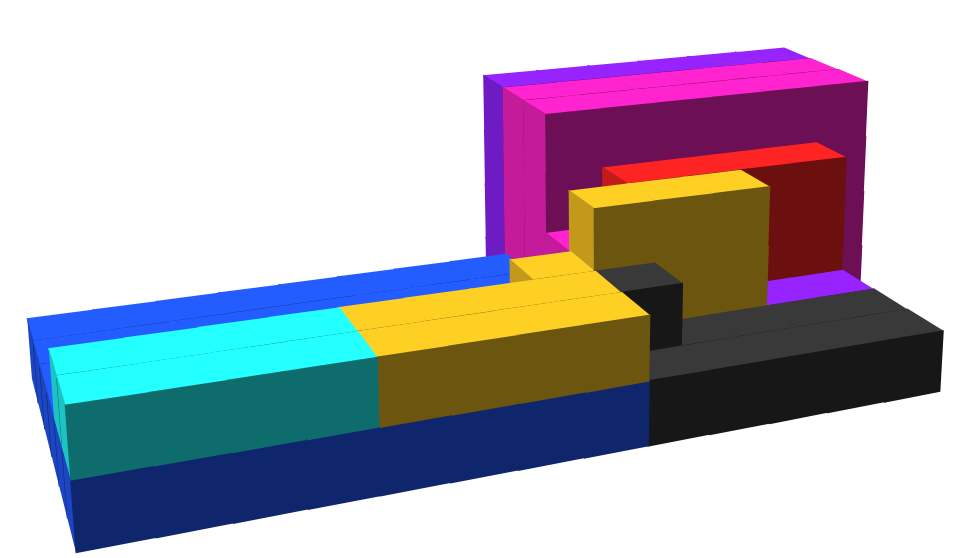}
    \caption{The second recognizer tile binds to bounding box, causing growth in the $-x$ axis to place an encoding tile on the encoding structure.}
    \label{fig:deconstructor-intermediate_rows-1}
\end{figure}

At the end of the growth of a row, we use the alternate form of the row completion gadget (i.e., glues present on $+x$ face of gadget, instead of $-x$) utilized in Section~\ref{sec:decon-first-slice} to sense the completion of a row by binding to the last recognizer tile and the bounding prism.
This causes the recognizer tile which bound to the row completion gadget to activate a $d_\circleddash$ glue in the $+y$ axis, allowing for the reversal of growth direction (Figure~\ref{fig:deconstructor-intermediate_rows-2}).

\begin{figure}
    \centering
    \includegraphics[width=10cm]{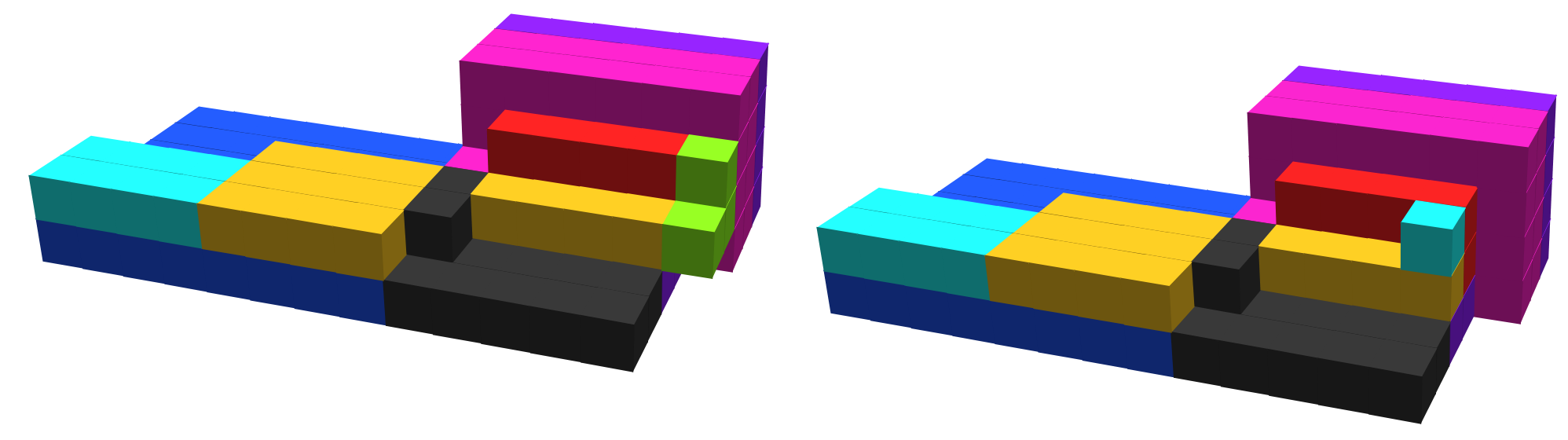}
    \caption{(Left) After the last tile in the row has been successfully encoded, a row completion gadget (green) is able to bind and enable the activation of a $d_\circleddash$ glue. (Right) The first negative direction tile (teal) binds to the top of the last recognizer tile of the prior row.}
    \label{fig:deconstructor-intermediate_rows-2}
\end{figure}

The `-' direction recognizer tile is able to utilize only cooperative binding to place its messenger tiles (instead of relying on a strength 2 glue to grow in the $+y$ axis first) in the $-x$ axis, cooperatively growing along the top of the prior row.
This process continues until the binding of a messenger detection gadget, resulting in a placement of a tile on the encoding structure (Figure~\ref{fig:deconstructor-intermediate_rows-3}).

\begin{figure}[htp]
    \centering
    \includegraphics[width=6cm]{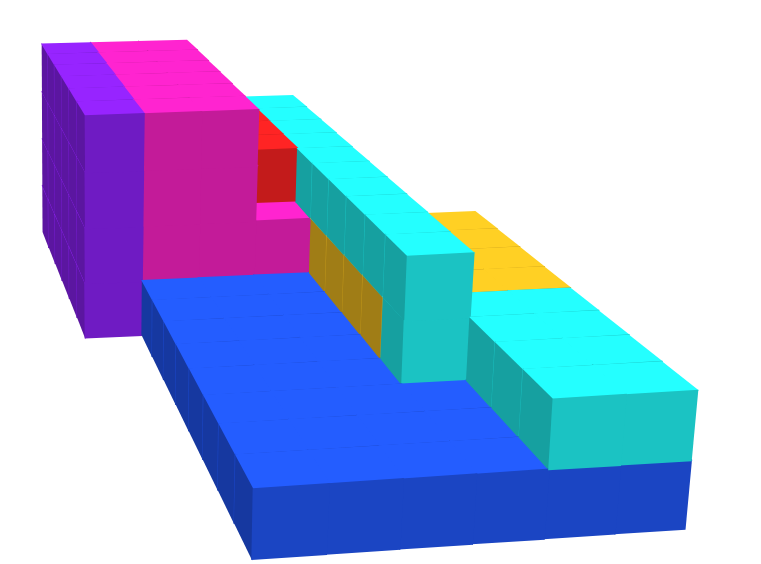}
    \caption{After the detection gadget binds, the negative direction tile messengers (teal) place a tile on the encoding structure.}
    \label{fig:deconstructor-intermediate_rows-3}
\end{figure}

Once a row completion gadget binds to the final recognizer tile along with a \emph{slice completion gadget} (see Figure~\ref{fig:deconstructor-intermediate_rows-5}), the tiles which comprise the current slice of the bounding prism, its recognizer tiles and the shell of the next slice are all dissolved.
We note an edge case where a voxel may be missing a tile from the bounding prism generated (see Figure~\ref{fig:deconstructor-intermediate_rows-6}).
This case arises in situations where there exists either some width 1 cavity (similar to the bent cavity in Figure~\ref{fig:bent-cavity}) and the binding of a filler tile blocks diffusion for other filler tiles, or an in enclosed cavity which is unreachable by filler tiles before deconstruction.
Since this encoding tileset also includes the tiles which generate the bounding prism, there exist filler tiles present to be attach into such a location.
As cooperative binding is required between a face of the bounding prism and a face of either a base tile or recognizer tile, the encoding process will not progress until a filler tile attaches to that location and a $g_f$ glue is exposed (Figure~\ref{fig:deconstructor-intermediate_rows-6}). 

\begin{figure}[htp]
    \centering
    \includegraphics[width=6cm]{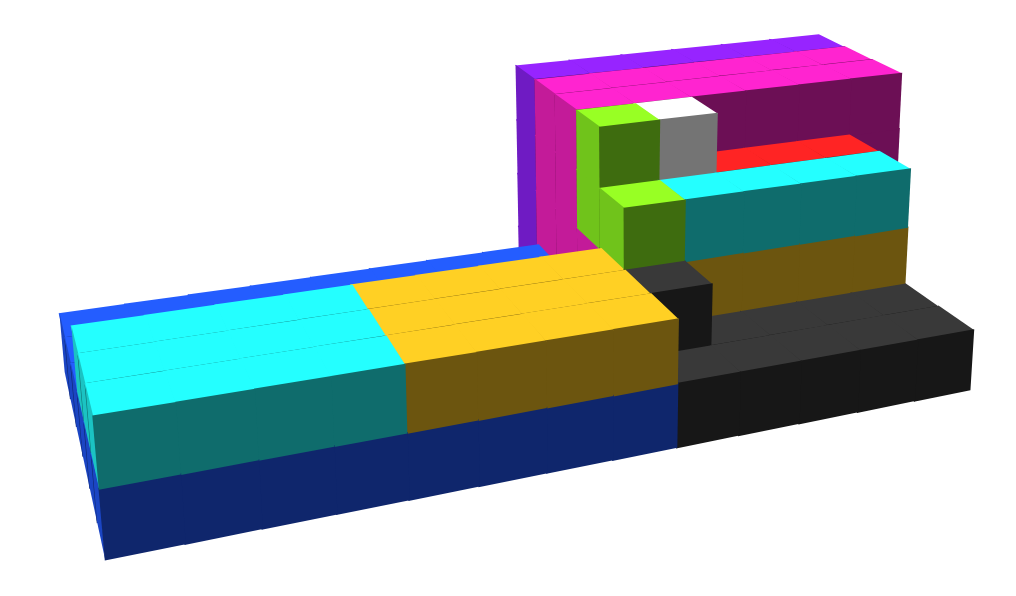}
    \caption{Slice completion gadget (white) binds after row completion gadget binds to the final row of a slice, identical to the process for first slice.}
    \label{fig:deconstructor-intermediate_rows-5}
\end{figure}

\begin{figure}[htp]
    \centering
    \includegraphics[width=13cm]{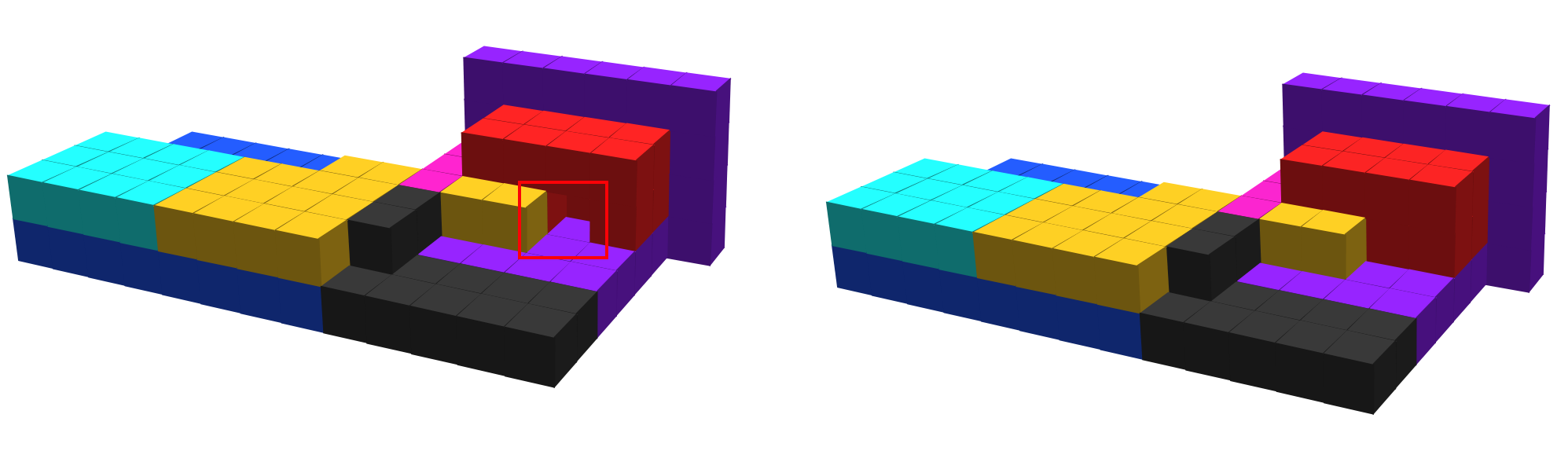}
    \caption{(Left) A tile missing from the bounding prism undergoing the encoding process, highlighted by a red box. We note that this exact void location would not be possible in a valid bounding prism, however it is presented for explanatory value. (Right) Encoding halts until a filler tile binds in the void, ensuring that encoding process does not skip a voxel.}
    \label{fig:deconstructor-intermediate_rows-6}
\end{figure}

Once this process reaches the final slice, we end up with an exposed set of tiles in the bounding prism which are able to be encoded utilizing the same mechanics as any other intermediate slice.
The key difference is that instead of slice shell tiles being exposed in the $+z$ axis, the next set of exposed tile are those of the capping layer.
The encoding process proceeds as normal, including the binding of the row completion gadget and slice completion gadget as seen in Figure~\ref{fig:deconstruction-final_row-0}.
After the capping tiles bind with the row completion gadget indicating that the final slice has been encoded, in addition to the slice, messenger and recognizer tiles dissolving into size 1 junk, a cascade of signals is sent outwards from the capping tiles to dissolve the remainder of tiles involved in the encoding process.
This includes the base, remaining slice tiles, capping tiles, the encoding corner gadget, and the encoding structure upon which messenger tiles placed the encoding of the shape.
Upon receiving the dissolve signal, we note that the encoding corner gadget sends a signal to the first tile in the encoding structure which encodes a voxel (i.e., it is set back from the direction row of the encoding structure).
The complement of this glue is found on all tiles in the encoding of the first slice, however only this outermost tile has this glue exposed. 
This signal causes a strength 2 $g_0$ glue to be activated, allowing for a location for the decoding process to begin.

\begin{figure}[htp]
    \centering
    \includegraphics[width=5.5in]{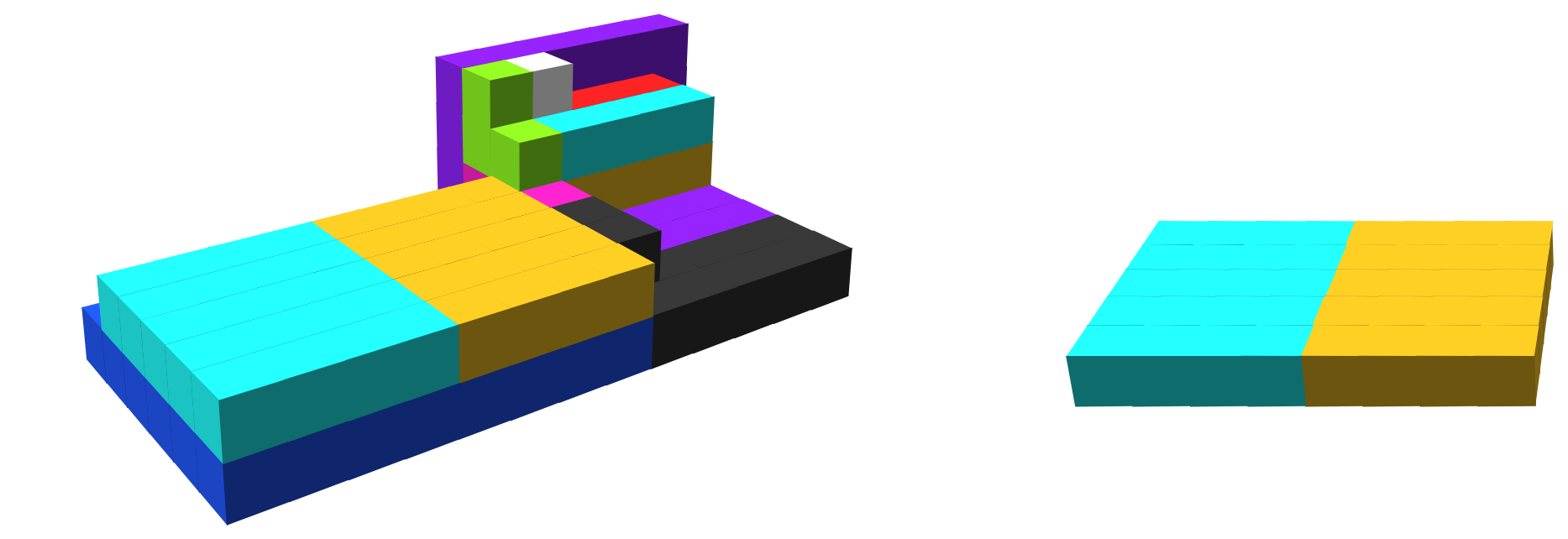}
    \caption{(Left) The final slice after encoding has completed - the binding of the row completion and slice completion gadgets (green and white, respectively) activate glues to signal to the capping layer that encoding is complete. (Right) At the end of the dissolution of all ``helper'' tiles, all that remains is the rectangular prism of depth 1, with a glue encoding the location of each voxel of the input shape and a strength 2 glue indicating the first tile in that encoding, plus a set of disconnected junk tiles.}
    \label{fig:deconstruction-final_row-0}
\end{figure}

Beginning with the creation of a bounding box and leader election around a uniformly coated shape $s$ in Section~\ref{sec:bounding-leader}, at the end of the assembly sequence for the tileset $E$ we have produced a terminal supertile $\phi$ which represents an encoding of the the shape using the encoding function $f_e$, with a maximum junk size of 4.
The STAM$^R$ system $\mathcal{E}_S = (E, \Sigma_S, \tau=2)$ finitely completes, as each of the sub-constructions to carry out the encoding $f_e$ require a finite number of steps (and thus, finite tile count) to complete. 
The final property which must hold is that regardless of the number of distinct shapes of input assemblies, the shapes of all will be correctly replicated.
By our construction, there are never exposed glues on the surfaces of any pair of assemblies that each contain an input assembly that would allow them to bind to each other.
Since junk assemblies produced by any assembly sequence are also unable to negatively interact with other assemblies, a system whose input assemblies have multiple shapes will behave simply as the union of individual systems which each have one input assembly shape, creating terminal assemblies of all of (and only) the correct shapes.
This proves Lemma~\ref{lem:encoder}.

\subsection{Shape Decoding}
We now describe the tileset $D$ which functions as a universal shape decoder.
The STAM$^R$ system for shape decoding is defined as $\mathcal{D}_\Phi = \{D,\Sigma_\Phi,\tau=2\}$.
$\Sigma_\Phi$ includes infinite copies of the tiles of $D$ and the set of encoding structures generated from $\mathcal{E}_S$, $\Phi = \{\phi_1,\ldots,\phi_n\}$.
The shape decoding process and tile types required can be broken into 3 main sets of tiles. We describe the process for a single $\phi\in\Phi$ and note that the process proceeds identically for each encoding simultaneously.
First, \emph{base tiles} initiate the decoding process by binding to $\phi$ at a unique starting location.
They then grow a subassembly outward from the encoding assembly which is guaranteed to be connected to it by at least strength 2 throughout the decoding process.
Second, we construct the \emph{shape} and \emph{filler} tiles (which are unique to the decoder's tileset, and separate from the similarly named tile types of the encoder) and describe how encoded information allows for an assembly sequence of shape tiles guaranteed to be connected to their neighbors in the decoded shape.
Third, we have a set of tiles called \emph{decoder tiles} which read the encoding and allow for the sequential placement of shape and filler tiles based on their location in the encoding.
Similar to the concerns regarding the shape becoming disconnected and splitting into multiple disconnected assemblies in the deconstruction process, decoding must proceed in a manner that allows for the growth of a slice which guarantees strength 2 connection to the encoding structure and growing shape, and also prevents a filler tile from becoming `trapped' in an enclosed volume.
The prevention of filler tiles becoming `trapped' in an enclosed volume drives a significant portion of the complexity of this process when combined with the need for strength 2 attachment of all shape tiles at steps in an assembly sequence.

In the tileset $D$, we use a decoding process which places tiles in the exact same order as the encoding process built the encoding assembly $\phi$ as presented in Section~\ref{sec:encoding-description}.
Two pieces of information are explicitly encoded in $\phi$.
The bulk of the tiles in the encoding correspond to identifying if a location in a shape corresponds to empty space, or a tile of the shape.
The second piece of information, provided in the first row of the encoding, is the the direction of growth; this can be utilized in two manners.
First, the direction of growth provides to the system the types of tiles to be utilized to reach the point encoded, as growth processes vary significantly between `+' direction growth (encoded as a 0) and `-' direction growth (encoded as a `1').
Secondly, when the direction of growth encoded changes from 0 to 1 or 1 to 0, this indicates to the system when a tile is to be placed into a new row.
This information is required to ensure that we can grow a slice such that each tile is guaranteed to be connected to its neighbor, but also so tile faces are assigned with the appropriate glues.
We note that the tiles in this section obey the careful dissolving property in Section~\ref{sec:dissolving}.

We first present the details of tile attachment.

\subsubsection{Fill and Shape Tile Attachment Details}\label{sec:shape-attachment-details}

In this section, we demonstrate a template for tiles which allows for the decoding process to be carried out, allowing for connections between all shape tiles and their neighbors within a slice.
Additionally, we provide examples of gadgets which allow for the growth of consecutive slices of a shape encoding without causing filler tiles (which are not part of the final shape, but may be temporarily required to ensure a strength 2 connection between portions of shapes where a cut may exist in the binding graph of the partially decoded shape) to be stuck in an enclosed volume of a shape.
At a high level, these tiles ensure three properties: (1) each tile is, at a minimum, connected to its neighbor in an encoding, (2) shape tiles are connected to all adjacent shape tiles with strength 2, and (3) before any tile is added to a new slice, if the tile in the same $x, y$ coordinates of the prior slice is a filler tile, that filler tile must be removed before placement of the next tile occurs.
While we demonstrate how these properties are carried out in the current section, we prove their correctness in Section~\ref{sec:universal-decoding}.

\paragraph{Tile Type Identification}
We demonstrate the filler tiles required to carry out the decoding of a shape, based upon the requirements for incrementally building a slice utilizing the `zig-zag' process.
First, each filler tile has 6 variants to handle growth along a row (also called `normal row growth') and the change of a row for both directions of growth (see Figure~\ref{fig:decoder-decoding_analysis-0}).
The two tiles of normal row growth (either $+x$ or $-x$ direction) are typically used for the majority of growth.
There exist two tiles which either grow in the $+y$ direction or turn $+y$ direction growth into $+x/-x$ direction growth; this leads to four total tiles when considering both directions of growth.
Shape tiles have 12 variants to also account for the type of tile of its neighbor in the previous slice.
See Figure~\ref{fig:decoder-decoding_analysis-1} for examples of the signals necessary on shape tiles which must bind to a shape tile in the prior slice.
To determine which of these 18 possible tile variants is utilized in any given voxel, the assembly sequence of the tileset $D$ takes information from a variety of sources - direction tiles, decoding tiles, direction change detectors, and neighbor detection gadgets.

\begin{figure}[htp]
    \centering
    \includegraphics[height=6cm]{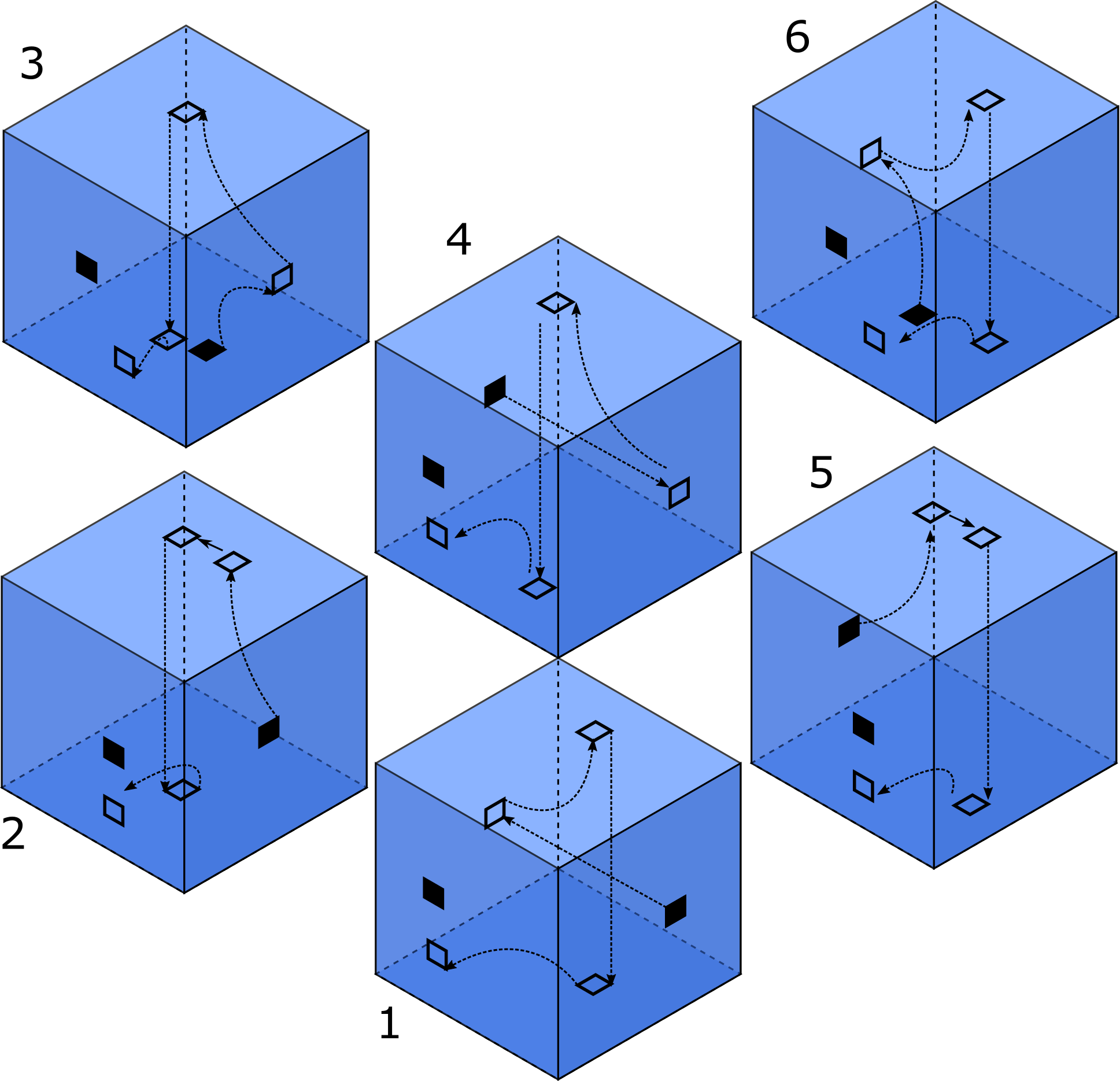}
    \caption{An example of normal row growth and direction change tiles used by the decoding process to build a slice - these tile types map to both shape and fill tiles. (1) and (4) are standard row growth tiles for `+' and `-' direction growth, respectively. (2) and (5) are row end tiles for `+' and `-' direction growth; they open cooperative binding sites which allow for tiles (3) and (6) to bind and change the direction of growth. Signal activation arrows demonstrate the order in which faces of shape tiles are determined to be either bound to a neighboring shape tile or have a fill tile adjacent to the face.}
    \label{fig:decoder-decoding_analysis-0}
\end{figure}

\begin{figure}[htp]
    \centering
    \includegraphics[height=4cm]{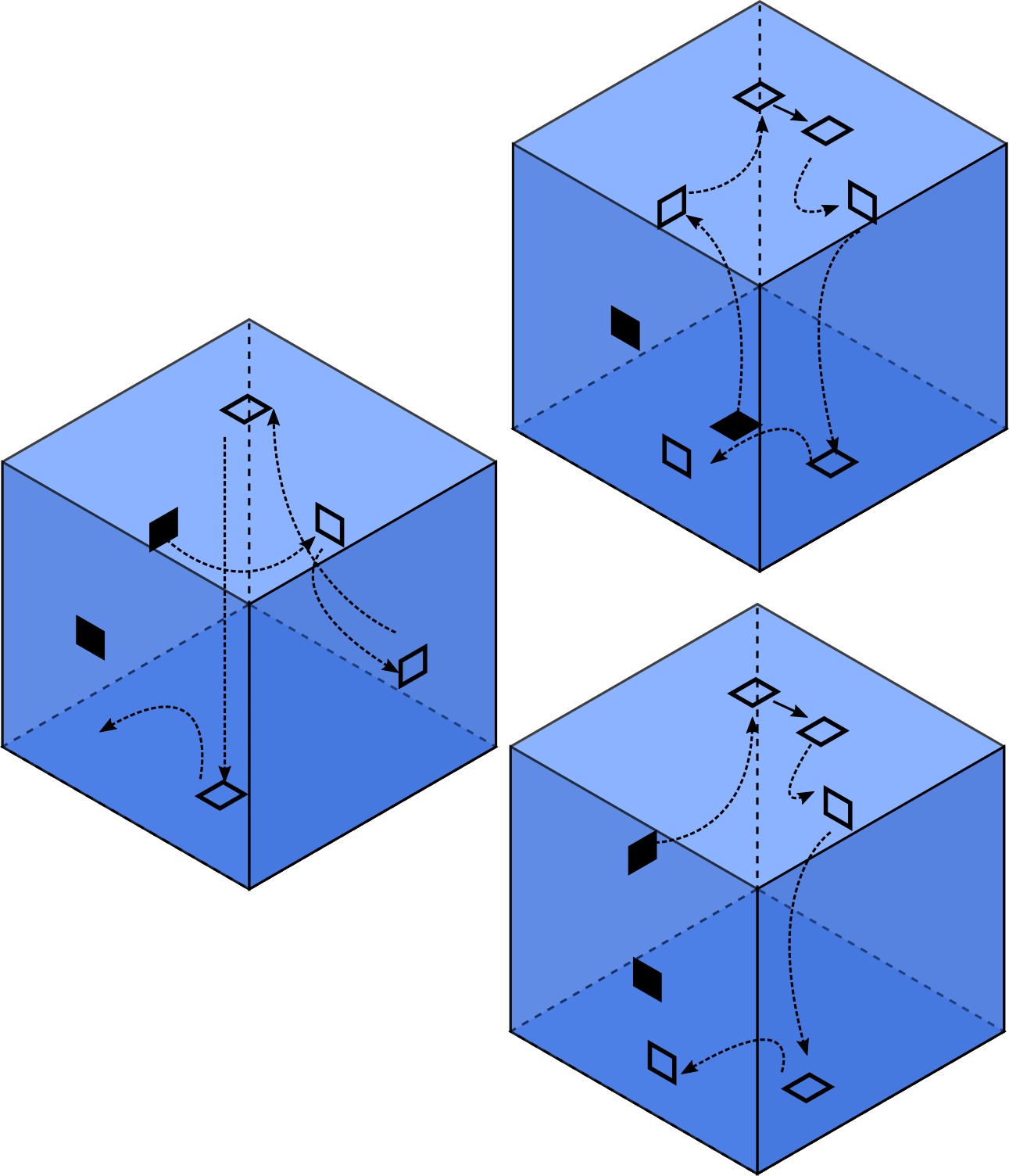}
    \caption{An example of shape tiles which have shape tiles as neighbors in the prior slice. Note that signals must pass through the face adjacent to the neighbor in the prior slice before binding to the next tile in the slice.}
    \label{fig:decoder-decoding_analysis-1}
\end{figure}

We utilize Figure~\ref{fig:generator-tile_specifics-0} to analyze the tile types which contribute information to the determination of the final tile type placed at any given voxel, aside from neighbor detection gadgets.
The \emph{direction tile} provides three pieces of information - the location of the voxel, and the tile growth direction (`+' or `-' growth, defined by 0 or 1 glues, respectively), and the growth direction of the prior tile placed.
The location of the voxel is simply tracked by the location of the direction tile which has active glues allowing for binding with \emph{decoding tiles} (tiles which bind to the locations encoding shape information on the original encoding structure), and the direction of growth is defined by the value of the first row of the encoding (see Figure~\ref{fig:deconstructor-encoding_order-0} for additional details on growth direction).
Each direction tile is either placed directly on top of the first row encoding the direction of growth (as in Figure~\ref{fig:generator-tile_specifics-0}), or is placed due to cooperative binding which passes the directional information among directional tiles with the same $x$ coordinate (i.e., tiles grow in the same direction as their neighbor in prior slices).
Additionally, the direction tile determines whether the prior direction tile was `+' or `-' growth direction by glue bindings.
The \emph{direction change detectors} bind to the current direction tile and the direction tile for the succeeding voxel - this, along with being bound to the prior direction tile allows for the current direction tile to expose a glue which encodes for both the direction of growth and determine if the tile is at the beginning or end of a row.
If the direction tiles of either adjacent tile contain a growth direction different from that of the current direction tile, the current tile is at the end or beginning of a row.
The \emph{decoding tiles} provide the information as to whether the tile in the current encoding of voxel location is either a shape of fill tile.
The binding of a decoding tile to the encoding supertile is enabled by cooperative binding with the direction tile.
All the information gathered by both the direction tile and the direction change detectors map to the activation of one of six possible glues, corresponding to the six tiles in Figure~\ref{fig:decoder-decoding_analysis-0}.
The decoding tile placed now contains the information regarding the growth direction of the tile and whether the tile is a shape or a filler tile.


\begin{figure}[htp]
    \centering
    \includegraphics[width=6cm]{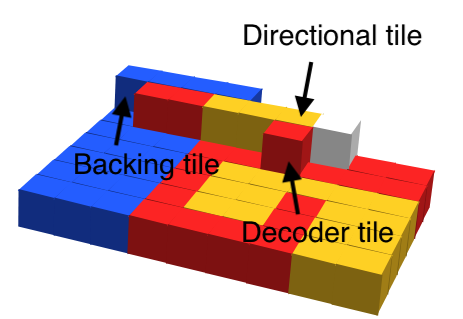}
    \caption{An example of the information which is gathered from the encoding structure. The directional tile gathers information regarding the growth type of tile location encoded. The direction change detector gadget (white) which detects that growth type `0' shifts in growth type `1', indicating a change of row and necessitating a direction change tile. The decoder tile, once glues are available to cooperatively bind to the encoding structure and the directional tile, determines that the tile in the current location is a shape tile}
    \label{fig:generator-tile_specifics-0}
\end{figure}

Shape tiles take an additional piece of information - whether or not the tile in the same $(x,y)$ coordinate in the prior slice (i.e., if $(x,y,z)$ is the location of current tile to be placed, its neighbor in the prior slice is $(x,y,z-1)$) is a filler tile or shape tile.
A shape tile cannot be immediately connected to a filler tile in the prior slice and remain in place, as that filler tile must be removed to prevent it being trapped in an enclosed cavity.
This information cannot be learned at the initial binding location shown in Figure~\ref{fig:generator-tile_specifics-0}.
As such, the decoding tiles expose glues to enable tile growth to the voxel of the tile.
This final piece of information is determined by the binding of one of three \emph{neighbor detection gadgets}.

\begin{figure}[htp]
    \centering
    \includegraphics[width=6cm]{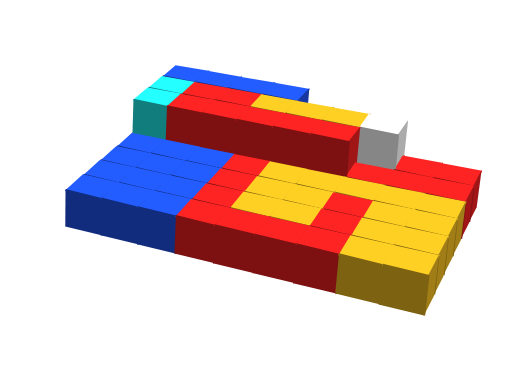}
    \caption{Continuation of the example in Figure~\ref{fig:generator-tile_specifics-0}. After the decoding tile has determined all information regarding the tile to be placed from the encoding (a shape tile which is at the end of a row), the decoding tile initiates growth of tiles which allow for the information regarding the tile to reach its voxel - the additional red tiles grown from the encoding structure. The final piece of information which dictates the type of tile to place is the tile type which is present in the slice prior. A neighbor detection gadget (teal) is utilized to cooperatively bind to the decoding tile }
    \label{fig:generator-tile_specifics-1}
\end{figure}

When the growth of the decoding tiles reaches the location for placement of a tile (the process by which this occurs is detailed in following sections), the neighbor detection gadget cooperatively binds with the decoding tiles and the neighbor of the tile to be placed in the current location.
If a shape tile is detected, the gadget detaches and activates a glue to place a tile which requires that binding of the neighbor occurs before growth of the slice can continue (Figure~\ref{fig:decoder-decoding_analysis-1}).
If a fill tile is detected in the prior slice, we utilize a shape tile which pre-encodes the information that face of the neighbor contains a $g_x$ glue, as the fill tile must be removed before the shape tile can detect the glues on the fill tile (Figure~\ref{fig:decoder-decoding_analysis-0}).
Additionally, we initiate a process of guaranteeing removal of the fill tile that requires a duple be used for the removal process.
We also have a special neighbor detection gadget for the first slice, where the neighbor tile is a \emph{backing} tile (used to enable strength 2 connections between all slices, it is described further in Section~\ref{sec:decode-row-1}).
Due to the neighbor detection gadget sensing a backing tile, the shape tile to be placed will pre-encode the $g_x$ glue.
The binding of the neighbor detection gadget to a backing tile causes the growth of an additional backing tile.

\paragraph{Detecting Neighbors and Removing Fill Tiles}
We present an example of the deconstruction process necessary for fill tile removal in the decoding of a shape.
The supertile described is a continuation of the examples in Figures~\ref{fig:generator-tile_specifics-0},~\ref{fig:generator-tile_specifics-1}.
First, a fill tile neighbor detection gadget cooperatively binds to the decoding tiles growing outward from the encoding and the fill tile of the prior slice (Figure~\ref{fig:generator-neighbor_removal-0}).

\begin{figure}[htp]
    \centering
    \includegraphics[width=6cm]{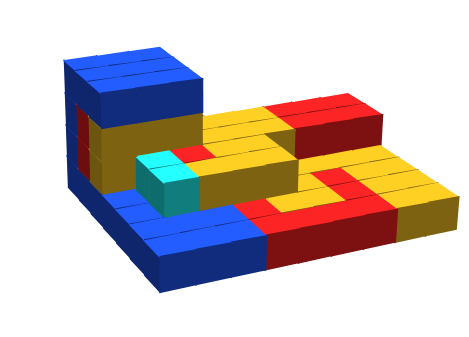}
    \caption{The detector tile initiates the placement of a fill tile in the next voxel location. This allows for cooperative binding of a neighbor detection gadget (teal) to the fill tile placed in the prior slice}
    \label{fig:generator-neighbor_removal-0}
\end{figure}

After this binding occurs, the fill detector gadget binds with strength 2 to the fill tile.
This binding additionally causes all the remaining glues on the fill tile to be set to the \texttt{off} state; once this glue deactivation occurs, the 3-tile unit will detach from the growing supertile and become junk (Figure~\ref{fig:generator-neighbor_removal-1}).
Detachment of the size-3 junk allows for cooperative binding to place the tile encoded by the decoding tile such that it has not blocked the removal of the fill tile in the prior location (Figure~\ref{fig:generator-neighbor_removal-2}).
As provided by the construction, a strength 2 connection exists between any remaining tiles in the slice.

\begin{figure}[htp]
    \centering
    \includegraphics[width=6cm]{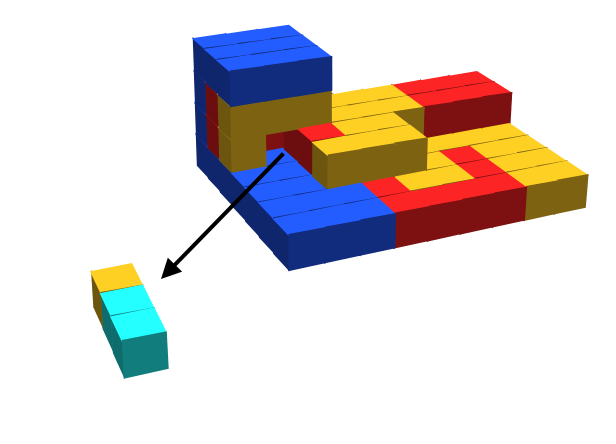}
    \caption{After the fill detector gadget (teal) binds to the fill decoding with strength 2, this causes the fill tile to detach from its slice. Once all glues on the fill tile have deactivated, the size-3 junk is able to detach from the supertile.}
    \label{fig:generator-neighbor_removal-1}
\end{figure}

\begin{figure}[htp]
    \centering
    \includegraphics[width=6cm]{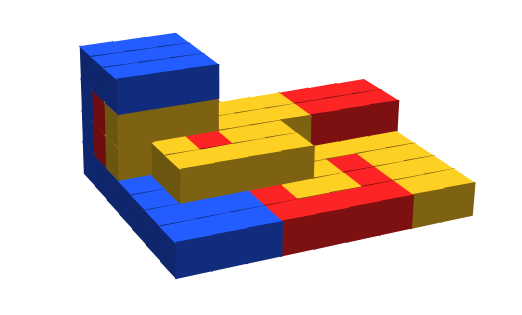}
    \caption{The new fill tile in the current slice is allowed to cooperatively bind once the fill detector gadget junk detaches from the supertile.}
    \label{fig:generator-neighbor_removal-2}
\end{figure}

\paragraph{Slice Incorporation}
We refer to the process by which tiles bind to their neighbors as slice incorporation; this process occurs in a similar manner for both type fill and shape tiles, however shape tiles may need to additionally bind to a neighbor in the prior slice.
First, a tile binds to its predecessor.
This is enabled by the two starting active glues, as shown in Figure~\ref{fig:decoder-decoding_analysis-0} by the solid black squares.
One glue is provided by the decoding tiles, and the other is provided by the neighbor; these map uniquely to a single tile.
Once binding occurs to the predecessor and the tile is a shape tile and has a neighbor in the prior slice, it then binds to its neighbor (Figure~\ref{fig:decoder-decoding_analysis-1}).
At this point, growth can continue in the slice and a glue is exposed; the tiles is a shape tile, it exposes at $s$ type glue, $f$ type if it is a fill tile.
The binding of this tiles successor activates a glue in the $+y$ direction.
Once the $+y$ direction glues bind, we then pass a signal in the $-y$ direction.
As shown in both Figures~\ref{fig:decoder-decoding_analysis-0} and~\ref{fig:decoder-decoding_analysis-1}, the $+y/-y$ face between tiles which change rows utilizes two separate set of glues, as tile growth occurs in the $+y$ direction before signaling slice growth completion.
Finally, once bound in the $-y$ direction we activate glues in the $+z$ direction, allowing for growth of the next slice.
In this sequence of glue activation, we guarantee that the topmost row of a slice will be bound fully to all neighbors in the slice before glues are activated allowing for new growth.
As such, in order for the first tile in a new slice to be placed, it must be connected with strength $\geq 2$ to the encoding structure via the topmost layer.

We note that with shape tiles, each tile contains the information to be connected to its neighbors and expose surface glues in any exterior location or internal location adjacent to fill tiles.
These exterior glues can become active immediately, or be activated at some later point by the action of some sort of gadget binding to the surface and causing signals be passed through the entire structure.
If the glues begin in the \texttt{on} state, we must take care such that if we present a replicating system (per Theorem~\ref{thm:replicator}) that they do begin the encoding process while decoding is taking place.
For that reason, in this construction we do not immediately activate the $g_x$ glues of an encoded shape.
The shape resulting from our tileset is terminal once all extraneous fill tiles and base tiles have detached from the encoding.
These shape tiles begin with strength 1 glues along all exterior edges of type $g_a$; these have no complement in either tileset involved in replication.
However, we can define an \emph{activation corner gadget} which contains two $g_a^*$ glues and is able to bind to the inactive shape tiles.
Upon binding of the activation corner gadget to the shape tile, glues bound to the activation corner gadget initiate a cascade of signaling to all other tiles in the shape which deactivate $g_a$ glues and in their place activate $g_x$ glues

Having a process to connect the tiles in a slice together, we now present the remainder of the tiles utilized to place shape tiles in the appropriate location and validate completion of the encoded shape.

\subsubsection{Base Creation}
Before continuing, we first provide an example shape (and its encoding) which will be used throughout the remainder of this section.
Figure~\ref{fig:shape-encoding} demonstrates the encoding of the shape provided in Figure~\ref{fig:shape-example}.
This shape and the encoding of the shape are used throughout the section as an example.

    \begin{figure}[htp]
        \centering
        \includegraphics[width=4cm]{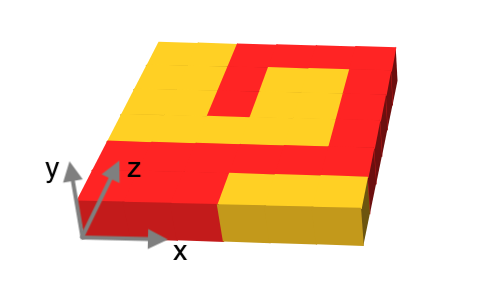}
        \caption{Encoding of the initial shape. Red voxels expose `1' glues on their surface, and yellow indicate exposed `0' glues. The first row indicates direction of growth for tiles in the same $+z$ column. The orientation of the axes for growth (identical to the orientation during encoding) is defined as shown.}
        \label{fig:shape-encoding}
    \end{figure}
    
    \begin{figure}[htp]
        \centering
        \includegraphics[width=4cm]{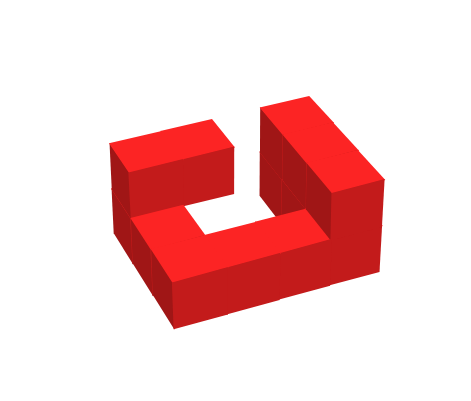}
        \caption{The shape which will be decoded from the encoding}
        \label{fig:shape-example}
    \end{figure}

We demonstrate the set of tiles which create a base, the initial set of tiles which, when combined with an encoding of a shape, nucleate growth and serve as a foundation for the remainder of the growth process.
We note that this encoding in a rectangular plane is convenient for our purposes (and prevents a massive increase in the number of tiles and signals required), however this entire process could be completed with only `0' and `1' tiles encoded in a line.

This encoding supertile begins with a strength 2 $g_0$ glue exposed, allowing for the tile $t_0$ to bind (Figure~\ref{fig:generator-base-0}).
Once $t_0$ is bound, it begins the process of growing the base by activating signals which cause uniquely mapped tiles to bind with the purpose of finding the width of the shape, demonstrated in Figure~\ref{fig:generator-base-1}.

\begin{figure}[htp]
    \centering
    \includegraphics[width=4cm]{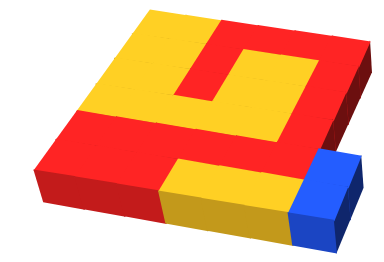}
    \caption{Initial binding of $t_0$ to encoding supertile, with the second tile included (base tiles indicated by blue)}
    \label{fig:generator-base-0}
\end{figure}

\begin{figure}[htp]
    \centering
    \includegraphics[width=4cm]{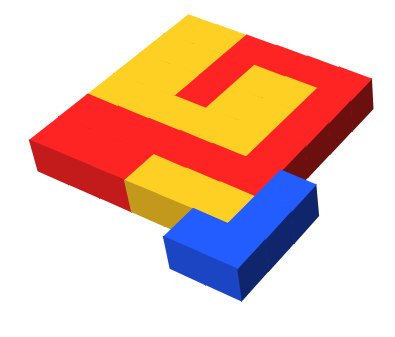}
    \caption{Extending initial base tiles (blue) to begin reading the width of the shape.}
    \label{fig:generator-base-1}
\end{figure}

We first determine the width of the shape.
Since each row alternates direction, we can utilize this information to construct a set of tiles which are able to identify the width of the base required for decoding.
A set of \emph{counting tiles} are able to add tiles to the existing supertile which define a base the width of the shape.
This counting process operates by cooperatively adding one tile to attach to the width-detection tiles.
The first row is able to utilize signals passed through the unique tiles which initiated growth to cause the addition of at least a width-1 base (Figure~\ref{fig:generator-base-2}).
We note that the encoding will be guaranteed to contain more than two tiles in any row due to the tiles added in the process of leader election.
The tile encoding the second location of the base then activates a strength 2 glue which allows for the binding of a counting tile (Figure~\ref{fig:generator-base-3}).
This tile enables cooperative growth along the edge of the currently exposed counting tiles.

\begin{figure}[htp]
    \centering
    \includegraphics[width=4cm]{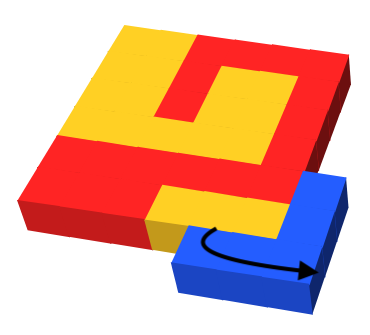}
    \caption{The first counting tile extends the width of the base by 1 voxel. Since we have used unique tiles up to the point, we are able to pass a message through to cause the addition of two general base tiles.}
    \label{fig:generator-base-2}
\end{figure}

\begin{figure}[htp]
    \centering
    \includegraphics[width=4cm]{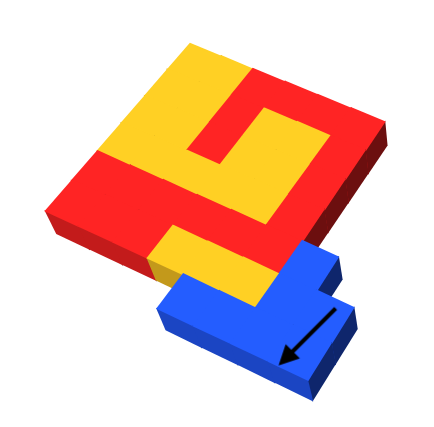}
    \caption{Newly placed tiles initiate a message which cause a strength 2 counting glue to be exposed.}
    \label{fig:generator-base-3}
\end{figure}

Once the counting tiles reach the end of the existing growth, one of two possible \emph{counting detectors} is able to bind to the new growth of counting tiles and the encoding structure (Figure~\ref{fig:generator-base-4}).
The two counting detectors have glues which sense either a `0' direction tile or a `1' direction tile.
Since the initial row is of direction `0', the counting process will be sent a signal along the new growth to both extend the width of the base by 1 tile and dissolve the prior placed counting detectors into size 1 junk in order to allow for the counting process to repeat (Figure~\ref{fig:generator-base-5}).
Otherwise, if a direction `1' tile is sensed, we have found the beginning of the second row and can terminate the counting process. 
Once this counting process is completed, we activate glues on the initial base tiles to cooperatively fill in the remainder of the tiles (Figure~\ref{fig:generator-base-6}).
The cooperative filling is determined to be complete by the binding of a \emph{base completion gadget} (Figure~\ref{fig:generator-base-7}), returning a signal to $t_0$ that causes another set of signals to be propagated that enable the placement of a base tile in the $+z$ direction.

\begin{figure}[htp]
    \centering
    \includegraphics[width=4cm]{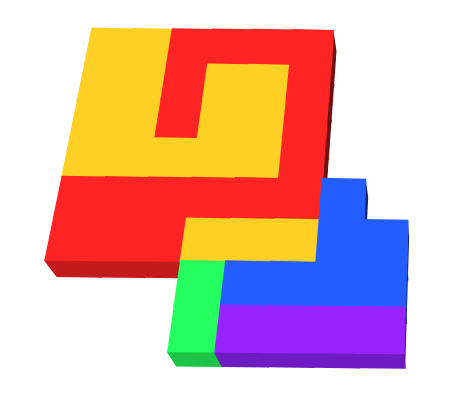}
    \caption{Cooperative growth along blue base tiles allows for counting tiles (purple) to reach the furthestmost tile. A duple (green) allows for the counting row to sense when it must extend the base by an additional tile by cooperatively binding to both the furthest counting tile and a `0' on the encoding supertile. Messages are sent to extend both the base tiles counting the current base width and to extend the width of the base by 1.}
    \label{fig:generator-base-4}
\end{figure}

\begin{figure}[htp]
    \centering
    \includegraphics[width=4cm]{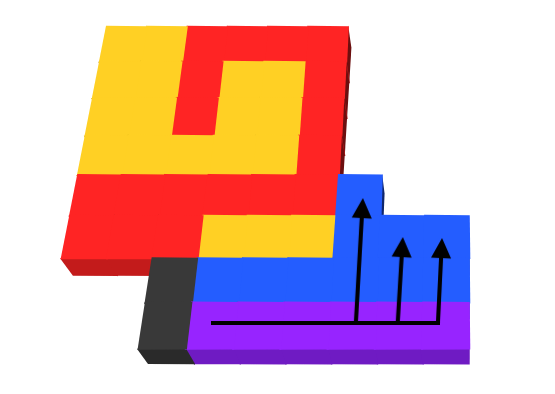}
    \caption{Once the counting row reaches the `1' tiles, this indicates the base is of the correct width. This is sensed by a counting duple (black) which cooperatively binds to both the counting row and the `1' glue.}
    \label{fig:generator-base-5}
\end{figure}

\begin{figure}[htp]
    \centering
    \includegraphics[width=4cm]{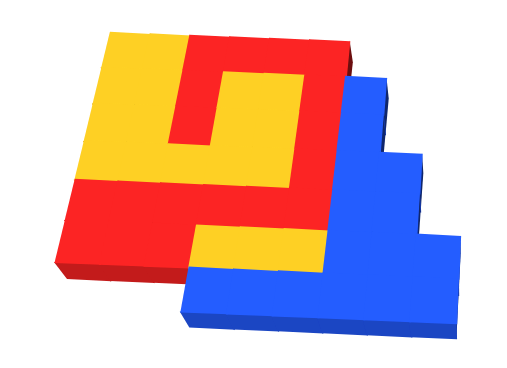}
    \caption{After binding of the black counting duple, the counting tiles dissolve and a signal is sent to begin cooperative growth of the remainder of the base adjacent to the encoding}
    \label{fig:generator-base-6}
\end{figure}

\begin{figure}[htp]
    \centering
    \includegraphics[width=4cm]{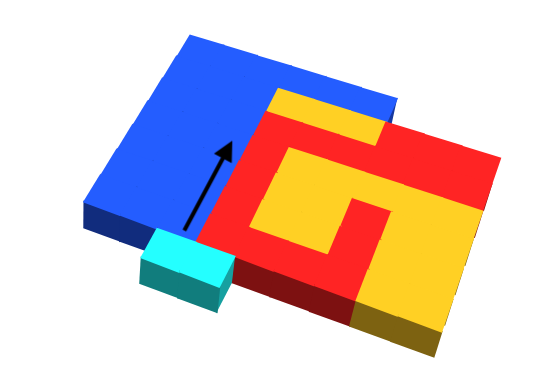}
    \caption{Base completion duple (white) allows for the base to detect when tiles have extended the base along the entire edge of the initial encoding supertile. A message returns to the initial tiles placed once all tiles of the row adjacent to the encoding have been placed in the base.}
    \label{fig:generator-base-7}
\end{figure}

\subsubsection{Row 1 Tile Placement}\label{sec:decode-row-1}
Once the base is complete, a signal is sent to begin the decoding process of the first row.
Figure~\ref{fig:generator-growth_dir0-row0-0} demonstrates how this signal allows for a strength 2 glue to be exposed in the $+y$ axis, allowing for a base tile to generate cooperative binding on top of the first directional tile.
Unlike other directional tiles, the directional tile of first tile of the first row encodes the information that a row change tile is to be utilized, without the need for sensing the directional tile prior (as there is no prior directional tile).
Once the directional tile binds, it then activates a glue allowing for the cooperative binding of a decoder tile that determines if the origin tile is a shape or fill tile.
Additionally, this binding causes a signal to be passed backwards through the base tile most recently placed such that it initiates the growth of a \emph{backing tile}.
Backing tiles serve two main purposes; first, to indicate to tiles of the first slice that they are adjacent to an exterior edge, and any shape tile must encode exterior glues on its $-z$ face.
Second, backing tiles allow for the tiles in the topmost row of a slice to bind along their top edge with strength 2 connections.
The process by which this second item proceeds is outlined in Section~\ref{sec:decoder-slice-completion}.

Once the decoder tile determines which type is to be placed, a glue is exposed in the $+x$ direction to enable growth of the decoding tiles.
Due to the current decoding tile being the first tile of the row, we can guarantee that at this point a neighbor detection gadget
must bind to the recently placed backing tile and the decoding tile (Figure~\ref{fig:generator-growth_dir0-row0-1}).
This binding of the neighbor detection gadget with the backing tile additionally causes the backing tile to activate a glue allowing for cooperative binding of another backing tile with the base in the $+x$ direction.
The decoding tile now contains all the information regarding the tile type to place after binding with the neighbor detection gadget.
A strength 2 glue allows for the growth of an additional decoder tile (mapping to the tile type indicated in the encoding assembly); this enables cooperative binding of the tile type mapped between itself and the base tiles (Figure~\ref{fig:generator-growth_dir0-row0-2}).
After the base tile and decoded tile of the shape are connected with strength 2, signals are sent back through the decoder tiles towards the directional tile which initiated growth.
Upon passing this signal to the decoder tile's predecessor, all decoder tiles not bound to the directional tile dissolve into size 1 junk (Figure~\ref{fig:generator-growth_dir0-row0-3}).
The decoder tile adjacent to the directional tile activates a glue indicating for the next directional tile to be placed, thus allowing for the placement of an additional decoding tile.

\begin{figure}[htp]
    \centering
    \includegraphics[width=4cm]{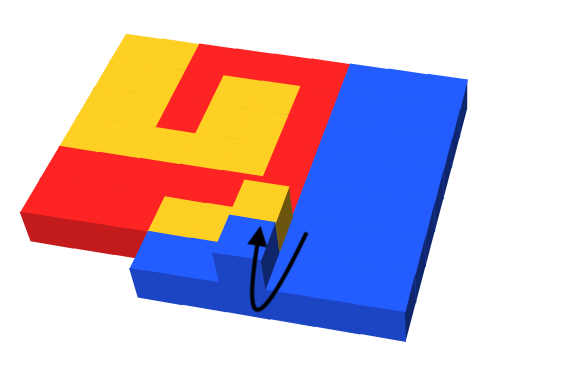}
    \caption{The initial tile, once messages have been received that the base is complete, initiates a signal which causes a base tile allow cooperative binding of the first directional tile. Additionally, a glue initiating growth of the first backing tile is exposed in the $+x$ direction}
    \label{fig:generator-growth_dir0-row0-0}
\end{figure}

\begin{figure}[htp]
    \centering
    \includegraphics[width=4cm]{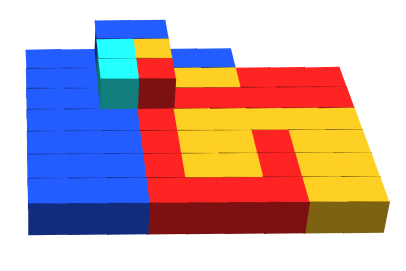}
    \caption{Once cooperative binding has occurred to dictate the decoding tile, glues are activated on the $+x$ face of the decoding tile, allowing for cooperative growth and binding of neighbor detection tiles. A strength 2 glue is exposed upon binding with the neighbor decoding tile, allowing for a decoding tile to be added which cooperatively places the tile encoded.}
    \label{fig:generator-growth_dir0-row0-1}
\end{figure}

\begin{figure}[htp]
    \centering
    \includegraphics[width=4cm]{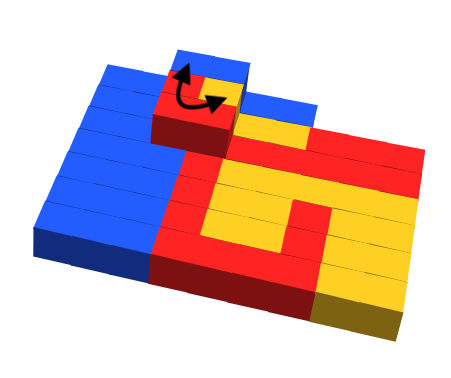}
    \caption{Once the row-1 tile binds to the base, it exposes a glue in the $-z$ axis that allows for the cooperative binding of the fill/shape tile encoded by the first location. Once cooperative binding occurs, a second glue is activated which allows for a strength 2 connection between the shape/fill tile most recently placed and its predecessor (in this case the base tile - the first row of which contains glues and signals which allow for binding in this manner). Additionally, when the detection duple binds to the backing tile, a signal is sent to activate glues in both the $+x$ and $+y$ directions which allows for a second tile to bind}
    \label{fig:generator-growth_dir0-row0-2}
\end{figure}

\begin{figure}[htp]
    \centering
    \includegraphics[width=4cm]{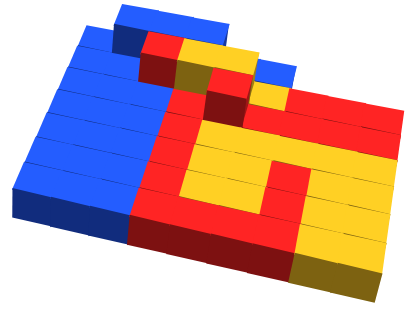}
    \caption{A message is passed backwards along the binding edges such that the direction tile activates a glue which allows for the next directional tile to bind. Additionally, the decoding tiles placed in support of the prior encoded location of the shape deactivate all glues and become junk to allow for the next tile of the encoding to be placed utilizing the same path of voxels.}
    \label{fig:generator-growth_dir0-row0-3}
\end{figure}

\begin{figure}[htp]
    \centering
    \includegraphics[width=6cm]{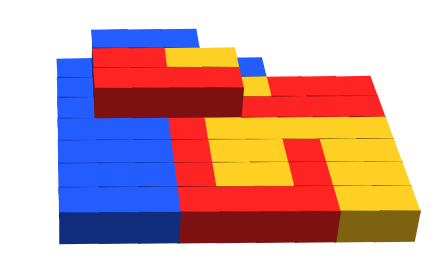}
    \caption{Placement of encoded tiles continues, with decoding tiles re-utilizing the same set of voxels to grow voxels further away from the origin.}
    \label{fig:generator-growth_dir0-row0-4}
\end{figure}

Tile additions continue also utilizing the direction change decoding demonstrated in Section~\ref{sec:shape-attachment-details}, until the final tile of the row is reached.
At this point growth continues by the standard process of directional tiles allowing cooperative binding with the encoding structure, however switched to direction `1' growth.
In order to enable the placement of encoding tiles via direction `1' growth, the backing tiles must be present in the new row to allow for binding of neighbor detection gadgets.
A \emph{backing growth detector} (see Figure~\ref{fig:generator-growth_dir0-row0-5})binds to the most recently placed backing tile and the base (or backing) tile in the row prior.
Binding of the backing growth detector allows for a strength 2 glue to be turned \texttt{on} to enable the growth of a backing tile in the $+y$ direction (Figure~\ref{fig:generator-growth_dir0-row0-6}).


\begin{figure}[htp]
    \centering
    \includegraphics[width=6cm]{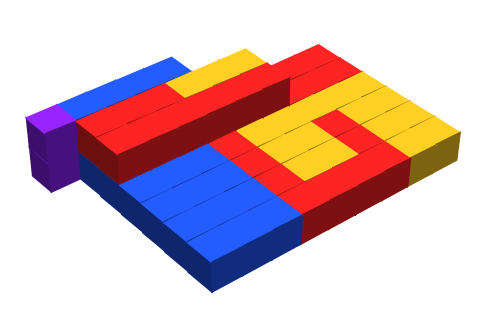}
    \caption{Backing growth detector (purple) binds to the outermost backing tile and the base to signal to the backing tile to activate a strength 2 glue in the $+y$ direction. Note that for following rows, the backing growth detector will bind with two backing tiles}
    \label{fig:generator-growth_dir0-row0-5}
\end{figure}

\begin{figure}[htp]
    \centering
    \includegraphics[width=6cm]{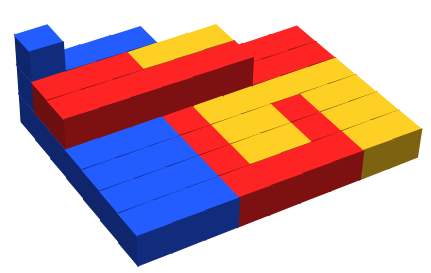}
    \caption{Binding of the next backing tile in order ready growth, allowing for binding of neighbor detection tiles.}
    \label{fig:generator-growth_dir0-row0-6}
\end{figure}

\subsubsection{Row 2n Tile Placement}\label{sec:decode-row-2n}
For each even numbered row, tiles grow in the `-' direction; that is, the first tile in the some row 2n is placed above the last tile of the prior row (2n - 1)   
For row 0 growth, each additional tile placed took us further away from the origin point (e.g., incrementing the $x$ value in the ($x,\: y,\:z$) position tuple).
In the case of `-' direction growth, tiles of the slice are placed at the furthest-most $x$ value of the slice and decrement to 0.
While the decoding tiles of the first row bind initially to decoding tiles, the most recently placed tile and directional tiles, the decoding tiles of `-' direction growth cooperatively bind with the prior decoding tile and a base tile.
Growth occurs in two cases; in the case of the first tile of a row of direction 1 growth, tiles bind until they reach the furthest-most base tile.
When reaching the outermost base tile, a \emph{direction `1' detection gadget} binds with the outermost base tile and the furthest placed decoding tile (Figure~\ref{fig:generator-growth_dir1-0}).
At this point, a glue is activated on the decoding tile's $+y$ face, allowing for cooperative growth to continue.
This allows for cooperative growth along the previously placed tiles until no longer possible, at which point a neighbor detection gadget is able to bind to the decoding tile and the neighbor tile (in this case a backing tile, see Figure~\ref{fig:generator-growth_dir1-1}).

\begin{figure}[htp]
    \centering
    \includegraphics[width=4cm]{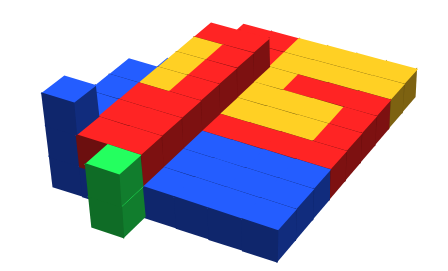}
    \caption{Cooperative binding for direction `1' tile growth of the first tile in row 2 extends to the edge of the base. A direction `1' detection gadget (green) attaches to the base and the growing row, indicating the edge has been reached. Once the direction `1' detection gadget is bound, a glue activates on the $+y$ face of the tile, allowing for cooperative growth in the $+y$ direction on the currently grown structure.}
    \label{fig:generator-growth_dir1-0}
\end{figure}

\begin{figure}[htp]
    \centering
    \includegraphics[width=4cm]{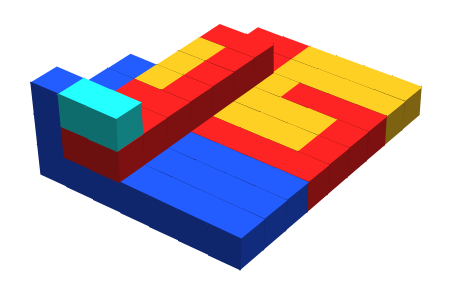}
    \caption{The binding of the neighbor detection gadget allows for a strength 2 glue to activate in the $+y$ direction, allowing for a tile with glues mapping to the decoding tile type (in this case, a shape tile which has a $g_x$ glue encoded on its back side) to cooperatively bind to the prior tile placed.}
    \label{fig:generator-growth_dir1-1}
\end{figure}

Similarly, this allows for both the placement of the encoded tile and the extension of the backing tiles; upon the placement of the encoded tiles, a signal is sent to dissolve all decoding tiles not involved in growth in the $+y$ direction into size 1 junk.
The next directional tile is added, allowing for the binding of the next decoding tile and the growth to place the tile dictated by the encoding structure.
To sense when the growth of the decoding tiles in the $+x$ direction has reached its furthest-most point, the remaining decoding tile which originally redirected growth in the $+y$ direction enables a glue similar to that present on the direction `1' detector gadget.
We note this does not cause interactions between multiple encoding processes going on in parallel, as the presence of the base tiles and the directional row offset any possible growing decoding tile (Figure~\ref{fig:generator-growth_dir1-2}).
Once the neighbor detector gadget binds, it grows in the $+y$ direction and places its encoded tile (Figure~\ref{fig:generator-growth_dir1-3}).
This repeats until all tiles of the row have been added.

\begin{figure}[htp]
    \centering
    \includegraphics[width=4cm]{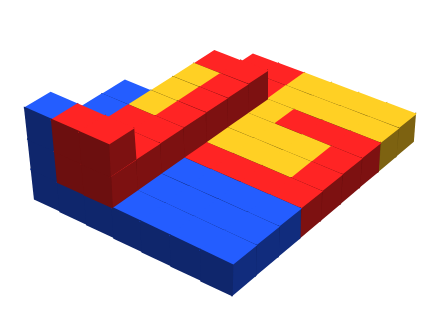}
    \caption{After binding of neighbor detection gadget, shape tiles are placed.}
    \label{fig:generator-growth_dir1-2}
\end{figure}

\begin{figure}[htp]
    \centering
    \includegraphics[width=4cm]{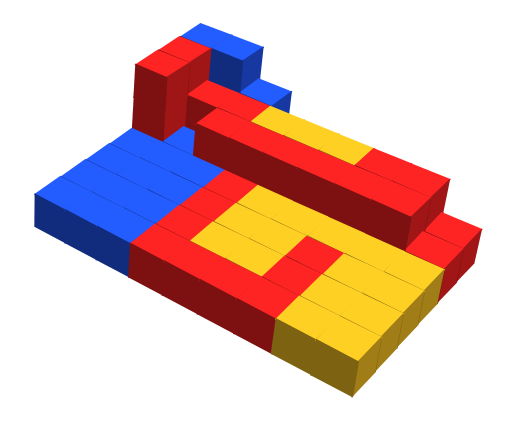}
    \caption{Mid-growth of the second tile in the encoding of row 2. Note that all but one horizontal tiles are deactivated in direction `1' growth, this is in order for collision to occur and correctly place remaining tiles.}
    \label{fig:generator-growth_dir1-3}
\end{figure}

\begin{figure}[htp]
    \centering
    \includegraphics[width=4cm]{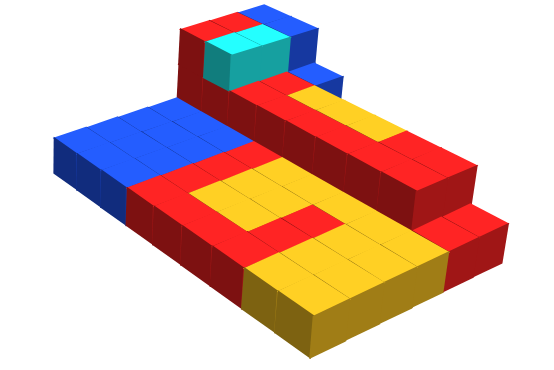}
    \caption{Neighbor detector gadget binds to the furthest-most placed decoding tile of the second decoding tile after colliding with the prior decoding tile growth. This leads to placement of encoded tile and growth of backing. This process repeats for all remaining direction `1' tiles in the row.}
    \label{fig:generator-growth_dir1-4}
\end{figure}

At the end of this row, the backing tiles must grow in the $+y$ direction again.
For row 2, the current backing gadget will not work as there exists a base tile hindering growth (which is necessary for future signals to be sent).
A modified, one-tile gadget is utilized for this specific case.
Additionally, once the row is complete after the placement of a direction change tile, all remaining decoding tiles are dissolved into size 1 junk allow for growth of direction `0' tiles of the following layer.

\subsubsection{Row 2n + 1 Tile Placement}
While growth of row 1 was in direction `0', it is a special case due to the fact that it placed tiles in voxels with the same coordinate in the $y$ axis as the decoding tiles by a set of tiles unique to the first row.
For remaining odd-numbered rows, we must carry out a similar growth in the $+y$ direction before as placing the encoded tile as demonstrated by the row 2 growth example, but incrementing $x$ values.
We note that the example figures in this section do not directly correspond to the encoding provided in Figure~\ref{fig:shape-encoding}, however these are presented to provide the reader with examples of how this process would occur in an encoding which does contain at least 3 rows.
Decoding tiles of some odd valued row grow by cooperatively binding with the decoding tile and previously placed directional tiles, as with the row 1 tiles.
However, upon binding with a shape or a fill tile they activate a glue in the $+y$ direction.
This glue attempts to allow for growth of decoding tiles in the $+y$ direction, leading to the binding of a neighbor detection gadget and the placement of the encoded tile (Figures~\ref{fig:generator-growth_dir0-duple},~\ref{fig:generator-growth_dir0-0}).
Similarly to even numbered direction `1' row growth, decoding tiles are dissolved into size 1 junk to allow for reuse of voxels.
In contrast, all but the bottom-most decoding tile are removed, and glues are activated allowing remaining decoding tiles to sense that a tile has already been placed in the current location (Figure~\ref{fig:generator-growth_dir0-row2n-2}). 
In the case when the decoding tile activates its glue in the $+y$ direction and binds to a tile, it continues growth in the $+x$ direction until finding an open location to grow (Figure~\ref{fig:generator-growth_dir0-row2n-3}).

\begin{figure}[htp]
    \centering
    \includegraphics[width=4cm]{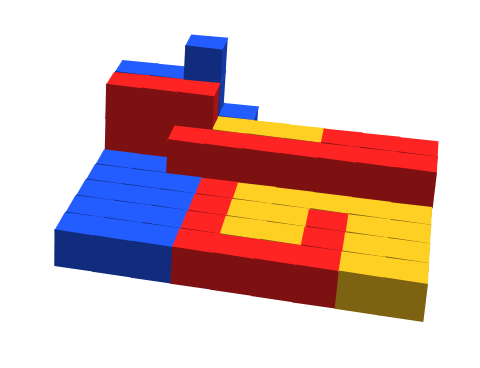}
    \caption{As the direction `0' tile (first tile of row 3) initiates growth, when a tile is cooperatively placed on a base tile it immediately activates a glue in the $+y$ direction. Since a path exists for tiles to grow in that direction, they grow until no cooperative location is available. }
    \label{fig:generator-growth_dir0-duple}
\end{figure}

\begin{figure}[htp]
    \centering
    \includegraphics[width=4cm]{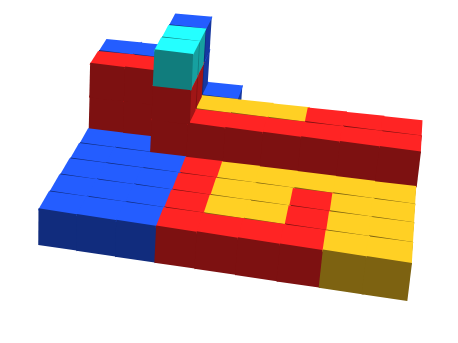}
    \caption{At this point, a detector gadget (teal) binds and indicates that growth has reached the point for the placement of the voxel encoded.}
    \label{fig:generator-growth_dir0-0}
\end{figure}

\begin{figure}[htp]
    \centering
    \includegraphics[width=4cm]{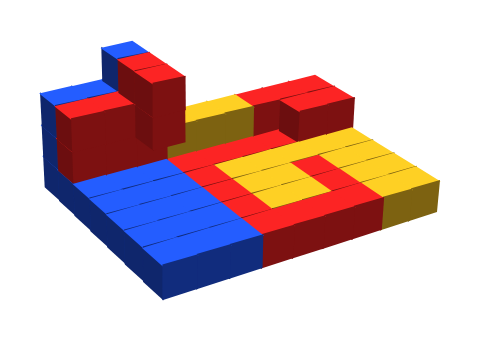}
    \caption{As signals are passed backwards through the tile growth, all horizontal tiles are deactivated. This allows for the direction `0' voxels to sense prior placed tile locations from the same row. Note that tiles growing along the $+y$ axis are retained initially.}
    \label{fig:generator-growth_dir0-row2n-2}
\end{figure}

\begin{figure}
    \centering
    \includegraphics[width=6cm]{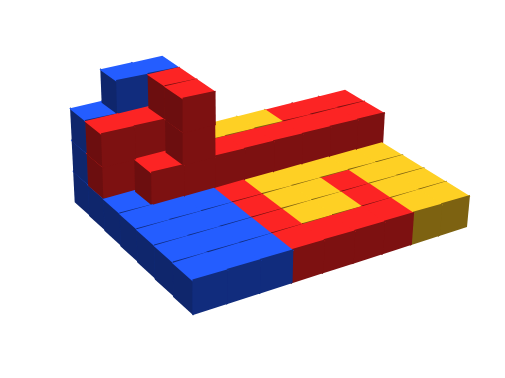}
    \caption{As the tiles which encode the second tile of row 3 grow to their placement location, upon first cooperative binding with the base they attempt to grow in the $+y$ direction. The signal `bounces', and the growth continues along the base. Since the second location has not been placed, the $+y$ direction of growth is free to take place.}
    \label{fig:generator-growth_dir0-row2n-3}
\end{figure}

\subsubsection{Slice Completion}\label{sec:decoder-slice-completion}
Once the directional tiles reach the end of the encoding of the final row within the structure, a \emph{slice completion gadget} binds to the end of the encoding and the directional tile. 
At this point, a message is returned through the current row of directional tiles which enabled growth of the slice (Figure~\ref{fig:generator-slice++0}).
Once the message is received by the first directional tile, it carries out two operations - the first being unique to the first slice.
In order for the growth of the next slice, we must be able to guarantee the shape tiles in the slice are connected to either the shape which has grown, or are connected to the newly growing slice.
To guarantee connection of all tiles of the first slice persist even after filler tile removal, we must create strength 2 connections between the encoding structure and all tiles of the first slice.
This is accomplished by extending the growth of the backing tiles, which allows for all tiles to be connected via strength 2 to the encoding structure.
The message is sent through the base tile which initiated growth of the first slice, into the adjacent backing tiles.
After backing tiles receiving the message, strength 2 glues are activated on all the $+y$ direction faces of the currently placed backing tiles.
Only the topmost layer of backing tiles will allow for cooperative placement of the new backing tiles on top of the newly created slice.
The newly placed backing tiles opens up cooperative binding locations for the backing tiles to then bind with the top row of the slice (Figure~\ref{fig:generator-slice++1}).
This allows for the tiles in the topmost row of the slice to activate glues for binding to their neighbor in the $-y$ direction.
Once bound to the neighbor in the $-y$ direction, the tiles are then able to activate glues which allow for neighbor detection gadgets to bind, allowing for the growth of a new slice.

\begin{figure}[htp]
    \centering
    \includegraphics[width=6cm]{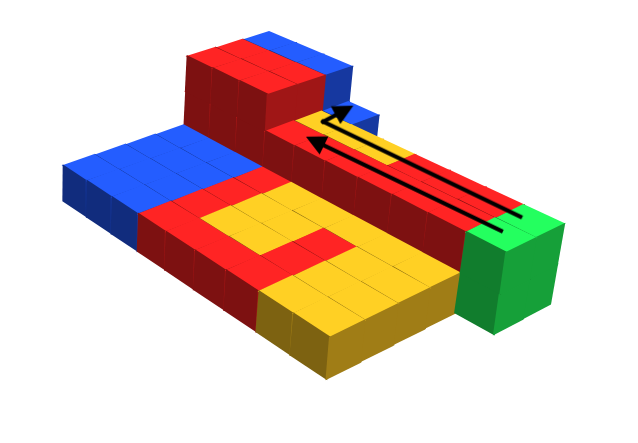}
    \caption{The slice completion gadget (green) binds to the outermost directional tile and decoding tiles, signaling for dissolution of decoding tiles and extension of backing tiles}
    \label{fig:generator-slice++0}
\end{figure}

\begin{figure}[htp]
    \centering
    \includegraphics[width=6cm]{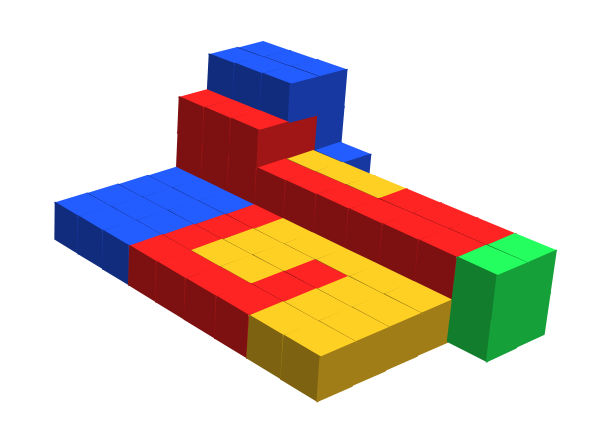}
    \caption{Backing tiles activate strength 2 glues, allowing for cooperative growth along the top of the first slice}
    \label{fig:generator-slice++1}
\end{figure}

In addition to the growth of the backing tiles, a signal is sent to place a new directional tile.
This directional tile takes the information of the first row of directional tiles and cooperatively binds with both 0 and 1 tiles on the encoding structure; its purpose is to simply pass forward the directional information and allow for the tile placement process to continue in the next slice.
In addition to the directional tile exposing a directional glue, we also expose a terminating glue ($g_t$) which is used in the detection of the completion of the final slice.
Once the growth of the new directional tile occurs alongside the creation of the top row of backing tiles, growth of the new slice can begin with starting conditions shown in Figure~\ref{fig:generator-slice++2}.

\begin{figure}[htp]
    \centering
    \includegraphics[width=6cm]{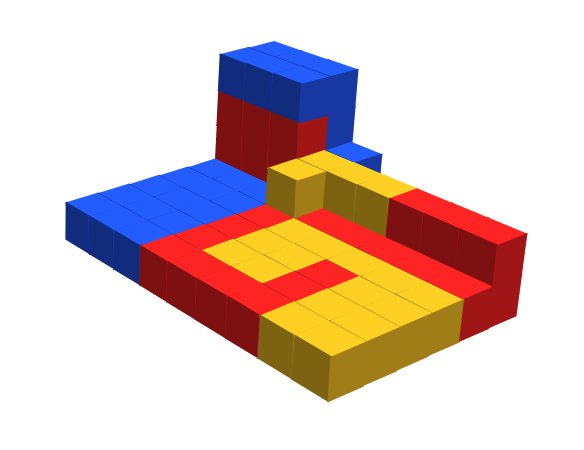}
    \caption{First directional tile of the second slice is ready to begin growth.}
    \label{fig:generator-slice++2}
\end{figure}

\subsubsection{Detaching From Base}
Slice growth proceeds via the previously described process until reaching the final slice.
Once the final slice is placed, a slice completion gadget binds allowing for the placement of a directional tile, as per any other row.
However, the exposed terminating glue allows for the attachment of the \emph{decoder completion detector} with the outermost edge of the encoding structure (Figure~\ref{fig:generator-final_slice-0}).
Upon binding of the decoder completion detector, a glue is activated to allow for the growth of \emph{decoder completion tiles} which cooperatively bind to the outermost slice layer.
Binding of the decoder completion tiles occurs such that only attachments between shape tiles activate glues for cooperative growth, and filler tiles must form a strength 2 duple with the decoder completion tiles.
Once bound as a duple, the filler tiles send glue deactivation signals to their remaining active glues.

\begin{figure}[htp]
    \centering
    \includegraphics[width=7cm]{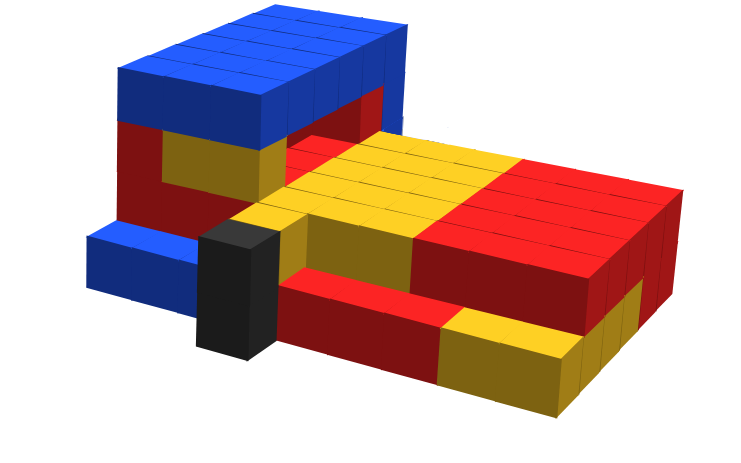}
    \caption{At the completion of the final row, the decoder completion detector (black) is able to bind with the outermost directional tile and cause growth of decoder completion tiles which remove remaining fill tiles.}
    \label{fig:generator-final_slice-0}
\end{figure}

Once a decoder completion tile binds with the outermost backing tile above the top row of a slice, it sends a dissolve message to all the base and backing tiles in the same $yz$ plane (Figure~\ref{fig:generator-final_slice-1}) to turn them into size 1 junk.
The base tiles, upon receiving this dissolve message, also initiate a message to dissolve the remaining tiles placed as part of the assembly sequence into size 1 junk, including the initial binding tile $t_0$.
The initial binding tile then signals to the encoding structure to dissolve into size 1 junk, and the only terminal assembly remaining is the shape assembly produced by the decoding process.

\begin{figure}[htp]
    \centering
    \includegraphics[width=7cm]{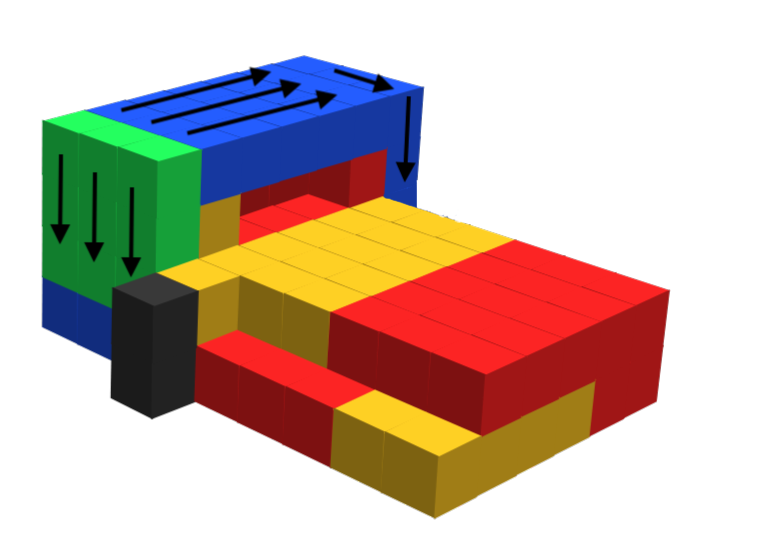}
    \caption{After the decoder completion tiles (green) bind to the final slice, this sends deactivation signals to the fill tiles and bind to the backing tiles, a dissolve message is sent to the remaining tiles involved in the decoding process.}
    \label{fig:generator-final_slice-1}
\end{figure}

\subsubsection{Proof of Universal Shape Decoding Correctness}\label{sec:universal-decoding}

Here we briefly summarize the decoding process and show that during this process, the shapes which were encoded in the set of input encoding assemblies $\Phi$ are correctly assembled. We first consider the decoding process of a single encoding assembly $\phi\in\Phi$ and note that a similar process happens for all encoding assemblies simultaneously without interfering with one another.

Our decoding process begins by building a base of tiles connected to $\phi$. This base holds the shape as it's being constructed and is used to help ensure the connectivity of the shape as it's being constructed. The decoding process is performed in iterations, where during each iteration a row of $\phi$ is scanned tile-by-tile and a corresponding 2D slice of the shape is constructed. Each slice is constructed starting from the bottom (smallest $y$ coordinate) to the top (largest $y$ coordinate), with tiles attaching in a zig-zag manner, as illustrated in Figure \ref{fig:deconstructor-encoding_order-0}. Each slice of the assembled shape corresponds to a unique $z$ coordinate so for convenience we call the slice whose $z$ coordinate is $i$, $\sigma_i$. As each slice is assembled, tiles are placed in each location of the slice, even those locations that will not be part of the final shape, though these will be removed during the assembly of the next slice.

The first slice $\sigma_1$ can be assembled naively, but during the assembly of each following slice, tiles which will not be part of the final shape on the previous slice must be removed. This is done as follows. Suppose that slice $\sigma_i$ ($i > 1$) is currently being assembled. Before a tile $t_i$ is placed in a location $(x, y, i)$, a gadget is used to determine the type of the tile $t_{i-1}$ at location $(x, y, i-1)$ (i.e. the tile with the same $x$ and $y$ coordinates on the previous slice). If this $t_{i-1}$ is part of the final shape, then $t_i$ is placed and signals are used to activate strength 2 glues between $t_i$ and $t_{i-1}$; otherwise, if $t_{i-1}$ is not part of the final shape, it is removed before $t_i$ is placed. Regardless of the type of tile $t_{i-1}$, when $t_i$ is placed, glues are activated which connect $t_i$ to all adjacent tiles on the same slice. Once the final slice is assembled, a final zig-zag pass is made in the next $z$ coordinate which removes all tiles from the last slice which are not part of the final shape.

It is also important to note that the base, on which the shape is being assembled, also forms a ceiling above the slices being assembled. This ceiling helps ensure that tiles on the top row of each slice are able to remain attached to the assembly during construction. It should be clear that during this decoding process (1) each tile that belongs to the final shape is placed in its correct location, and (2) that those tiles of a slice which are not part of the final shape will be removed from the assembly during the assembly of the next slice. However, because tiles are removed during the process, we must show that none of these removals can cause parts of the assembly to unintentionally detach. We state this as Lemma \ref{lem:decoder-correctness}.


\begin{lemma}\label{lem:decoder-correctness}
Let $\phi$ be an encoding assembly which encodes the shape $s$. During the decoding process above, as slice $\sigma_i$ ($i>1$) is being assembled, no tile in slices $\sigma_1,\ldots,\sigma_{i-1}$ which are part of the final shape assembly can detach.

\end{lemma}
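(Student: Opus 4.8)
The plan is to maintain a connectivity invariant through every step of any assembly sequence of $\mathcal{D}_\Phi$ and to reduce the claim to the only action that can possibly disconnect the assembly: the removal of a filler tile. First I would note that while $\phi$ is decoded the only tile removals are (i) detachments of ``junk'' decoding and helper tiles, and (ii) the removal of a filler $t_{i-1}$ at a location $(x,y,i{-}1)$, carried out immediately before placing $t_i$ at $(x,y,i)$ while slice $\sigma_i$ is under construction. Type-(i) removals are harmless by the careful-dissolving property of Section~\ref{sec:dissolving}: every tile and gadget in $D$ respects the prior/posterior-glue discipline, so their dissolutions only shed inert pieces of size at most the junk bound of Lemma~\ref{lem:decoder} and never cut the growing assembly. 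Hence it suffices to show that a type-(ii) removal never separates a shape tile of $\sigma_1,\dots,\sigma_{i-1}$ from the \emph{spine} --- the support structure consisting of $\phi$, the base tiles, and the backing tiles.

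Second I would collect the list of strength-$2$ bonds that the construction provides and that persist until the final teardown (read off from Sections~\ref{sec:shape-attachment-details}--\ref{sec:decoder-slice-completion}): (a) the spine is internally strength-$2$ connected, strength-$2$ attached to $\phi$, and only grows while slices are built; (b) the backing behind $\sigma_1$ is strength-$2$ connected to every tile of $\sigma_1$, and the backing grown over each completed slice is strength-$2$ bonded to that slice's top ($y{=}y_{\max}$) row; (c) when a slice $\sigma_j$ is declared complete, every $x$-column of $\sigma_j$ is strength-$2$ connected from $y{=}0$ up to its top-row tile; (d) two tiles adjacent within a slice become strength-$2$ bonded once both are placed, as do two shape tiles vertically adjacent in consecutive slices; (e) a shape tile is never bonded to a filler directly below it, since that filler is removed first. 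I would also record the ordering fact on which everything hinges: the zig-zag enumeration of locations is identical for every slice, and the fillers of $\sigma_{i-1}$ are removed during the assembly of $\sigma_i$ in exactly that order --- so at the instant the filler at position $p$ is removed, every tile of $\sigma_{i-1}$ at a position $\succeq p$ is still present, while every still-present tile of $\sigma_{i-1}$ at a position $\prec p$ is a shape tile.

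Third I would prove by induction over the assembly sequence the invariant $(\star)$: the current partial assembly is $2$-stable, and every shape tile of every completed slice is joined to the spine by a strength-$\ge 2$ path using only shape tiles and tiles of the spine. Every action except a type-(ii) removal only adds tiles or bonds, or performs a safe junk detachment, so it preserves $(\star)$ outright. For the removal of a filler $f$ at position $p$ in $\sigma_{i-1}$, the shape-plus-spine subgraph witnessing $(\star)$ does not contain $f$, so the removal cannot disconnect any completed-slice shape tile; what remains is to re-establish that $f$-avoiding witnessing paths still exist afterwards. For a shape tile $t$ of $\sigma_{i-1}$ at a position after $p$, the $x$-column above $t$ is intact by the ordering fact, so I follow it to the top row and then into the backing via (b) and (c). For $t$ in $\sigma_{i-1}$ at a position before $p$ (hence a shape tile), or $t$ in an earlier completed slice, I walk through vertically and laterally adjacent shape tiles using (d) and (e) --- that is, through $s$ itself, which is connected --- possibly dipping into the already-placed, strength-$2$-bonded tiles of $\sigma_i$, until reaching either the backing behind $\sigma_1$ (which $s$ must meet, since $s$ touches the $z{=}z_{\min}$ face of its bounding box) or an intact top-of-column in a not-yet-processed region.

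I expect the last case to be the real obstacle: a shape tile whose own $x$-column has already lost one or more fillers, so the naive ``climb to the backing'' route is broken and connectivity must be recovered through $s$ alone. The subtlety is that $s\cap\{z\le i{-}1\}$ can be disconnected even though $s$ is connected, so a component of it --- a thin stub hanging down into the interior, say --- need not touch the spine directly. The resolution should be that the matched zig-zag orders make the frontier of already-removed fillers within $\sigma_{i-1}$ monotone, so the not-yet-processed part of $\sigma_{i-1}$ is a solid, backing-anchored, strength-$2$ cap, and every such stub either reaches that cap through still-present tiles or reaches the already-placed tiles of $\sigma_i$ lying directly under the location currently being processed --- the tile that, one placement later, re-anchors the stub with strength $2$. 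Carrying this routing argument out rigorously, and in particular verifying that a route avoiding the single removed tile $f$ always exists, is the technical heart of the proof, and it is exactly where the careful interleaving of placement and removal designed in Sections~\ref{sec:shape-attachment-details}--\ref{sec:decoder-slice-completion} is used.
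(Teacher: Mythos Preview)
Your approach and the paper's differ substantially in decomposition. The paper gives a purely \emph{local} argument: induct on the slice index, and for a single removed filler $t$ at $(x_t,y_t,k{-}1)$ check that each of $t$'s at most six neighbours still has some other strength-$2$ bond. For the neighbour in $\sigma_{k-2}$ and for $t_{-y}$ the paper appeals to connectivity of $s$ to find a shape-tile neighbour; for $t_{+x}$ and $t_{+y}$ it uses that nothing with larger $y$ has yet been removed so those tiles are still tied into the intact ``cap'' (and ultimately the ceiling). That is the whole proof: no global invariant is stated, no path to a spine is traced. Your proposal instead maintains a \emph{global} connectivity invariant to the spine and routes explicit paths; this is more careful about why removing a single bridge-like filler cannot sever a large subassembly, a point the paper's neighbour-by-neighbour check leaves implicit.

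There is, however, a mismatch inside your plan that you should repair before executing it. Your invariant~$(\star)$ demands paths using \emph{only} shape tiles and spine tiles, yet your re-establishment argument for shape tiles at positions after $p$ walks up the $x$-column through still-present \emph{filler} tiles to reach the backing. As stated, $(\star)$ is therefore not preserved (indeed it already fails for a stub in $\sigma_{i-1}$ the moment that slice is completed). The fix is to weaken $(\star)$ to allow any currently-present tile on the witnessing path; then the ``after $p$'' case is immediate via the intact cap, and the genuinely delicate part is exactly the stub case you isolate in your last paragraph. The paper sidesteps that case entirely by never asserting global connectivity, so if you carry your routing argument through you will have proved something a bit stronger than what the paper actually establishes.
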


\begin{proof}
    To prove this, we first note that all tiles in the slice $\sigma_1$ which will be part of the final shape assembly are bound to each neighboring tile in the slice, meaning that there is no risk of detachment until tiles are removed in later slices. We use induction on the $z$ coordinate of the slices to show that this holds. Therefore, assume the hypothesis holds for slices $\sigma_1,\ldots,\sigma_{k-1}$ and consider what happens as the slice $\sigma_k$ assembles. Before the assembly of $\sigma_k$, the only slice containing tiles that may need removal are in slice $\sigma_{k-1}$ since during the assembly of a slice, all tiles which are not part of the final shape assembly are removed from the previous slice.
    
    As slice $\sigma_k$ is being assembled, if all of the tiles in $\sigma_{k-1}$ are part of the final shape assembly, then nothing will be detached and the proof is complete. Assume then that there is some tile in slice $\sigma_{k-1}$ which is not part of the final shape assembly and thus needs to be removed. Assembly of $\sigma_k$ will continue until we reach such a tile, say $t$ at coordinates $(x_t,y_t,z_t=k-1)$. Gadgets will detect that $t$ needs to be removed before a tile, say $t'$, is placed in coordinates $(x_t,y_t,z_t+1=k)$. When $t$ is detected, $\sigma_k$ will be assembled up to the location of $t'$ meaning that there will be a tile in every location of $\sigma_k$ below $y$ coordinate $y_t$ as well as all locations at $y$ coordinate $y_t$ to either the left or right of $t'$ depending on the parity of the $y$ coordinate in the zig-zag growth procedure for $\sigma_k$.
    
    To ensure that the detachment of $t$ does not cause any other tiles to detach, we must look at all neighbors of $t$ in the assembly. 1 of these neighbors will be $t'$ itself and this tile will be attached to all of its neighbors in $\sigma_k$ so we don't have to consider that one. If $t$ has a neighboring tile in slice $\sigma_{k-2}$, then notice that this tile must (1) be a tile belonging to the final shape assembly since it was not removed during the assembly of slice $\sigma_{k-1}$, and (2) have at least 1 other neighboring tile in $\sigma_{k-2}$ or $\sigma_{k-3}$ to which it is attached since otherwise the shape being encoded would have disconnected parts which we don't allow. Therefore, the removal of $t$ would not cause this tile to detach.
    
    We now consider the 4 potential neighbors of $t$ in the slice $\sigma_{k-1}$. For the neighbor below $t$, say $t_{-y}$, we again note that, because shape $s$ cannot have any disconnected components, $t_{-y}$ must have at least one neighbor other than $t$ which is part of the final shape assembly. Because the current slice $\sigma_k$ has grown up to the $y$ coordinate of $t$, any such neighbor of $t_{-y}$ must already exist in the assembly is attached to $t_{-y}$ with strength 2. Therefore, the removal of $t$ will not cause $t_{-y}$ to detach. 
    
    Now consider the neighbors of $t$ with the same $y$ and $z$ coordinates, call these $t_{-x}$ and $t_{+x}$. Notice that because slices are grown in a zig-zag manner, the growth of the current slice $\sigma_k$ will be such that one of these already has a neighboring tile in $\sigma_k$ and one does not. Without loss of generality, suppose that at the current row of slice $\sigma_k$ attachments are happening from the $-x$ direction to the $+x$ direction so that $t_{-x}$ already has a neighbor in $\sigma_k$ and $t_{+x}$ does not. Because any neighbor of $t_{-x}$ that exists must have been placed by now, the detachment of $t$ will not cause $t_{-x}$ to detach for the same reason as $t_{-y}$. Now, For $t_{+x}$ it may be the case its only neighbor that is part of the final shape assembly is in slice $\sigma_k$ and has not yet attached. Still notice that because $\sigma_k$ has not yet finished growth, no tiles have yet been removed from $\sigma_{k-1}$ with a $y$ coordinate greater than $t_y$. This means that $t_{+x}$ still has neighboring tiles to which it is attached. This is even true if $t_y$ is at the top of the slice since the base contains a ceiling above the assembly to which the tiles are attached. Therefore, even if $t$ is removed, $t_{+x}$ will remain attached to the assembly. The same argument applies to $t_{+y}$, the neighbor above $t$.
    
    By the assembly procedure up to this point, it is therefore safe to remove tile $t$, place $t'$ and continue with the assembly of slice $\sigma_k$. Since this holds for any tile which needs to be removed from slice $\sigma_{k-1}$, the assembly of $\sigma_k$ will complete without any tiles that are part of the final shape assembly detaching. 
\end{proof}

From here, it's clear that the assembly of the slices of the shape can complete without erroneous detachment. Since all tiles that are part of the final shape assembly have been added during the slice construction and since all tiles which are not part of the final shape assembly have been removed from their respective slices, it's clear that the decoding process successfully assembles our final shape assembly.

Given the set of input encoding structures $\Phi = \{\phi_1,\ldots,\phi_n \}$, the STAM$^R$ system $\mathcal{D}_\Phi = \{D,\Sigma_\Phi,\tau=2\}$ produces a set of terminal supertiles $S = \{s_1,\ldots,s_n\}$ in parallel with a maximum junk size of 3.
$\mathcal{D}_\Phi$ finitely completes, as for the production of the set of shapes $s \in S$ from input encoding structures $\Phi$ a finite number of tiles are required for each encoding structure to produces a terminal assembly.
We can guarantee this as each encoding produce a single terminal shape, as the encoding of the shape dissolves into size 1 junk after the terminal shape has decoded.
By our construction, there are never exposed glues on the surfaces of any pair of assemblies that each contain an input encoding that would allow them to bind to each other.
Since junk assemblies produced by any assembly sequence are also unable to negatively interact with other assemblies, a system whose input assemblies have multiple shapes will behave simply as the union of individual systems which each have one input assembly shape, creating terminal assemblies of all of (and only) the correct shapes. 
This proves Lemma~\ref{lem:decoder}.

Now that we have shown the existence of universal encoding and universal decoding tilesets, we have the basis to demonstrate a universal shape replicator.
We generate a new STAM$^R$ tileset $R = E \cup D$ and STAM$^R$ system $\mathcal{R}_S = \{R, \Sigma_S, \tau=2\}$, where $\Sigma_S$ consists of an infinite number of copies of each tile type from $R$ and an infinite number of copies of each uniformly covered assembly from the set $S = \{s_1,\ldots,s_n\}$, whose shapes are any arbitrary set of shapes.

Recall that during the encoding process, the encoding corner gadget is bound to the encoding structure while it is being built. Once the entire encoding process finishes and the corner gadget receives a 'dissolve' signal, it first activates a glue to signal to the first tile placed in the encoding structure that it should turn $\texttt{on}$ the \emph{initiator glue} which is the glue initially bound to by the tiles of $D$. Thus, exactly when an encoding of some $s_i$, $\phi_i$, is completed by the tiles of $E$, decoding that $\phi_i$ will begin by the tiles of $D$, resulting in a terminal assembly with the same shape as $s_i$. We make a slight modification to the tile of the encoding structure that exposes the initiator glue, and the initial decoding tile which attach to it, the \emph{initiator tile}. We make two copies of the initiator tile, which we will call $t_{1}$ and $t_{2}$.  The first, $t_{1}$, will bind to the initiator glue and cause the decoding process to proceed exactly as before. However, when the original initiator tile would have detected completion of the decoding process and sent a `dissolve' signal to the first tile of the encoding structure, $t_1$ instead sends a signal that tells that tile to activate a glue that will allow $t_{2}$ to attach, and then $t_1$ will detach. This will effectively cause the encoding to produce a decoded structure and then have all of the `helper' tiles dissolve, leaving the encoding structure able to bind to $t_{2}$ which then initiates the regular decoding process, and when it receives the signal telling it that has completed, $t_2$ does pass the `dissolve' signal to the first tile of the encoding structure. In this way, each encoding structure causes two copies of the decoded assembly to be produced, and then dissolves. 

By our construction, the only glues required to be shared between the two tilesets are the glues encoding 1 and 0 on the encoding structure, and the previously mentioned glues on the encoded assembly which initiate the decoding process. 
The glues for 0/1 are shared by multiple tiles in both $E$ and $D$.
All tiles in $D$ which have the the 0/1 glue (or its complement) are required to be placed by cooperation with a non 0/1 glue.
Additionally, each tile in $D$ has at most one face which contains strength 1 0/1 glue. 
Since no other glues are shared between $E$ and $D$ it is not possible for strength 2 binding to occur between (super)tiles in $E$ and $D$ aside from the binding of $\phi$ with the initiator tiles of $D$. 
Since junk assemblies produced by any assembly sequence are also unable to negatively interact with other assemblies, a system whose input assemblies have multiple shapes will behave simply as the union of individual systems which each have one input assembly shape, creating terminal assemblies of all of (and only) the correct shapes.

The maximal junk size of $R$ is 4, driven by the junk size of $E$.
We can say that $\mathcal{R}_S$ finitely completes with respect to the set of assemblies created from the shape tiles of $D$ in the shape of each assembly in $S$, as the tileset $R$ operates such that any input shape $s_i$ is encoded into an intermediate structure $\phi_i$, $\phi_i$ is then decoded into two copies of $s^\prime_i$, 
an assembly which contains tiles in the exact same locations as $s$ (up to rotation and translation). 
As deconstruction leads to the production of a single structure $\phi_i$, and $\phi_i$ is only able to be decoded to $s^\prime_i$ two times, we can place a finite bound on the number of each tile type required to produce each terminal assembly $s^\prime$. (This largely follows from the fact that encoding systems using $E$ finitely complete with respect to the set of encoding assemblies, and that decoding systems using $D$ finitely complete with respect to the set of assemblies whose shapes are encoded.) 
Therefore, $R$ also finitely completes, with respect to the set of assemblies with the same shape as the input assemblies, and Theorem~\ref{thm:replicator} is proven.

Note that the condition that a single encoding structure $\phi_i$ leads to the production of exactly two target assemblies  $s^\prime_i$ is imposed to allow for the universal shape replicator to technically be able to replicate shapes from an arbitrarily large set of input assembly shapes without the potential to `starve' the encodings of one shape so that they never produce decoded copies (and thus the replicator would not finitely complete with respect to the full set of terminal assembly shapes). If only one input assembly shape was provided as input, it would instead be possible to just remove the dissolve signals from the encoding structure and allow each to initiate the production of an unbounded number of decoded copies. It would also be trivial to add tiles that make copies of the encoded structures that can each initiate the decoding process, leading to exponential replication.

\section{Universal Shape Encoding, Decoding, and Replication in the STAM}\label{sec:STAM}

As previously mentioned, our use of the STAM$^R$ instead of the standard STAM for the previous results was intended to allow for the input assemblies to be more generic. That is, a single uniform glue can cover their entire surfaces rather than having glues that are direction specific, which is implicitly the case with glues in the STAM (as well as the aTAM and 2HAM, as commonly defined) since tiles are not allowed to rotate in those models and therefore glues with complementary labels but in non-opposite directions can't bind. Giving tiles the ability to rotate, meaning that glues are not specific to directions, made aspects of the shape encoding problem more difficult to solve, especially the ``leader election'' process to select a corner of the bounding box to be the location of the origin. Nonetheless, the constructions can be easily modified to work in the STAM. To do this we can simply define rotated versions of each of our tiles, one for each of the 24 possible rotations. The behavior of these tiles will be identical to the behavior of the tiles in the STAM$^R$ which can easily be seen by forming the trivial bijection between individual tiles in the STAM tileset and rotated instances of those tiles in the STAM$^R$ tileset. This induces a bijection between assemblies formed by the tiles in both, and this bijection clearly preserves the dynamics of the system as any binding of assemblies possible in one corresponds to a binding of the corresponding assemblies in the other. Thus we have an isomorphism between our systems defined on these tilesets with the same input shape assemblies. Additionally, the leader election process is essentially unnecessary in the STAM version with rotated tiles since we could just choose say the top, northeastern most tile of the bounding box assembly as leader once the filler verification has finished. In principle, despite the STAM tileset requiring many rotated copies of the tiles necessary for the bounding box construction, we wouldn't need rotated copies of any other tiles if the same corner was always elected leader. 

Also, it can be argued that the STAM$^R$ is in a sense more physically realizable than the STAM if only for the fact that the STAM requires glues to implicitly encode their orientations. When implementing tiles physically using DNA, where glues are often made of single stranded DNA exposed on the sides of some more rigid DNA structure, several copies of each glue (often one for each of the 6 directions) are needed. Because there are only so many fixed length sequences of nucleotides, requiring that several sequences correspond to the same glue is expensive. This is not only because those sequences can no longer be used for different glues, but also because several similar sequences become unusable as glue sequences must be sufficiently orthogonal to mitigate erroneous binding. Consequently, our choice of a non-standard model of tile assembly does not weaken our results, but rather strengthens them both theoretically and, to some extent, practically.
\section{Beyond Shape Replication}\label{sec:extensions}

The constructions used to prove Theorem \ref{thm:replicator} were intentionally broken into separate, modular constructions proving Lemmas \ref{lem:encoder} and \ref{lem:decoder} and thus providing a universal shape encoder and a universal shape decoder. This is not only useful for proving their correctness, but also for allowing for computational transformations to be performed on the encodings of input shapes in order to instead produce output shapes based on those transformations. Like even the much simpler aTAM, the STAM (and STAM$^R$) are Turing universal, meaning any arbitrary computer program can be executed by systems in these models. Thus, given any program that can perform a computational transformation of the points of a shape and output points of another shape, tiles that execute that program (for instance, by simulating an arbitrary Turing machine in standard ways, e.g. \cite{jSADS,jCCSA}) can receive as input the binary encodings of arbitrary shapes (after their creation by the universal encoder), transform them in any algorithmic manner, and then assemblies of the shapes output by those transformations can be produced (using the universal shape decoder).

\begin{figure}[ht]
    \centering
    \begin{subfigure}{0.35\textwidth}
        \centering
        \includegraphics[width=1.0\linewidth]{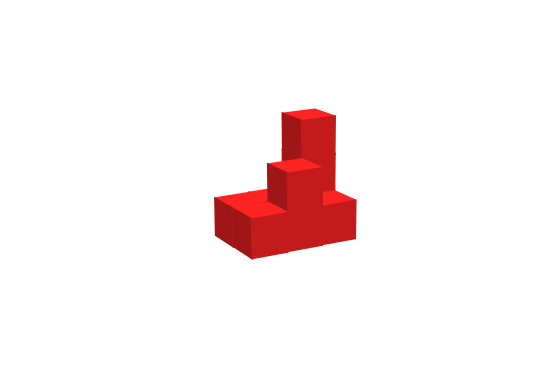}
    \caption{\label{fig:extensions-example}}
    \end{subfigure}
    \begin{subfigure}{0.25\textwidth}
        \centering
        \includegraphics[width=1.0\linewidth]{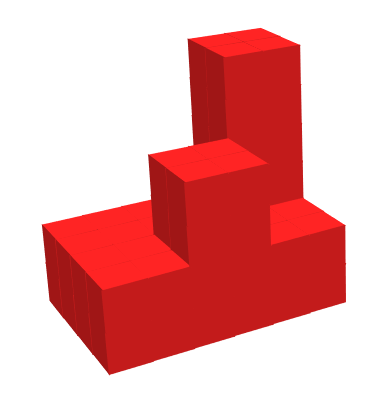}
    \caption{\label{fig:extension-example-scaled}}
    \end{subfigure}
    \begin{subfigure}{0.35\textwidth}
        \centering
        \includegraphics[width=0.95\linewidth]{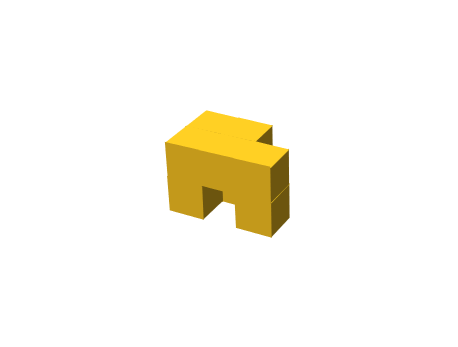}
    \caption{\label{fig:extension-example-complement}}
    \end{subfigure}

    \caption{(a) An example shape, (b) The same shape at scale factor $2$, (c) A shape which is complementary to the top surface of the shape in (a).\label{fig:extension-examples}}
\end{figure}

Due to space constraints, we don't go into great detail about the opportunities that such constructions provide. Instead, we mention just a few of the possibilities (and depict some in Figure \ref{fig:extension-examples}) while noting that the possibilities are technically infinite:

\begin{enumerate}
    \item Scaled shapes: a system could be designed to produce assemblies that have the shapes of input assemblies scaled by either a built-in constant factor (including negative, to shrink the shapes), or instead with another type of input assembly that specifies the scaling factor, allowing for a ``universal scaler''.
    
    \item Inverse shapes: a system could be designed to produce assemblies that have the inverse, i.e. complementary, shapes of the input assemblies (assuming the complements are connected, and restricting to some bounding box size since the complement of any finite shape is infinite).
    
    \item Pattern matching: a system could be designed to inspect input assembly shapes for specific patterns and to either produce assemblies that signal the presence of a target pattern, or instead assemblies that are complementary to, and can bind to, the surfaces of assemblies containing those patterns. 
\end{enumerate}

Although such constructions are highly theoretical and quite complex, and thus unlikely in their current forms to be practically implementable, they provide a mathematical foundation for the construction of complex, dynamic systems that mimic biological systems. One possible example is an ``artificial immune system'' capable of inspecting surfaces, detecting those which match (or fail to match) specific patterns, and creating assemblies capable of binding to those deemed to be foreign, harmful, or otherwise targeted. As mentioned, there are infinite possibilities.

\section{Impossibility of Shape Replication Without Deconstruction}\label{sec:imposs}

In this section, we prove that in order for a system in the STAM$^R$ to encode and/or replicate shapes which have enclosed or bent cavities (see Definitions \ref{def:enclosed-cavity} and \ref{def:bent-cavity}), the input assemblies must have the potential for tiles to be removed. To do so, we first utilize a theorem from \cite{SelfReplicationArxiv}.

\newcounter{tempcounter2}
\setcounter{tempcounter2}{\value{theorem}}

\setcounter{theorem}{3}

\begin{theorem}[from \cite{SelfReplicationArxiv}]\label{thm:need-to-deconstruct}
    Let $U$ be an STAM* tileset such that for an arbitrary 3D shape $S$, the STAM* system $\mathcal{T} = (U,\sigma_S,\tau)$ with $\dom \sigma_S = S$, $\mathcal{T}$ is a shape self-replicator for $S$ and $\sigma_S$ is non-porous. Then, for any $r \in \mathbb{N}$, there exists a shape $S$ such that $\mathcal{T}$ must remove at least $r$ tiles from the seed assembly $\sigma_S$.
\end{theorem}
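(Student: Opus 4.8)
\textbf{Proof proposal (for Theorem~\ref{thm:need-to-deconstruct}).}
The plan is to argue by contradiction, via an indistinguishability argument built around shapes that contain a single deeply buried enclosed cavity (in the sense of Definition~\ref{def:enclosed-cavity}) wrapped in an arbitrarily thick shell. Suppose the conclusion fails: then there is a single bound $R \in \mathbb{N}$ such that for \emph{every} $3$D shape $S$ the system $\mathcal{T} = (U,\sigma_S,\tau)$ has some valid assembly sequence which completes a replica of $S$ (i.e.\ at which a terminal copy of $S$ appears) while detaching fewer than $R$ of the tiles of $\sigma_S$. Fix an integer $m$ with $m \ge R$ and $(2m+1)^3 > c+1$, where $c$ is the junk-size bound guaranteed by $\mathcal{T}$ being a shape self-replicator, and compare the two shapes $S_{\mathrm{solid}} = \{-m,\ldots,m\}^3 \subseteq \mathbb{Z}^3$ and $S_{\mathrm{hollow}} = S_{\mathrm{solid}} \setminus \{\vec{0}\}$. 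The latter has an enclosed cavity wrapped in a shell of thickness $m$, and both seeds $\sigma_{S_{\mathrm{solid}}}$ and $\sigma_{S_{\mathrm{hollow}}}$ are non-porous, so the hypothesis of the theorem applies to each.

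The geometric core is that a thickness-$m$ shell cannot be breached by removing fewer than $m$ tiles. Set $g(x,y,z) = \max(|x|,|y|,|z|)$. A lattice path from $\vec{0}$ to a point arbitrarily far from the cube changes $g$ by at most $1$ per step, so by a discrete intermediate-value argument it meets a voxel at each level $g=1, g=2, \ldots, g=m$; these $m$ levels are pairwise disjoint, and every voxel in them lies in $S_{\mathrm{hollow}}$. Every detached seed tile of $S_{\mathrm{hollow}}$ lies at one of these levels, so if $X$ is any set of detached seed tiles with $|X| < m$, then $X$ omits some level entirely; hence every path from $\vec{0}$ to infinity still meets an undetached shape voxel, and the central voxel of $S_{\mathrm{hollow}}$ stays enclosed — no tile can ever occupy it, and no gadget can ever come into contact with the six faces bordering it. Therefore, \emph{for as long as fewer than $m$ seed tiles have been detached}, the applicable actions of $\mathcal{T}$ and their effects are the same whether one starts from $\sigma_{S_{\mathrm{hollow}}}$ or from $\sigma_{S_{\mathrm{solid}}}$, since the two seeds differ only in the permanently inaccessible central voxel and the never-touched glues on its boundary.

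Now invoke the contradiction hypothesis for $S_{\mathrm{hollow}}$: there is a valid assembly sequence of $\mathcal{T} = (U,\sigma_{S_{\mathrm{hollow}}},\tau)$ that completes a replica of $S_{\mathrm{hollow}}$ while detaching fewer than $R \le m$ seed tiles. Throughout that sequence the ``fewer than $m$ detached'' condition holds, so by the indistinguishability just established the same sequence of actions is valid for $\mathcal{T} = (U,\sigma_{S_{\mathrm{solid}}},\tau)$ and produces, as a terminal assembly, a hollow cube of size $(2m+1)^3 - 1 > c$. But the system on $\sigma_{S_{\mathrm{solid}}}$ is a shape self-replicator for $S_{\mathrm{solid}}$, so every terminal assembly reachable from $\sigma_{S_{\mathrm{solid}}}$ is either a solid copy of $S_{\mathrm{solid}}$ or a junk assembly of size at most $c$ — a hollow cube of size exceeding $c$ is neither. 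This contradiction rules out the bound $R$, so for every $r$ the shape $S_{\mathrm{hollow}}$ built with $m$ large enough (in particular $m \ge r$) forces $\mathcal{T}$ to detach at least $r$ tiles of its seed, which is the statement of Theorem~\ref{thm:need-to-deconstruct}.

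The step I expect to be the main obstacle is making the indistinguishability claim airtight inside the STAM* model. One must verify that the two seeds' disagreement on the six faces bounding $\vec{0}$ — where $\sigma_{S_{\mathrm{solid}}}$ has bonded interior glues whose signals are already completed in the initial configuration, while $\sigma_{S_{\mathrm{hollow}}}$ has unbonded exposed glues — never enables or disables an action, and that no sequence detaching fewer than $m$ seed tiles (including ones that detach and re-accrete tiles, or that detach the central tile of $S_{\mathrm{solid}}$, which is thereafter trapped and inert inside its own enclosed cavity) can ever create a route exposing those faces. This is precisely where the \emph{non-porous} hypothesis is used (to exclude pre-existing passages) and where the careful bookkeeping of pending and completed signals from the source paper does the work; the $g$-level counting above is the only genuinely new geometric ingredient.
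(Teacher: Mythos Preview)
The paper does not contain its own proof of this theorem: it is quoted verbatim from \cite{SelfReplicationArxiv}, and the paper explicitly states that ``the proof can be found in \cite{SelfReplicationArxiv}, and we omit duplicating it here due to space constraints.'' So there is no in-paper proof to compare your proposal against.

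That said, your approach is the natural one and almost certainly matches the cited argument in spirit: take a shape with an enclosed cavity buried inside an arbitrarily thick shell, argue that breaching the shell requires removing at least as many seed tiles as the shell thickness, and conclude via indistinguishability that a bounded-deconstruction replicator must produce the same output on the hollow shape as on the solid one, contradicting correctness on the latter. Your $g$-level counting argument for the shell thickness is clean and correct.

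Two points worth tightening. First, you invoke a junk-size bound $c$ for the STAM* ``shape self-replicator,'' but that notion is defined in the cited paper, not here; you should verify that the definition there actually carries such a bound (the present paper's Definition of universal shape replicator does, but that is for the STAM$^R$). Second, your terminality transfer---that the hollow-cube replica, terminal in the hollow system, remains terminal in the solid system---deserves one more sentence: the only additional producible assemblies in the solid system are those containing the solid seed (or derivatives of it), and by your own enclosure argument these are externally indistinguishable from their hollow-seed counterparts, so no new binding partner for the replica can arise. You correctly flag the glue-state discrepancy on the six faces around $\vec 0$ as the place where the model-specific details of STAM* (initial completed signals, non-porosity) must be invoked carefully; that is indeed where the work lives.
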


\setcounter{theorem}{\value{tempcounter2}}
    
Theorem 4 from \cite{SelfReplicationArxiv} applies to the STAM*. However, the STAM$^R$ is simply a restricted version of the STAM* which only allows tiles to be a single shape, that of a unit cube, and which does not allow flexible glues. Since all assemblies in the STAM$^R$ are non-porous (i.e. free tiles cannot pass through the tiles of an assembly or the gaps between bound tiles) and the STAM$^R$ has more restrictive dynamics than the STAM*, the proof of this impossibility result, which shows the impossibility of self-replicating assemblies with enclosed cavities without removing tiles, suffices to prove the following corollary (stated using the terminology of this paper) as well.\footnote{The proof can be found in \cite{SelfReplicationArxiv}, and we omit duplicating it here due to space constraints.} Note that this proof holds even if the input assemblies are not uniformly covered.

\begin{corollary}\label{cor:enclosed}
There exist neither a universal shape encoder nor a universal shape replicator in the STAM$^R$ for the class of shapes with enclosed cavities whose assemblies are not deconstructable.
\end{corollary}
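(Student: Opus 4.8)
The plan is to \textbf{reduce to Theorem~\ref{thm:need-to-deconstruct}} via the observation that the STAM$^R$ is a sub-model of the STAM*. First I would note that the STAM$^R$ is obtained from the STAM* by forbidding flexible glues and fixing every tile shape to a unit cube, and that in this restriction \emph{every} producible assembly is non-porous: a unit-cube tile can never pass through a bound tile, nor through the gap between two bound tiles, so no free tile can reach a location geometrically enclosed by an assembly without first detaching one of that assembly's tiles. Hence any STAM$^R$ system $(R,\sigma_S,\tau)$ is in particular an STAM* system with a non-porous seed, so Theorem~\ref{thm:need-to-deconstruct} applies to it unchanged. I would also record, from the cited construction, that the shapes it uses to force tile removals are built around enclosed cavities (one natural realization being a shape containing many separate enclosed voids, each accessible only after a wall tile is removed), so those shapes lie in the class named in the statement. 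As the footnote indicates, nothing in that argument uses uniform covering, so the impossibility holds whether or not the input assemblies are uniformly covered.

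For the \textbf{replicator half}, I would argue by contradiction. Suppose $R$ is a universal shape replicator that works whenever the input assemblies are non-deconstructable --- read, in the sense of Definition~\ref{def:deconstructable-assembly} together with this section's framing (``the input assemblies must have the potential for tiles to be removed''), as assemblies possessing no glue-deactivation machinery at all, so that no seed tile can ever be removed. Restricting $R$ to a single input shape $S$ with an enclosed cavity turns $(R,\sigma_S,\tau)$ into a shape self-replicator with a non-porous, removal-free seed, so Theorem~\ref{thm:need-to-deconstruct} applied with $r=1$ produces a shape $S$ whose replication requires removing at least one tile from $\sigma_S$ --- a contradiction. Hence no such $R$ exists.

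For the \textbf{encoder half}, I would reduce to the replicator half. If $E$ were a universal shape encoder with respect to some $f_e$ that works on non-deconstructable inputs over shapes with enclosed cavities, let $f_d$ be the decoding function with $f_d(b_s)=s$. Since the STAM$^R$ is Turing universal and $f_d$ is computable (the finite tileset $E$ realizes $f_e$ by an effective process and $f_e(s)$ is finite, so $f_d$ can be computed by enumeration, after which a $D$-readable bounding-box encoding of $s$ can be produced as in Lemma~\ref{lem:encoder}), I would chain a Turing-machine module computing $f_d$ in front of the universal shape decoder $D$ of Lemma~\ref{lem:decoder} --- exactly the encoder/transform/decoder architecture sketched in Section~\ref{sec:extensions}. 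The resulting tileset $R$ is then a universal shape replicator for the same family of inputs that never removes a seed tile (only the $E$-phase touches the seed, and it removes nothing), contradicting the replicator half. Alternatively, and closer to the argument of \cite{SelfReplicationArxiv}, I would note that the same indistinguishability principle behind Theorem~\ref{thm:need-to-deconstruct} already shows the content of an enclosed cavity cannot be learned, and hence cannot be written into the exposed encoding, without a tile removal.

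The step I expect to be the \textbf{main obstacle} is pinning down the hypothesis ``not deconstructable'' precisely enough. Definition~\ref{def:deconstructable-assembly} is a conjunction, so its literal negation is met by assemblies with merely one stuck tile, which by itself would not prevent a replicator from removing other tiles; the argument above needs the stronger, and clearly intended, reading that the input assemblies carry no tile-removal capability whatsoever, so that the at-least-one-tile removal forced by Theorem~\ref{thm:need-to-deconstruct} (taking $r=1$) is outright impossible. A secondary point to handle with care is the computability of $f_d$ in the composition step, addressed by the remark above that an encoding realized by a finite tileset is effectively computable; and, as usual, one should confirm that the composed system still finitely completes and keeps junk bounded, which it does since each of its three stages does.
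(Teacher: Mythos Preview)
Your core reduction is exactly what the paper does: observe that the STAM$^R$ is a restriction of the STAM* to unit-cube tiles without flexible glues, note that every STAM$^R$ assembly is therefore non-porous, and conclude that Theorem~\ref{thm:need-to-deconstruct} (and more precisely its proof) applies unchanged. The paper's entire argument for the corollary is this one paragraph of model-containment, together with the remark that uniform covering is not used; it does not separately treat the encoder and replicator cases, nor does it spell out the $r=1$ contradiction explicitly.

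Where you diverge is in the encoder half. Your primary route---compose a hypothetical non-deconstructing encoder with a Turing-machine realization of $f_d$ and the decoder of Lemma~\ref{lem:decoder} to manufacture a non-deconstructing replicator, then invoke the replicator impossibility---is considerably more elaborate than anything the paper does, and it drags in side conditions (computability of $f_d$, finite completion and junk bounds for the composite system) that are plausible but not entirely trivial to pin down. The paper instead takes the shortcut you mention only as an alternative: the indistinguishability argument underlying Theorem~\ref{thm:need-to-deconstruct} already shows that the contents of an enclosed cavity cannot be sensed without tile removal, so neither an encoder nor a replicator can succeed. That direct route avoids the composition machinery entirely; your composition argument, while defensible, buys generality you don't need here and introduces the computability wrinkle you yourself flag. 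Your identification of the ``not deconstructable'' reading as the genuine soft spot is apt; the paper silently adopts the intended strong reading as well.
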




\begin{figure}[ht]
    \centering
    \begin{subfigure}{0.20\textwidth}
        \centering
        \includegraphics[width=1.0\linewidth]{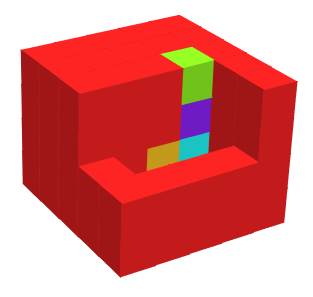}
    \caption{\label{fig:impossible-well-left}}
    \end{subfigure}
    \begin{subfigure}{0.20\textwidth}
        \centering
        \includegraphics[width=1.0\linewidth]{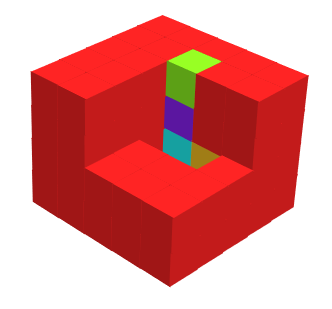}
    \caption{\label{fig:impossible-well-right}}
    \end{subfigure}
    \begin{subfigure}{0.26\textwidth}
        \centering
        \includegraphics[width=1.0\linewidth]{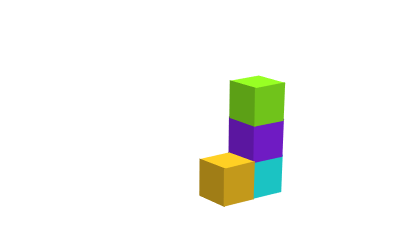}
    \caption{\label{fig:well-assembly-left}}
    \end{subfigure}
    \begin{subfigure}{0.24\textwidth}
        \centering
        \includegraphics[width=0.95\linewidth]{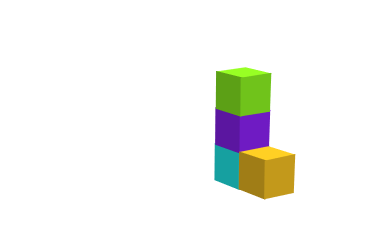}
    \caption{\label{fig:well-assembly-right}}
    \end{subfigure}

    \caption{(a) and (b) Partial depictions of a pair of shapes which cannot be correctly encoded/replicated without a deconstructable input assembly. Each consists of a $5 \times 5 \times 4$ cube with a 4-cube-long bent cavity. For each, the green, purple, blue, and yellow locations indicate the empty locations that make the bent cavity. The rest of the $5 \times 5 \times 4$ cube locations would be filled in with red cubes (some have been omitted to make the cavity locations visible). (c) and (d) The shapes of assemblies that could grow into the bent cavities.\label{fig:bent-cavities}}
    
\end{figure}

    

    

Our next theorem deals with shapes having bent cavities.

\begin{theorem}\label{thm:imposs}
There exist neither a universal shape encoder nor a universal shape replicator in the STAM$^R$ for the class of shapes with bent cavities whose input assemblies are uniformly covered but are not deconstructable.
\end{theorem}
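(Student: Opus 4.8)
The plan is to prove this by an indistinguishability argument that, unlike the enclosed-cavity case (Corollary~\ref{cor:enclosed}), crucially exploits the fact that in the STAM$^R$ tiles and supertiles are taken up to rotation. I would fix the mirror-image pair of shapes $S_1,S_2$ depicted in Figure~\ref{fig:bent-cavities}: each is a $5\times5\times4$ block with a $4$-voxel bent cavity reached through a single-voxel-wide channel, and the two differ only in the position of the single ``elbow'' voxel, which lies on opposite sides of the channel. With the channel placed off-center (as in the figure), one checks that no rotation of $\mathbb{Z}^3$ carries $S_1$ to $S_2$, so $S_1\neq S_2$ as shapes and hence $f_e(S_1)\cap f_e(S_2)=\emptyset$ for any shape encoding function; yet $S_1$ and $S_2$ have identical exteriors and identical straight-line-accessible interiors, so the only feature distinguishing them is the straight-line-inaccessible elbow. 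Let $\alpha_1,\alpha_2$ be uniformly covered, non-deconstructable seed assemblies of these shapes, and suppose for contradiction that some tileset $R$ were a universal shape replicator handling the class of bent-cavity shapes with such seeds (the encoder case is identical with ``encodes $S$'' in place of ``has shape $S$'').

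First I would establish that the only way the system can place a tile in, or gather information from, the elbow voxel of a seed is to grow a path of tiles inward through the single-voxel channel one tile at a time: no rigid non-rod assembly can be translated/rotated in through a $1\times1$ straight channel and around a corner without overlap, and a rod cannot turn the corner either. I would then prove the central claim: the set of producible assemblies of $\mathcal{R}_{\{S_1\}}$ equals that of $\mathcal{R}_{\{S_2\}}$, via an explicit bijection $\Psi$ that is the identity on every assembly not touching the elbow and ``views through'' the isometry $\mu$ relating the two cavities on assemblies that do. Checking that $\Psi$ commutes with the four assembly actions reduces to three observations: (a) because the seed is uniformly covered, a tile cooperatively attaching anywhere on the seed---at the elbow, along the channel, or on the external collar of the channel mouth---sees only copies of one generic glue and one tile type at neighborhood-equivalent voxels, so every geometrically available rotation is permitted, and the collection of available local configurations at $\alpha_1$'s elbow is carried bijectively onto that at $\alpha_2$'s by $\mu$ composed with a tile rotation; (b) because the seed is not deconstructable, no seed tile is ever removed, so the rigid scaffold of the block that could pin down an absolute frame is never exposed or dismantled beyond what mere accretion on a uniformly covered surface reveals; (c) because supertiles are equivalence classes under rotation, ``viewing through $\mu$'' (equivalently, through $\mu$ post-composed with a rotation, which on an achiral piece such as the $L$-shaped probe filling the cavity is an actual rotation) does not change which supertile a configuration represents. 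In particular, any scaffold that reaches the elbow and also anchors to the block exterior is useless as an orientation reference: the channel mouth is a featureless, fourfold-symmetric collar of generic tiles, so any tile bridging the mouth---hence the entire anchored scaffold---is determined only up to the rotation that interchanges the two elbow positions. Consequently $\Psi$ restricts to a bijection between the terminal assemblies of the two systems.

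Finally I would derive the contradiction. Since $R$ is a correct replicator for $S_1$, in every fair assembly sequence of $\mathcal{R}_{\{S_1\}}$ a terminal assembly of shape $S_1$ is produced; pushing such a sequence through $\Psi$ yields a fair assembly sequence of $\mathcal{R}_{\{S_2\}}$ producing a terminal assembly of shape $S_1$, which is not of shape $S_2$ and whose size greatly exceeds the junk bound $c$---violating the definition of a universal shape replicator for the input set $\{S_2\}$. For the encoder statement the same transport produces, from $\mathcal{E}_{\{S_2\}}$, a terminal assembly encoding $S_1$, which is neither a junk assembly nor an assembly encoding $S_2$ (as $f_e(S_1)\cap f_e(S_2)=\emptyset$), again a contradiction.

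The main obstacle is the central claim, specifically items (a) and (b): one must rule out \emph{every} way a uniformly covered, non-deconstructable seed could be leveraged to fabricate a fixed orientation reference that survives being threaded into the elbow, the most dangerous being the strategy of anchoring a rigid chain of tiles both in the elbow and on the rigid block exterior. Handling this requires a careful case analysis of how accreted tiles can connect the elbow region to the outside through the symmetric channel mouth, and an argument that every such connection leaves the elbow direction ambiguous precisely because tile rotations are unconstrained. This is also exactly the step that fails in the standard STAM, where a glue implicitly records its facing direction, which is why the theorem is specific to the STAM$^R$.
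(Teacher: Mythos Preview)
Your approach is essentially the same as the paper's: both fix the pair $S_1,S_2$ from Figure~\ref{fig:bent-cavities}, argue by contradiction, and transfer an assembly sequence from the system with seed $\alpha_1$ to the system with seed $\alpha_2$ by exploiting that STAM$^R$ tiles carry no directional information and hence cannot distinguish the relative rotation of the elbow voxel. The paper's proof is considerably terser---it simply asserts that every step of an assembly sequence $\vec{\alpha}$ producing an encoding of $s_1$ in $\mathcal{E}_1$ is valid verbatim in $\mathcal{E}_2$, because the only difference between the two seeds is a $90^\circ$ relative rotation of at most the bottom two cavity positions, which a rotatable tile cannot detect---whereas you package the same idea as an explicit bijection $\Psi$ and spell out the anchoring concern that the paper leaves implicit.

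Two small points to align with the paper. First, you assert that the wrongly-produced terminal assembly ``greatly exceeds the junk bound $c$'', but $c$ is an arbitrary constant of the hypothetical tileset; the paper handles this by noting that if $c\geq 96$ one simply enlarges the block dimensions (keeping the same bent cavity) until the shape size exceeds $c$, and the argument goes through unchanged. Second, your invocation of ``fair assembly sequences'' is unnecessary and not the framing the paper uses: by the definition of a universal encoder/replicator there must exist \emph{some} assembly sequence in $\mathcal{E}_{\{S_1\}}$ producing a terminal assembly encoding (resp.\ of shape) $S_1$, and transporting that single sequence through your $\Psi$ already yields the forbidden terminal assembly in $\mathcal{E}_{\{S_2\}}$.
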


We prove Theorem \ref{thm:imposs} by contradiction. Therefore, let $f_e$ be a shape encoding function and assume $E$ is a universal shape encoder with respect to $f_e$, and let $c$ be the constant value which bounds the size of the junk assemblies. (Nearly identical arguments will hold for a universal shape replicator.) Define the shapes $s_1$ and $s_2$ as shown in Figures \ref{fig:impossible-well-left} and \ref{fig:impossible-well-right}, i.e. each is a $5 \times 5 \times 4$ cube with a bent cavity that goes into the cube to a depth of 3, then turns one of two directions for each. Note importantly that the well is offset from the center of the cube such that $s_1$ and $s_2$ are not rotationally equivalent. Since $E$ is assumed to be a universal shape encoder, there must exist two STAM$^R$ systems $\mathcal{E}_1 = (E,\sigma_1,\tau)$ and $\mathcal{E}_2 = (E,\sigma_2,\tau)$, where $\sigma_1$ consists of infinite copies of tiles from $E$ and infinite copies of uniformly covered assemblies in the shape of $s_1$, and $\sigma_2$ consists of infinite copies of tiles from $E$ and infinite copies of uniformly covered assemblies in the shape of $s_2$.

$\mathcal{E}_1$ must produce terminal assemblies which encode shape $s_1$ but must not produce terminal assemblies which encode shape $s_2$, since no assembly of shape $s_2$ is included in its input assemblies. Similarly, $\mathcal{E}_2$ must produce terminal assemblies which encode shape $s_2$ but not $s_1$. Let $\vec{\alpha}$ be an assembly sequence in $\mathcal{E}_1$ which results in a terminal assembly encoding shape $s_1$. We now show that every action of $\vec{\alpha}$ must be valid, in the same ordering, in $\mathcal{E}_2$ but using an input assembly of shape $s_2$. This is because the exact same glues will be exposed by the input assemblies of shapes $s_1$ and $s_2$ in the same relative locations with the slight difference of relative rotations of the innermost locations of the bent cavities of each from the adjacent cavity locations. Assuming that, in $\vec{\alpha}$, tiles attach into all locations of the bent cavity (if only the location shown in yellow remains empty the same argument will hold, and if both the locations shown in yellow and blue remain empty then there is absolutely no difference in any aspect of the assembly sequence in $\mathcal{E}_2$ and the argument immediately holds), this results only in the relative orientations of at most the bottom two tiles being turned 90 degrees relative to the tile immediately above them (i.e. the tile in the purple location in Figure \ref{fig:bent-cavities}). 
Since tiles in the STAM$^R$ are rotatable, with no distinction for directions, there is no mechanism for tiles in the purple locations of assemblies shown in Figures \ref{fig:well-assembly-left} and \ref{fig:well-assembly-right} from distinguishing from each other (via tile types, glues, or signals). Tiles of the same types which bind into those locations in $\vec{\alpha}$ must also be able to do so in the assembly sequence of $\mathcal{E}_2$ using the exact same glues and firing the exact same signals (if any). 
Thus $\vec{\alpha}$ must be a valid assembly sequence in $\mathcal{E}_2$ as well. This means that an assembly encoding the shape of $s_1$ is also created as a terminal assembly in $\mathcal{E}_2$. Note that if the constant $c$ is greater than the size of the shapes $s_1$ and $s_2$ (i.e. $5*5*4 - 4 = 96$), then we can simply increase their dimensions until they are larger than $c$ (but still contain the same bent cavities) and the argument still holds and the incorrectly produced assemblies cannot be considered ``junk'' assemblies. This is a contradiction that $E$ is a universal shape encoder with respect to $f_e$ and constant $c$. Since no assumptions were made about $E$ other than it being a universal shape encoder, no such $E$ can exist. By slightly altering the argument for a universal shape replicator $R$ (instead of universal encoder $E$) and generating terminal assemblies of shapes $s_1$ and $s_2$ (rather than assemblies encoding those shapes), the same argument holds to show that no universal shape replicator exists, and thus Theorem \ref{thm:imposs} is proven.

\bibliography{tam,experimental_refs}

\end{document}